\newtheorem{theorem}{Theorem}
\newtheorem{corollary}{Corollary}
\newtheorem{lemma}{Lemma}
\newtheorem{proposition}{Proposition}
\newenvironment{proof}[1][Proof]{\noindent\textbf{#1.} }{\ \rule{0.5em}{0.5em}}
\def\@biblabel#1{\hspace*{-\labelsep}}
\@citea\NAT@hyper@{%
     \NAT@nmfmt{\NAT@nm}%
     \hyper@natlinkbreak{\NAT@aysep\NAT@spacechar}{\@citeb\@extra@b@citeb}%
     \NAT@date}}
\@citea\NAT@nmfmt{\NAT@nm}%
\NAT@spacechar\NAT@hyper@{\NAT@date}}{}{}
\@citea\NAT@hyper@{%
     \NAT@nmfmt{\NAT@nm}%
     \hyper@natlinkbreak{\NAT@spacechar\NAT@@open\if*#1*\else#1\NAT@spacechar\fi}%
       {\@citeb\@extra@b@citeb}%
     \NAT@date}}
\@citea\NAT@nmfmt{\NAT@nm}%
\fi\NAT@hyper@{\NAT@date}}
\begin{document}
\title{%
Information Design for Adaptive Organizations%
\thanks{%
The author is grateful to Makoto Hanazono and Keiichi Kawai for their valuable comments and suggestions. Financial support from the Japan Society for the Promotion of Science (JSPS), under Grant Numbers 16K17078, 20K20756, and 24K04780, is gratefully acknowledged.}
}
\author{%
Wataru Tamura%
\thanks{%
Graduate School of Economics, Nagoya University, Furo-cho, Chikusa-ku, Nagoya 464-8601, Japan. E-mail: wtr.tamura@gmail.com.
}%
}
\maketitle

\onehalfspacing
\begin{abstract}\noindent
This paper examines the optimal design of information sharing in organizations. Organizational performance depends on agents adapting to uncertain external environments while coordinating their actions, where coordination incentives and synergies are modeled as graphs (networks). The equilibrium strategies and the principal's objective function are summarized using Laplacian matrices of these graphs. I formulate a Bayesian persuasion problem to determine the optimal public signal and show that it comprises a set of statistics on local states, necessarily including their average, which serves as the organizational goal. When the principal benefits equally from the coordination of any two agents, the choice of disclosed statistics is based on the Laplacian eigenvectors and eigenvalues of the incentive graph. The algebraic connectivity (the second smallest Laplacian eigenvalue) determines the condition for full revelation, while the Laplacian spectral radius (the largest Laplacian eigenvalue) establishes the condition for minimum transparency, where only the average state is disclosed.
\vspace{4mm}\\
Keywords: information design, public signals, coordination, networks, Laplacian spectrum, algebraic connectivity\vspace{4mm}\\
JEL classification codes:
D82, D83
\end{abstract}

\clearpage
\section{Introduction}

In many organizations, the principal possesses valuable information that can be strategically used to enhance organizational performance. For instance, headquarters gather and analyze marketing data and sales records to support product development and business operations. Similarly, managers in workplaces often leverage their market experience or knowledge of evaluation policies to guide employees effectively. Political parties also rely on sharing major policy lines with members to ensure consistent campaigning efforts.

When incentive mechanisms such as contracts or promotion systems successfully align agents' incentives with the organization's objectives, there is little need to restrict the information provided to agents. However, in cases where incentives are not fully aligned, unrestricted information sharing can increase losses from misaligned incentives. The principal may need to design information-sharing policies to mitigate these losses.

In this paper, I address the question of what information about the state of the world should be shared with agents in an organization. To explore this, I adopt a Bayesian persuasion framework in which the principal (e.g., an information-producing division) determines, ex ante, what information about the state to acquire and disclose to agents. The choice of information is represented by a public signal---a mapping from the state to a distribution over signal realizations. Agents, who are initially uninformed about the state, choose their actions based on the observed signal realizations.

The analysis focuses on a setup where organizational performance depends on two key factors: how well agents adapt to an uncertain external environment and how effectively they coordinate their actions. The external environment is represented by a set of local states, corresponding to the optimal actions if coordination were not a consideration. Coordination introduces a tradeoff: while it can enhance overall performance, it may also limit agents' responsiveness to external changes. A key parameter in the analysis is the importance of coordination relative to adaptation, which directly influences the optimal information-sharing policy.

This paper employs graph theory to model the coordination structure within organizations. Two central concepts are introduced: the \textit{synergy graph}, which represents the coordination that benefits the organization as a whole, and the \textit{incentive graph}, which reflects the coordination incentives naturally present among agents. The differences between these graphs illustrate the misalignment of incentives, providing a measure of the effectiveness and limitation of the organization's underlying incentive schemes. Using this framework, I explore how the coordination structure influences the design of optimal information-sharing policies.  

In the present model, the principal's objective function, given the agents' equilibrium strategies, is characterized using the Laplacian matrices of the synergy and incentive graphs. Leveraging the tractable properties of these matrices, I derive several insights into the design of the optimal public signal, which involves the disclosure of multiple statistics about local states. First, the optimal signal always informs agents of the average state, regardless of the coordination structure or the relative importance of coordination and adaptation. This finding aligns with the intuition that sharing the organization's primary goal is essential for fostering alignment among its members. Second, I identify two sufficient conditions under which full revelation is optimal. These conditions indicate that withholding information is detrimental when adaptation is more critical than coordination or when the incentive graph includes more links/edges than the synergy graph.

Further analysis focuses on the case where the principal benefits equally from coordination between any two agents, modeled by the synergy graph being a complete graph. In this setting, the optimal signal is determined through a two-step procedure. In the first step, statistics on the local states are computed using the Laplacian eigenvectors of the incentive graph. In the second step, the statistics for disclosure are selected based on whether the importance of coordination exceeds the cutoffs determined by the corresponding Laplacian eigenvalues. 

The condition for full revelation is determined by the second smallest Laplacian eigenvalue (the \textit{algebraic connectivity}) of the incentive graph, while the condition for minimum transparency is determined by the largest Laplacian eigenvalue (the \textit{Laplacian spectral radius}) of the incentive graph. As coordination becomes more critical for the organization, the principal reduces the number of disclosed statistics, thereby inducing stronger correlations in agents' equilibrium actions. In the extreme case, the gain from the optimal signal converges to the gain achieved by disclosing only the average state. 

Finally, I examine the implications of these findings using random graph models. Section \ref{sec:extension} focuses on the dimensionality of the optimal signal and investigates how network structures---such as connectivity and the presence of hubs---affect the principal's ability to balance adaptation and coordination. These numerical analyses provide further insights into how network topology shapes the design of information-sharing policies, particularly addressing the conditions under which minimum transparency is optimal.

\paragraph{Related literature}

In organizational economics, a central theme is the tradeoff between adaptation and coordination, often hindered by underlying incentive misalignments that prevent efficient information sharing. Using tractable formulations with quadratic specifications as in the present paper, various aspects have been examined, including decentralization (\citealp{alonso2008does}; \citealp{ran:2008res}), specialization (\citealp{ds:2006jpe}), leadership traits (\citealp{bolton2013leadership}), and managerial attention (\citealp{dessein2021managerial}).

Along these lines, \cite{herskovic2020acquiring} study information externalities in endogenously formed networks, showing that equilibrium networks exhibit hierarchical structures. \cite{calvo2009information} investigate the impact of exogenous communication networks within a beauty-contest model. \cite{galeotti2013strategic} and \cite{calvo2015communication} consider frictional communication and examine the equilibrium configuration of information flows as endogenous communication networks. Note that \cite{calvo2015communication} also introduce (weighted and directed) networks describing agents' coordination incentives to examine the interplay between the coordination payoff structure and the equilibrium information flows.%
\footnote{With a similar specification, \cite{hk:2016el} examine strategic communication of hard information in networks and provide a sufficient condition for a full-revelation equilibrium.}
These studies show that the equilibrium information flows in decentralized communication depend considerably on the specification of private information and communication protocols. In contrast, this paper focuses on centralized communication, where the principal provides information publicly to the agents, avoiding potential complications from decentralized information flows that arise when private communication is used.

The literature on Bayesian persuasion has provided valuable insights into information design for organizations. For instance, \cite{jeh:2015res} develop a general framework for Bayesian persuasion with multidimensional states, demonstrating that full disclosure is typically suboptimal. While their approach is highly general, this paper adopts a quadratic specification to derive explicit properties of the optimal signal. Similarly, \cite{iva:2010jet} examine informational control in the context of organizational design, focusing on how the principal can restrict the information available to the agent to align actions more closely with organizational objectives and improve bottom-up communication.

The inducement of correlation in agents' expectations has been explored in the literature on quadratic Bayesian persuasion. \cite{rs:2010jpe} analyze a setting where the principal seeks to maximize the covariance of agents' expectations, using randomized signals in a discrete state space. \cite{tamura2018bayesian} extend this framework to Gaussian state distributions and introduce the concept of an ``informational budget,'' illustrating that inducing correlations inherently reduces overall informativeness. This paper builds on these insights by investigating how agents' positions within networks, represented by the incentive and synergy graphs, shape the optimal signal to balance informativeness for adaptation and correlation in expectations to facilitate coordination in organizational contexts.

This paper also contributes to the growing intersection of public information design and graph theory. \cite{candogan2022persuasion} analyzes persuasion in networks with binary actions and local payoff complementarities, introducing optimal public signaling mechanisms characterized through convex optimization. The study examines how network properties, such as cores, shape equilibrium outcomes and demonstrates the effectiveness of asymptotically optimal mechanisms in large random networks. In contrast, this paper models coordination structures using Laplacian matrices, offering a novel perspective on how the alignment or misalignment of incentives influences information design. The use of spectral properties, such as algebraic connectivity and Laplacian spectral radius, provides new tools for understanding the interplay between network topology, public signals, and organizational performance.

While less directly related, \cite{egorov2020persuasion} study persuasion in networks where agents decide whether to subscribe to a sender's signal or rely on their neighbors for information. Unlike the present paper, which examines exogenous networks representing payoff externalities, their analysis focuses on the dynamics of information diffusion through network links without payoff externalities.%
\footnote{Recent works in this line include \cite{candogan2020information}, \cite{galperti2023games}, \cite{kerman2023pitfalls}, \cite{chen2024optimality}, and \cite{haghtalab2024leakage}.}
 
 Network-based intervention policies are also studied in the literature, building on the network-game model introduced by \cite{ballester2006s}, where equilibrium actions are linked to the Katz-Bonacich centrality of the adjacency matrix.%
 \footnote{Using this quadratic setup, \cite{candogan2012optimal} and \cite{bloch2013pricing} examine optimal monopoly pricing in networks with consumption externalities. They characterize optimal discriminatory pricing or induced demands in terms of Katz-Bonacich centrality, without directly relating their results to graph spectral properties.}
 Notably, \cite{galeotti2020targeting} apply spectral graph theory to study optimal targeting of interventions in networks. They show that the optimal policy depends on the principal components of the adjacency matrix: for strategic complements, it aligns with the eigenvector associated with the largest eigenvalue, while for strategic substitutes, it aligns with the eigenvector associated with the smallest eigenvalue, thereby internalizing positive or negative externalities. While their focus is on incentive manipulation using the adjacency matrix, this paper examines public signals based on the spectral decomposition of the Laplacian matrix, highlighting the role of intermediate eigenvalues and corresponding eigenvectors in shaping information-intervention policies.

Furthermore, \cite{shimono2025lec} extend the framework developed here, using the informativeness of optimal public signals to propose the Laplacian Eigenvector Centrality (LEC) as a novel centrality measure in network theory. Intuitively, as highlighted in Subsection \ref{sec:example} below, highly connected `central' agents within the network play a crucial role in facilitating coordination across the entire organization. Consequently, the optimal signal provides more precise information about the local states of these agents, thereby effectively balancing adaptation and coordination. From a network analysis perspective, the solution presented in this paper can be used to identify and quantify the centrality of nodes within the network (see Subsection \ref{sec:informativeness} for further discussion).

In summary, this paper advances the literature by integrating spectral graph theory into the analysis of information design in organizations. It provides explicit conditions for optimal public signals and highlights the role of network topology in shaping information-sharing policies. By focusing on robust public communication, the paper complements existing studies on decentralized information flows and expands the theoretical framework for adaptive organizational design.

\section{Model}
An organization consists of one principal and $n$ agents. Each agent chooses an action $a_{i} \in \mathbb{R}$, where the action profile is denoted by $\bm{a} = (a_{1}, \dots, a_{n})' \in \mathbb{R}^{n}$.%
\footnote{Throughout this paper, vectors are interpreted as column vectors.}
Organizational activities depend on two key factors: adaptation and coordination. Adaptation refers to how well actions are adjusted to uncertain external environments, represented by a vector of local states $\bm{x} = (x_{1}, \dots, x_{n})' \in \mathbb{R}^{n}$. Coordination refers to how well the actions are aligned. 

Let $\bm{1}_n \equiv (1,\dots, 1)'$ denote the vector of ones and $I_n$ denote the identity matrix of size $n$. Suppose that $\bm{x}$ is distributed according to $N(\mu \bm{1}_n, \sigma^{2}I_n)$, where $\mu \in \mathbb{R}$ and $\sigma^2 > 0$.%
\footnote{In Section \ref{sec:correlation}, an extension is examined where the local states are correlated.}
It is also assumed that agents are ex ante uninformed about the state and receive a public signal before choosing their actions.

The payoff function of agent $i$ is defined by
\begin{equation}
u_{i}(\bm{a},\bm{x}) = -(a_{i} - x_{i})^{2} - \beta \sum_{j=1}^{n} g_{ij} (a_{i} - a_{j})^{2},	
\end{equation}
and the objective function of the principal is
\begin{equation}
v(\bm{a}, \bm{x}) = -\sum_{i=1}^{n} (a_{i} - x_{i})^{2} - \beta \sum_{i=1}^{n} \sum_{j=1}^{n} \tilde{g}_{ij} (a_{i} - a_{j})^{2}.	
\end{equation}
The parameter $\beta \geq 0$ measures the importance of coordination relative to adaptation.%
\footnote{In Section \ref{sec:asymmetry}, the case in which the principal and agents have different degrees of coordination benefits is examined.}
The organizational structure regarding coordination is summarized by two symmetric adjacency matrices $G = [g_{ij}]$ and $\widetilde{G} = [\tilde{g}_{ij}]$, where $g_{ij}, \tilde{g}_{ij} \in \{ 0, 1 \}$ for all $i$ and $j$.%
\footnote{All propositions presented below still hold in an extension where the organizational structure is represented by two weighted graphs such that $g_{ij}, \tilde{g}_{ij} \in [ 0, 1 ]$ for all $i$ and $j$.}
The diagonal entries in $G$ and $\widetilde{G}$ are all zero ($g_{ii} = \tilde{g}_{ii} = 0$ for every $i$).
In this context, $G$ represents the \textit{incentive graph}, capturing the coordination incentives between agents, and $\widetilde{G}$ represents the \textit{synergy graph}, which specifies the desired coordination pattern from the principal's perspective.

Two comments on incentive distortion are appropriate here.  
First, $G \neq \widetilde{G}$ implies that private incentives for coordination are not aligned with the organizational objective.  
For example, when $(g_{ij}, \tilde{g}_{ij}) = (1, 0)$, agents $i$ and $j$ have incentives to coordinate, but their coordination does not benefit the organization. Their coordination incentives may even be detrimental in equilibrium because they could crowd out the incentives to adapt to the external environment. Conversely, when $(g_{ij}, \tilde{g}_{ij}) = (0, 1)$, the principal benefits from coordination between agents $i$ and $j$, but the agents have no incentive to coordinate. In such cases, the principal must induce correlation in actions by creating a correlation in expectations about $x_{i}$ and $x_{j}$.  
Second, agent $i$ does not fully internalize the coordination loss $(a_{i} - a_{j})^{2}$ incurred by agent $j$. Thus, the equilibrium degree of coordination is lower than the efficient degree of coordination, even when there is no distortion in the coordination structure (i.e., $G = \widetilde{G}$).

The principal determines the information available to agents by choosing a \textit{public signal} $\varphi: \mathbb{R}^{n} \to \Delta (M)$, where the set of signal realizations $M$ can be any measurable set. A signal $\varphi$ is said to \textit{inform the agents of} variable $y$ (denoted by $y \in \mathcal{I}(\varphi)$) if $\mathbb{E}[y|m] = y$ for all $m \in M$. As noted by \cite{kamenica2011bayesian}, the choice of signal can be interpreted as the institutional design for gathering, processing, and disclosing information. The purpose of this paper is to identify the signal that maximizes the expected payoff of the principal, given the agents' equilibrium strategies.

\section{Preliminaries}
This section presents preliminary results for analyzing the optimal public signal. Subsection \ref{sec:laplacian} introduces basic facts about spectral graph theory, which are used to derive the equilibrium strategies in Subsection \ref{sec:equilibrium}. Subsection \ref{sec:optimal} formulates the principal's optimization problem and presents the optimal public signal based on the existing literature.

\subsection{The Laplacian spectrum of graphs}\label{sec:laplacian}
Let $L = [L_{ij}]$ be the Laplacian matrix of $G$, defined by
\begin{equation}
L_{ij} =
	\begin{aligned}
\begin{cases}
\sum_{k=1}^{n} g_{ik} & \text{if } i = j, \\
-g_{ij} & \text{if } i \neq j.
\end{cases}		
	\end{aligned}
\end{equation}
That is, $L = D_{G} - G$, where $D_{G}$ is the degree matrix of $G$ (i.e., a diagonal matrix whose $i$th diagonal entry is the degree of vertex $i$).%
\footnote{See Appendix \ref{sec:laplacian_a} for the basic properties of the Laplacian spectrum used in this paper.}
Similarly, $\widetilde{L} = D_{\widetilde{G}} - \widetilde{G}$ denotes the Laplacian matrix of $\widetilde{G}$.
A graph is said to be complete, denoted by $K_n$, if all vertices are adjacent to one another.  
Let $L^{K}$ be the Laplacian of a complete graph. The eigenvalues of $L^{K}$ are given by
\begin{equation}
0 = \lambda_{1}^{K} < \lambda_{2}^{K} = \cdots = \lambda_{n}^{K} = n.
\end{equation}
The vector of ones $\bm{q}_{1}^{K} = \bm{1}_{n} \equiv (1, \dots, 1)'$ is an eigenvector of $L^{K}$ associated with the eigenvalue $\lambda_{1}^{K} = 0$.  
Any vector $\bm{q}_{j}^{K}$ orthogonal to $\bm{1}_{n}$ (i.e., $\bm{q}_{j}^{K} \cdot \bm{1}_n = 0$) is an eigenvector of $L^{K}$ with eigenvalue $\lambda_{j}^{K} = n$.  

For any symmetric adjacency matrix $G$, the eigenvalues of its Laplacian matrix are expressed in ascending order as
\begin{equation}
0 = \lambda_{1} \leq \lambda_{2} \leq \cdots \leq \lambda_{n} \leq n.	
\end{equation}
Let $\bm{q}_{j}$ be the eigenvector of $L$ associated with the eigenvalue $\lambda_{j}$.  
As with the complete graph, $\bm{q}_{1} = \bm{1}_{n}$ is an eigenvector associated with $\lambda_{1} = 0$.  

In spectral graph theory, it is established that the second smallest eigenvalue $\lambda_{2}$, called the \textit{algebraic connectivity}, equals 0 if and only if $G$ is not connected.  
Similarly, the largest eigenvalue $\lambda_{n}$, called the \textit{Laplacian spectral radius}, equals $n$ if and only if the complement of $G$ is not connected.

\subsection{Equilibrium and efficient strategies}\label{sec:equilibrium}
Each agent maximizes their expected payoff given the common posterior induced by the public signal.  
Let $\hat{\bm{x}} \equiv \mathbb{E}[\bm{x}|m]$ denote the public expectation of the state.  
The unique Nash equilibrium is given by a linear function of $\hat{\bm{x}}$:%
\footnote{For the derivation, see Appendix \ref{sec:derivation_equil}.}
\begin{equation}
	\bm{a}^{*} = (I_n + \beta L)^{-1} \hat{\bm{x}}.
\label{eq:ne}
\end{equation}
Let $B = (I_n + \beta L)^{-1}$ denote the matrix that describes how agents' expectations of the state determine their equilibrium actions, reflecting the coordination incentives dictated by $L$.

On the other hand, the unique efficient solution, which optimizes the total outcome for the organization as a whole, is given by:
\begin{equation}
\bm{a}^{**} = (I_n + 2 \beta \widetilde{L})^{-1} \hat{\bm{x}}.
\label{eq:ts}	
\end{equation}

Two types of distortions arise from the differences between the equilibrium strategy in \eqref{eq:ne} and the efficient strategy in \eqref{eq:ts}. First, the distortion in the coordination structure is represented by the difference between $L$ and $\widetilde{L}$, reflecting discrepancies between how coordination is implemented and how it is ideally structured. Second, the distortion in the degree of coordination is captured by the difference between $\beta$ and $2\beta$, indicating that agents do not fully internalize the benefits of coordination, which can lead to suboptimal collective outcomes.

\subsection{Optimal public signal characterization}\label{sec:optimal}
Given the unique Nash equilibrium, the principal's objective function can be expressed as:%
\footnote{For the derivation, see Appendix \ref{sec:derivation_objective}.}%
\begin{equation}\label{eq:objective}
	\mathbb{E}[ v(a^{*}, x)]
= -\mathbb{E}[x'x] + \mathbb{E}[\hat{\bm{x}}' V\hat{\bm{x}}],
\end{equation}
where $V$ is a symmetric matrix that reflects the strategic effects of coordination benefits and structures:
\begin{equation}\label{eq:payoff_V}
	V \equiv B'(I_n - 2 \beta (\widetilde{L} - L))B.
\end{equation}
This matrix captures the trade-offs in information design, particularly in evaluating whether inducing specific correlations between $\hat{\bm{x}}_i$ and $\hat{\bm{x}}_j$ benefits the organization, as discussed in the next section.

The optimal public signal $\varphi^{*}$ is determined by solving the following optimization problem:%
\begin{equation}
\max_{\varphi}~\mathbb{E}[\hat{\bm{x}}' V \hat{\bm{x}}].
\end{equation}
A key result from the quadratic Bayesian persuasion problem is stated in the following theorem:
\begin{theorem}[\citealp{tamura2018bayesian}]\label{theorem}
Suppose that each $x_{i}$ is independent and identically distributed according to a normal distribution.
Let $\omega_j$ denote an eigenvalue of $V$ and let $\bm{z}_j$ denote the corresponding eigenvector.
Assume that $V$ has $r$ nonnegative eigenvalues and $n-r$ negative eigenvalues.
Then, the optimal public signal consists of $r$ statistics $\bm{m}= \{ m_{j}^{*} \} $, where each statistic $m_j^{*}$ is a linear combination of the state, given by $m_{j}^{*} = \bm{z}_{j}' \bm{x}$ for $j$ such that $\omega_j \geq 0$.
\end{theorem}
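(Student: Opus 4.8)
The plan is to reduce the matrix optimization $\max_\varphi \mathbb{E}[\hat{\bm{x}}'V\hat{\bm{x}}]$ to a collection of decoupled scalar problems by passing to the eigenbasis of $V$. Write the spectral decomposition $V = \sum_{j=1}^n \omega_j \bm{z}_j \bm{z}_j'$ with $\{\bm{z}_j\}$ an orthonormal basis of $\mathbb{R}^n$, and set $y_j \equiv \bm{z}_j'\bm{x}$. By linearity of conditional expectation, $\bm{z}_j'\hat{\bm{x}} = \mathbb{E}[y_j\mid m] \equiv \hat{y}_j$, so that
\begin{equation}
\mathbb{E}[\hat{\bm{x}}'V\hat{\bm{x}}] = \sum_{j=1}^{n}\omega_j\,\mathbb{E}[\hat{y}_j^{2}].
\end{equation}
Since $\bm{x}\sim N(\mu\bm{1}_n,\sigma^2 I_n)$ and the $\bm{z}_j$ are orthonormal, the vector $(y_1,\dots,y_n)$ is Gaussian with \emph{independent} components, $\mathrm{Cov}(y_j,y_k)=\sigma^2\delta_{jk}$, and means $\mathbb{E}[y_j]=\mu\,\bm{z}_j'\bm{1}_n$. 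This independence is the property that makes the whole argument separable.

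Next I would establish component-wise feasibility bounds. For any public signal, the law of total variance gives $\mathrm{Var}(\hat{y}_j)=\mathrm{Var}(y_j)-\mathbb{E}[\mathrm{Var}(y_j\mid m)]\in[0,\sigma^2]$, and since $\mathbb{E}[\hat{y}_j]=\mathbb{E}[y_j]$ by the tower property,
\begin{equation}
(\mathbb{E}[y_j])^{2} \;\le\; \mathbb{E}[\hat{y}_j^{2}] \;=\; (\mathbb{E}[y_j])^{2}+\mathrm{Var}(\hat{y}_j) \;\le\; (\mathbb{E}[y_j])^{2}+\sigma^{2} \;=\; \mathbb{E}[y_j^{2}].
\end{equation}
Hence the objective is bounded above, uniformly over all signals, by $\sum_{j:\,\omega_j\ge 0}\omega_j\,\mathbb{E}[y_j^{2}]+\sum_{j:\,\omega_j<0}\omega_j\,(\mathbb{E}[y_j])^{2}$. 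I would then verify that this bound is attained by the signal disclosing exactly the $r$ statistics $\{m_j^{*}=\bm{z}_j'\bm{x}:\omega_j\ge 0\}$: under it $\hat{y}_j=y_j$ for every $j$ with $\omega_j\ge0$, reaching the upper endpoint $\mathbb{E}[y_j^{2}]$; and for $j$ with $\omega_j<0$, independence of the $y_k$ forces $\mathbb{E}[y_j\mid m]=\mathbb{E}[y_j]$, reaching the lower endpoint $(\mathbb{E}[y_j])^{2}$. Every term of $\sum_j\omega_j\mathbb{E}[\hat{y}_j^{2}]$ is thus simultaneously maximized, so the signal is optimal. (When some $\omega_j=0$ the corresponding statistic is payoff-irrelevant, so optimality pins the signal down only up to inclusion of such statistics and up to the choice of orthonormal eigenbasis within any repeated eigenvalue.)

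I expect the main obstacle to be the achievability step rather than the upper bound: one must argue that a \emph{single} public signal can simultaneously push every nonnegative-eigenvalue coordinate to full informativeness and every negative-eigenvalue coordinate to zero informativeness. Independence of the $y_j$—inherited from the i.i.d.\ Gaussian assumption on $\bm{x}$—is exactly what decouples the ``informational budget'' across coordinates, making the feasible set of $(\mathbb{E}[\hat{y}_1^{2}],\dots,\mathbb{E}[\hat{y}_n^{2}])$ the product of the intervals above and hence the separable optimum attainable; without this independence the argument would break down. A secondary point to handle carefully is that the law-of-total-variance bound must be invoked for arbitrary (not necessarily Gaussian-garbling) signals so that the upper bound genuinely applies to all of $\varphi$.
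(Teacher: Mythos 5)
Your argument is correct and complete: diagonalizing in the orthonormal eigenbasis of $V$, observing that the transformed coordinates $y_j=\bm{z}_j'\bm{x}$ are independent Gaussians, bounding each $\mathbb{E}[\hat{y}_j^2]$ term-by-term via the law of total variance, and then checking that disclosing exactly the nonnegative-eigenvalue statistics attains every termwise bound simultaneously (full informativeness where $\omega_j\ge 0$, and $\mathbb{E}[y_j\mid m]=\mathbb{E}[y_j]$ by independence where $\omega_j<0$) is precisely the mechanism behind this result. The paper states Theorem \ref{theorem} as imported from \citet{tamura2018bayesian} without reproducing a proof, so there is no in-text argument to diverge from; your derivation follows the canonical route and correctly isolates the two delicate points, namely that independence of the $y_j$ is what makes all bounds simultaneously achievable by a single signal, and that the upper bound must hold for arbitrary (not only linear-Gaussian) signals.
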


This theorem highlights how the eigenstructure of $V$ determines the optimal design of the public signal. The principal focuses on inducing correlations aligned with the positive eigenvalues of $V$ while avoiding those linked to negative eigenvalues. The subsequent sections of this paper examine how the solutions to the information design problem depend on the coordination benefit $\beta$ and the differences in the coordination structures $(G, \widetilde{G})$.

\section{General results}
This section outlines key findings on how information is optimally shared within organizations, regardless of the specific coordination structures or incentive alignments.%
\footnote{For all proofs, see Appendix \ref{sec:proof}.}

\subsection{Average state as a fundamental information component}

The first result highlights a universal principle in organizational information strategy:
\begin{proposition}\label{prop_avg}
For any coordination structure $(G, \widetilde{G})$ and level of coordination benefits $\beta \geq 0$, the optimal signal must inform the agents of the average state:
\begin{equation}
\bar{x} \equiv \frac{1}{n}\sum_{i=1}^{n} x_{i} \in \mathcal{I}(\varphi^{*}).
\end{equation}
\end{proposition}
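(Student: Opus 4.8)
The plan is to apply Theorem~\ref{theorem}: it suffices to show that the vector of ones $\bm{1}_n$ is an eigenvector of $V$ with a nonnegative (in fact strictly positive) eigenvalue, since then the span of the eigenvectors with nonnegative eigenvalues contains $\bm{1}_n$, and $\bar{x}=\tfrac{1}{n}\bm{1}_n'\bm{x}$ is recoverable from the disclosed statistics. The whole argument reduces to one spectral computation built on the defining property of a Laplacian matrix, namely that $\bm{1}_n$ lies in its kernel.

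First I would record the two Laplacian identities $L\bm{1}_n=\bm{0}$ and $\widetilde{L}\bm{1}_n=\bm{0}$, which hold because each row of a Laplacian sums to zero. From $L\bm{1}_n=\bm{0}$ it follows that $(I_n+\beta L)\bm{1}_n=\bm{1}_n$, hence $B\bm{1}_n=(I_n+\beta L)^{-1}\bm{1}_n=\bm{1}_n$; and since $L$ is symmetric, $B$ is symmetric, so $B'\bm{1}_n=\bm{1}_n$ as well. Also $(\widetilde{L}-L)\bm{1}_n=\bm{0}$. Combining these in the definition $V=B'(I_n-2\beta(\widetilde{L}-L))B$ gives $V\bm{1}_n=B'\big(I_n-2\beta(\widetilde{L}-L)\big)B\bm{1}_n=B'\big(\bm{1}_n-2\beta(\widetilde{L}-L)\bm{1}_n\big)=B'\bm{1}_n=\bm{1}_n$. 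Thus $\bm{1}_n$ is an eigenvector of $V$ with eigenvalue $1>0$, regardless of $(G,\widetilde{G})$ and $\beta\geq 0$.

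For the final step I would translate ``$\bm{1}_n$ is an eigenvector of $V$ with eigenvalue $1$'' into ``$\bar{x}\in\mathcal{I}(\varphi^{*})$.'' By Theorem~\ref{theorem}, the optimal signal reveals $\{\bm{z}_j'\bm{x}\}$ for an orthonormal basis $\{\bm{z}_j\}$ of the subspace $E_{\geq 0}$ spanned by the eigenvectors of $V$ with nonnegative eigenvalues; equivalently, it reveals the orthogonal projection of $\bm{x}$ onto $E_{\geq 0}$. Since $\bm{1}_n\in E_{\geq 0}$, writing $\bm{1}_n=\sum_j c_j\bm{z}_j$ yields $\bar{x}=\tfrac{1}{n}\sum_j c_j(\bm{z}_j'\bm{x})$, a deterministic function of the disclosed statistics, so $\mathbb{E}[\bar{x}\mid m]=\bar{x}$ for every realization $m$.

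I do not expect a genuine obstacle; the one point to handle carefully is that the eigenvalue $1$ may have multiplicity greater than one, so the eigenbasis $\{\bm{z}_j\}$ in Theorem~\ref{theorem} is not unique. This is harmless: what is pinned down is the subspace $E_{\geq 0}$, not the particular basis, and membership $\bm{1}_n\in E_{\geq 0}$ is basis-independent, which is exactly what the last step uses.
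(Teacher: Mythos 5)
Your proposal is correct and follows essentially the same route as the paper's proof: both show $V\bm{1}_n=\bm{1}_n$ by using $L\bm{1}_n=\widetilde{L}\bm{1}_n=\bm{0}$ (hence $B\bm{1}_n=\bm{1}_n$) and then invoke Theorem~\ref{theorem}. Your version just spells out the computation and the basis-independence point that the paper leaves as ``straightforward.''
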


This result demonstrates that, regardless of complexities such as the coordination structures, the principal must ensure that the agents are informed about the overall goal of the organization. Proposition \ref{prop_avg} also implies that no revelation is suboptimal. Beyond the average state, what other types of statistics should the principal provide?

\subsection{Conditions for full revelation optimality}
This subsection establishes the sufficient conditions under which the provision of full information about the state is optimal.%
\footnote{Note that Propositions \ref{prop_full} and \ref{prop_full_beta} are independent of the prior distribution of the state.}
Technically, the condition is simply whether matrix $V$ is positive semidefinite (i.e., whether all eigenvalues of $V$ are nonnegative).

The first proposition examines the condition on the coordination structure:
\begin{proposition}\label{prop_full}
If the incentive graph is a spanning supergraph of the synergy graph {\em (}i.e., if $g_{ij} \geq \tilde{g}_{ij}$ for all $i$ and $j${\em )}, then full revelation is optimal.%
\footnote{A spanning supergraph is a graph obtained by adding edges to a given graph while preserving the vertex set.}
\end{proposition}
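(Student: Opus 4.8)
The plan is to reduce the claim, via Theorem~\ref{theorem}, to showing that the matrix $V$ in \eqref{eq:payoff_V} is positive semidefinite: once $V \succeq 0$, it has $r = n$ nonnegative eigenvalues, so the optimal signal consists of $n$ linearly independent statistics $\bm{z}_j'\bm{x}$, which together reveal $\bm{x}$ completely. So the entire task is to establish $V \succeq 0$ under the hypothesis $g_{ij}\geq \tilde g_{ij}$ for all $i,j$.

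The first step is to strip off the outer factors. Since $L$ is a Laplacian it is positive semidefinite, hence $I_n + \beta L$ is positive definite and $B = (I_n+\beta L)^{-1}$ is a well-defined symmetric positive definite matrix. Because congruence by an invertible matrix preserves inertia, $V = B'(I_n - 2\beta(\widetilde L - L))B$ is positive semidefinite if and only if $I_n - 2\beta(\widetilde L - L)$ is. So it suffices to show $I_n - 2\beta(\widetilde L - L) \succeq 0$.

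The key observation is that the Laplacian map is linear in the edge weights: if $H \equiv G - \widetilde G = [\,g_{ij}-\tilde g_{ij}\,]$, then $D_H - H = (D_G - G) - (D_{\widetilde G} - \widetilde G)$, i.e. the Laplacian of $H$ equals $L - \widetilde L$. The hypothesis $g_{ij} \geq \tilde g_{ij}$ for all $i,j$ says exactly that $H$ has nonnegative entries, so $H$ is a bona fide (weighted) adjacency matrix and its Laplacian $L_H \equiv L - \widetilde L$ is positive semidefinite. Therefore
\begin{equation}
I_n - 2\beta(\widetilde L - L) = I_n + 2\beta L_H \succeq I_n \succ 0,
\end{equation}
which gives $V \succeq 0$ (indeed positive definite) and completes the proof.

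I do not anticipate a genuine obstacle here; the only point requiring care is the recognition that $L - \widetilde L$ is itself the Laplacian of the difference graph $G - \widetilde G$, which is what makes the sign condition on edges translate cleanly into positive semidefiniteness. Everything else—positive definiteness of $B$, invariance of definiteness under congruence—is standard linear algebra. It is worth noting explicitly in the write-up that the argument uses $\beta \geq 0$ only through $I_n + 2\beta L_H \succeq 0$, which is why, as the footnote remarks, the conclusion is independent of the prior distribution of the state.
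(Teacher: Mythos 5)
Your proof is correct and follows essentially the same route as the paper's: both identify $L-\widetilde{L}$ as the Laplacian of the difference graph $G-\widetilde{G}$ (valid precisely because $g_{ij}\geq\tilde g_{ij}$ makes that difference a genuine adjacency matrix) and conclude that $I_n-2\beta(\widetilde{L}-L)$, hence $V$, is positive (semi)definite. The only cosmetic difference is that the paper invokes strict diagonal dominance of $I_n+2\beta(L-\widetilde{L})$ where you use positive semidefiniteness of the Laplacian plus congruence; these are interchangeable here.
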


This result reflects the inherent mechanics of the quadratic Bayesian persuasion: the principal coordinates agents' actions by withholding information, necessarily reducing their variability. When the coordination incentives in the incentive graph exceed those in the synergy graph (i.e., $g_{ij} \geq \tilde{g}_{ij}$ for all $i$ and $j$), agents already have strong incentives to align their actions and place lower weights on adaptation. In such cases, the principal's focus shifts to enhancing adaptation by ensuring that agents' actions respond effectively to the state of the world. Withholding information fails to improve outcomes here, as there is no need to mitigate excessive coordination. Instead, the challenge lies in providing agents with precise information to guide their actions. As a result, full transparency becomes the optimal strategy, allowing agents to adapt effectively to the local states while taking advantage of their existing coordination incentives.

To illustrate this more concretely, consider a two-agent organization where $(g_{12}, \tilde{g}_{12}) = (1, 0)$. In this case, the equilibrium strategy is
\begin{equation}\nonumber
a_i^{*} = \frac{1 + \beta}{1 + 2\beta} \hat{x}_i + \frac{\beta}{1 + 2\beta} \hat{x}_{-i},
\end{equation}
while the efficient solution is $a^{**}_i = \hat{x}_i$.
As $\beta$ increases from a low value to infinity, the equilibrium strategy shifts from relying on individual expectations $\hat{x}_i$ to averaging the expectations, $(\hat{x}_1 + \hat{x}_2)/2$.  

To highlight the costs and benefits of providing additional information beyond the average state $\bar{x} = (x_{1} + x_{2})/2$, consider a noisy signal $s = x_{1} + \epsilon$, where $\epsilon \sim N(0, \kappa^{2})$.  
When the signal informs the agents of the realizations of $s$ and $\bar{x}$, the principal's expected payoff, in the setting where $(g_{12}, \tilde{g}_{12}) = (1, 0)$, is expressed as
\begin{equation}\nonumber
\begin{aligned}
\mathbb{E}\left[ v(\bm{a}^{*}, \bm{x}) \right] =&
-\mathbb{E}\left[ \sum_{i} (a_{i}^{*} - x_i)^{2} \right] \\
=& -\mathbb{E}\left[ 2\left( \frac{\beta}{1 + 2\beta} \right)^{2} \left( \hat{x}_{1} - \hat{x}_{2} \right)^{2} + \sum_{i} (\hat{x}_{i} - x_{i})^{2} \right] \\
=& -\sigma^{2} \left[ \left( \frac{2\beta}{1 + 2\beta} \right)^{2} \frac{\sigma^{2}}{\sigma^{2} + 2\kappa^{2}} + \frac{2\kappa^{2}}{\sigma^{2} + 2\kappa^{2}} \right].
\end{aligned}
\end{equation}
Here, the first term inside the brackets represents the variance due to misaligned coordination incentives, while the second term captures the variance due to estimation error. As $\kappa$ decreases (i.e., the precision of $s$ increases), the reduction in estimation error outweighs the increase in variance due to misaligned coordination. Since $2\beta / (1 + 2\beta) < 1$, the principal's overall payoff necessarily increases as $\kappa$ decreases, demonstrating that full revelation is indeed beneficial in this context.

Returning to the general results, an immediate corollary of Proposition \ref{prop_full} is as follows:
\begin{corollary}
If $G = K_n$, where $K_n$ denotes a complete graph, then full revelation is optimal.
\end{corollary}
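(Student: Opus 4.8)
The plan is to obtain this corollary as an immediate special case of Proposition \ref{prop_full}. The only thing to check is that a complete incentive graph dominates every synergy graph entrywise. Since all adjacency entries lie in $\{0,1\}$ with zero diagonal, for $G = K_n$ we have $g_{ij} = 1$ whenever $i \neq j$ and $g_{ii} = 0$. Hence, for an arbitrary synergy graph $\widetilde{G} = [\tilde g_{ij}]$ on the same vertex set, $g_{ij} = 1 \geq \tilde g_{ij}$ for $i \neq j$ and $g_{ii} = 0 = \tilde g_{ii}$, so $g_{ij} \geq \tilde g_{ij}$ for all $i,j$. Thus $K_n$ is a spanning supergraph of $\widetilde{G}$, and Proposition \ref{prop_full} applies directly to give that full revelation is optimal. (The same inequality $g_{ij} = 1 \geq \tilde g_{ij}$ also covers the weighted extension with $\tilde g_{ij} \in [0,1]$ mentioned in the footnote, so nothing changes there.)

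As a self-contained cross-check of the spectral mechanism, I would note that when $G = K_n$ one has $L = L^K = n I_n - \bm{1}_n \bm{1}_n'$, so $V = B'\bigl(I_n - 2\beta(\widetilde{L} - L^K)\bigr)B$. For any spanning subgraph $\widetilde{G}$ of $K_n$, the matrix $L^K - \widetilde{L}$ is exactly the Laplacian of the complement graph $\overline{\widetilde{G}}$ and is therefore positive semidefinite; hence $I_n - 2\beta(\widetilde{L} - L^K) = I_n + 2\beta(L^K - \widetilde{L}) \succeq I_n \succ 0$. Consequently $V \succeq 0$, so in the notation of Theorem \ref{theorem} we have $r = n$ and the optimal signal consists of $n$ linearly independent statistics, i.e., full revelation.

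There is essentially no obstacle here: the content of the corollary is simply that $K_n$ is the maximal element of the family of incentive graphs under edgewise inclusion, so the hypothesis of Proposition \ref{prop_full} is satisfied for every possible synergy graph $\widetilde{G}$ and every $\beta \geq 0$. The only minor care needed is to confirm that the diagonal condition ($g_{ii} \geq \tilde g_{ii}$) holds trivially and that the argument is uniform in $\widetilde{G}$, both of which are immediate.
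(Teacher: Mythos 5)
Your proposal is correct and matches the paper's treatment: the paper presents this as an immediate corollary of Proposition \ref{prop_full}, exactly as you argue, since $g_{ij}=1\geq \tilde g_{ij}$ for all $i\neq j$ makes $K_n$ a spanning supergraph of any synergy graph. Your spectral cross-check (identifying $L^K-\widetilde L$ as the Laplacian of the complement of $\widetilde G$, hence positive semidefinite) is also sound and consistent with the paper's proof of Proposition \ref{prop_full}.
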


The next proposition provides another sufficient condition on $\beta$, independent of $(L, \widetilde{L})$:
\begin{proposition}\label{prop_full_beta}
If $\beta \leq 1/(2n)$, then full revelation is optimal.
\end{proposition}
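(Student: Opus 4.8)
The goal is to show that $V = B'(I_n - 2\beta(\widetilde{L}-L))B$ is positive semidefinite whenever $\beta \le 1/(2n)$, since by Theorem \ref{theorem} full revelation is optimal exactly when $V \succeq 0$. Since $B = (I_n + \beta L)^{-1}$ is symmetric and invertible, $V \succeq 0$ is equivalent to $I_n - 2\beta(\widetilde{L} - L) \succeq 0$, i.e.\ to $2\beta(\widetilde{L} - L) \preceq I_n$. So the plan is to bound the largest eigenvalue of $\widetilde{L} - L$ from above and show it is at most $1/(2\beta)$, hence at least $n$, under the hypothesis.

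First I would reduce the claim to a spectral bound on $\widetilde{L} - L$. Writing the condition as $\lambda_{\max}(\widetilde{L} - L) \le 1/(2\beta)$, it suffices to prove the dimension-free bound $\lambda_{\max}(\widetilde{L} - L) \le n$; then $1/(2\beta) \ge n \ge \lambda_{\max}(\widetilde{L}-L)$ gives the result. Note $\widetilde{L} - L \preceq \widetilde{L}$ because $L \succeq 0$ (a Laplacian is always positive semidefinite), and every Laplacian of an $n$-vertex graph satisfies $\lambda_{\max}(\widetilde L) \le n$ (this is exactly the bound $\lambda_n \le n$ recorded in Subsection \ref{sec:laplacian}). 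Chaining these, $\lambda_{\max}(\widetilde{L} - L) \le \lambda_{\max}(\widetilde{L}) \le n \le 1/(2\beta)$, so $2\beta(\widetilde{L}-L) \preceq I_n$, hence $I_n - 2\beta(\widetilde L - L) \succeq 0$, hence $V = B'(\cdot)B \succeq 0$.

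The one point requiring a little care is the first inequality $\widetilde L - L \preceq \widetilde L$: this uses $L \succeq 0$, which holds because $L$ is a graph Laplacian and $\bm{v}'L\bm{v} = \sum_{i<j} g_{ij}(v_i - v_j)^2 \ge 0$ for all $\bm{v}$. Combined with the congruence invariance of semidefiniteness (pre- and post-multiplying a PSD matrix by $B'$ and $B$ with $B$ symmetric invertible preserves $\succeq 0$), the argument is complete. I expect the main (very minor) obstacle to be simply making sure the chain of matrix inequalities is assembled in the right direction — in particular that dropping the $+L$ term only helps, and that the congruence step is legitimate — rather than any substantive computation; the bound is genuinely soft and follows from the two standard facts $L \succeq 0$ and $\lambda_{\max}(\widetilde L) \le n$.
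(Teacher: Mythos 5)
Your proposal is correct and is essentially the paper's own argument in a slightly different packaging: the paper writes $I_n - 2\beta(\widetilde{L}-L) = (I_n - 2\beta\widetilde{L}) + 2\beta L$ and notes both summands are positive semidefinite (using $\lambda_{\max}(\widetilde{L}) \le n$ and $L \succeq 0$), which is exactly your chain $I_n - 2\beta(\widetilde{L}-L) \succeq I_n - 2\beta\widetilde{L} \succeq 0$, followed by the same congruence step with $B$. No substantive difference.
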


When coordination benefits are sufficiently small, the distortion from the coordination structure becomes negligible. In such cases, the principal should focus on enhancing adaptation by providing precise information about each local state.

For instance, consider again a two-agent organization, but now assume that $(g_{12}, \tilde{g}_{12}) = (0, 1)$.  
In this case, since the agents have no incentive to coordinate, each agent's strategy becomes $a^{*}_{i} = \hat{x}_{i}$.  
When the signal consists of $m = (s, \bar{x})$, where $s = x_{1} + \epsilon$ and $\bar{x} = (x_{1} + x_{2})/2$, as in the previous example, the principal's expected payoff becomes
\begin{equation}
\begin{aligned}\nonumber
\mathbb{E}[v(\bm{a}^{*}, \bm{x})] =&
-\mathbb{E}\left[ \sum_{i} (\hat{x}_{i} - x_{i})^{2} + 4\beta \left( \hat{x}_{1} - \hat{x}_{2} \right)^{2} \right] \\
=& -\sigma^{2} \left[ \frac{2\kappa^{2}}{\sigma^{2} + 2\kappa^{2}}
+ 4\beta \frac{\sigma^{2}}{\sigma^{2} + 2\kappa^{2}} \right].
\end{aligned}
\end{equation}

The effect of increased transparency depends on whether $\beta$ exceeds $1/4$. When $\beta < 1/4$, reducing the noise in $s$ (decreasing $\kappa$) benefits the principal, as it reduces estimation error more significantly than it increases distortion from misaligned coordination incentives. Conversely, if $\beta > 1/4$, the loss from misaligned coordination outweighs the benefits of reduced noise, suggesting that the principal might be better off refraining from revealing more information beyond $\bar{x}$.

\subsection{Determining the dimension of the optimal public signal}

This subsection investigates how changes in the underlying incentive structure affect the dimension of the optimal signal, i.e., the number of statistics to be disclosed. Specifically, I analyze how the number of nonnegative eigenvalues of $V$ varies with factors such as the coordination benefit $\beta$ and the incentive graph $G$.

First, the following proposition provides a clear comparative static result regarding the impact of the incentive graph $G$:
\begin{proposition}\label{prop_morelinks}
Fix $\beta$ and $\widetilde{G}$. If $G'$ is a spanning supergraph of $G$ {\em (}i.e., \( g'_{ij} \geq g_{ij} \) for all \( i \) and \( j \){\em )}, then the dimension of the optimal public signal is not smaller when the incentive graph is \( G' \) rather than \( G \).
\end{proposition}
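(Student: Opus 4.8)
The plan is to reduce the count of nonnegative eigenvalues of $V$ to that of a simpler matrix that is \emph{monotone} in the incentive Laplacian, and then invoke eigenvalue monotonicity. By Theorem~\ref{theorem}, the dimension of the optimal signal is the number $r$ of nonnegative eigenvalues of $V = B'(I_n - 2\beta(\widetilde L - L))B$, where $B = (I_n + \beta L)^{-1}$. The key observation is that, since $L$ is symmetric, $B$ is symmetric and invertible, so $V = B\,M\,B$ with $M \equiv I_n - 2\beta(\widetilde L - L) = I_n - 2\beta\widetilde L + 2\beta L$. That is, $V$ is congruent to $M$, and by Sylvester's law of inertia $V$ and $M$ have the same inertia; in particular they have the same number of nonnegative eigenvalues. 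So it suffices to track the number of nonnegative eigenvalues of $M$ as $L$ varies (with $\beta$ and $\widetilde L$ fixed).

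Next I would handle the comparison across incentive graphs. Let $L'$ be the Laplacian of $G'$. Because $g'_{ij} \geq g_{ij}$ entrywise, the difference $L' - L$ is itself the Laplacian of the (weighted) graph with adjacency entries $g'_{ij} - g_{ij} \in [0,1]$, and every graph Laplacian is positive semidefinite; hence $L' - L \succeq 0$. Writing $M' \equiv I_n - 2\beta\widetilde L + 2\beta L'$, this gives $M' - M = 2\beta(L' - L) \succeq 0$, i.e. $M' \succeq M$ in the Loewner order. By Weyl's monotonicity theorem (equivalently, by the variational characterization: any subspace on which $x'Mx \geq 0$ is one on which $x'M'x \geq 0$), the $k$-th smallest eigenvalue of $M'$ is at least the $k$-th smallest eigenvalue of $M$ for every $k$. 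Consequently, if $M$ has $m$ nonnegative eigenvalues then so does $M'$ (at least), so $M'$ has no fewer nonnegative eigenvalues than $M$.

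Combining the two steps finishes the argument: the dimension of the optimal signal under $G'$ equals the number of nonnegative eigenvalues of $M'$, which is at least the number of nonnegative eigenvalues of $M$, which equals the dimension under $G$. As a by-product this recovers Proposition~\ref{prop_full} in the case $L \succeq \widetilde L$, since then $M \succeq I_n \succ 0$ and $r = n$.

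I do not expect a serious obstacle here; the work is in finding the right reduction rather than in any estimate. The main thing to get right is the congruence step: one must use that $B$ is symmetric (so that $V = B'MB$ is genuinely a congruence of $M$, not merely a similarity-like expression), and one must confirm that $L' - L$ is a legitimate graph Laplacian and hence PSD even in the weighted extension mentioned in the paper's footnote. A direct attempt to compare the eigenvalues of $V$ under $G'$ versus $G$ would be awkward, because $B$ itself depends on $L$ and $V$ is not Loewner-monotone in $L$; the congruence to $M$ is precisely what removes this difficulty, after which everything is textbook and no eigenvalues need to be computed.
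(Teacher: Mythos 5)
Your proposal is correct, and it shares the two core ingredients of the paper's own proof --- the observation that $L_{G'}-L_G$ is the Laplacian of the graph of added links and hence positive semidefinite, and an appeal to Weyl's monotonicity theorem --- but it routes the argument differently in a way that matters. The paper applies Weyl directly to the payoff matrices, writing $V_{G'} = V + 2\beta B'(L_{G'}-L_G)B$ with a single matrix $B$; as you point out, this is delicate because $B=(I_n+\beta L)^{-1}$ itself changes when the incentive graph changes, so that identity does not hold literally with either choice of $B$, and $V$ is not Loewner-monotone in $L$. Your detour through the middle matrix $M = I_n - 2\beta(\widetilde L - L)$ via Sylvester's law of inertia is exactly what neutralizes this: the congruence by the symmetric invertible $B$ makes the signature of $V$ equal to that of $M$, and $M$ \emph{is} Loewner-monotone in $L$, so Weyl applies cleanly. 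In effect your write-up supplies the justification the paper's proof leaves implicit (the inertia count is congruence-invariant, so the choice of $B$ is immaterial for counting nonnegative eigenvalues), and as a bonus the same reduction gives Proposition~\ref{prop_full} for free and meshes with the paper's own use of Sylvester's law in Section~\ref{sec:correlation}. One cosmetic caution: since the paper uses $B'$ for the transpose of $B$, avoid also writing $B'$ for the equilibrium matrix associated with $G'$ if you transcribe this into the paper's notation.
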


This result aligns with the findings in the previous subsection: the principal benefits from restricting information sharing only when agents have weaker incentives to coordinate their actions with each other. Notably, adding a link between $i$ and $j$ in $G$ increases the signal dimensionality, even when the principal does not benefit from such coordination (i.e., $\tilde{g}_{ij} = 0$). This highlights that aligning coordination incentives between the principal and agents does not necessarily enhance transparency.

The next proposition examines the impact of $\beta$ on the structure of the optimal public signal, specifically addressing how it determines the number of disclosed statistics:
\begin{proposition}\label{prop_beta}
The dimension of the optimal public signal is nonincreasing as $\beta$ increases.
\end{proposition}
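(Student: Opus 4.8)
The plan is to show that the number of nonnegative eigenvalues of $V = B'(I_n - 2\beta(\widetilde{L}-L))B$ is nonincreasing in $\beta$. Since $B = (I_n + \beta L)^{-1}$ is symmetric positive definite, congruence by $B$ preserves inertia (Sylvester's law of inertia), so the dimension of the optimal signal equals the number of nonnegative eigenvalues of the symmetric matrix
\begin{equation}\nonumber
W(\beta) \equiv I_n - 2\beta(\widetilde{L} - L) = I_n + 2\beta L - 2\beta \widetilde{L}.
\end{equation}
Thus it suffices to prove that the count of nonnegative eigenvalues of $W(\beta)$ is nonincreasing in $\beta$. The first step is to record this inertia-invariance reduction carefully, noting that $\bar x \in \mathcal{I}(\varphi^\ast)$ always (Proposition \ref{prop_avg}) corresponds to the fixed eigenvalue $1$ on the $\bm{1}_n$ direction, which plays no role in the monotonicity.

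The core of the argument is a comparison between $W(\beta_1)$ and $W(\beta_2)$ for $\beta_1 < \beta_2$. Write $\beta_2 = t\beta_1$ with $t > 1$; then one can check the algebraic identity relating $W(\beta_2)$ to a positive scaling of $W(\beta_1)$ plus a correction. Concretely, I would compute
\begin{equation}\nonumber
W(\beta_2) - t\, W(\beta_1) = (1-t) I_n + (t-1) I_n = 0?
\end{equation}
— that naive attempt fails because the $I_n$ term is not scaled, so instead I would work directly with the Courant–Fischer min-max characterization. For a symmetric matrix, the $k$-th largest eigenvalue is $\lambda_k(W) = \max_{\dim S = k} \min_{v \in S, \|v\|=1} v'Wv$. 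The number of nonnegative eigenvalues is $\max\{k : \lambda_k(W(\beta)) \geq 0\}$. Fix a unit vector $v$ and consider $f_v(\beta) \equiv v' W(\beta) v = \|v\|^2 + 2\beta\, v'(L - \widetilde{L}) v = 1 + 2\beta\, v'(L-\widetilde{L})v$, an affine function of $\beta$. The key observation is that whenever $f_v(\beta) < 0$ for some $\beta$, the slope $v'(L-\widetilde{L})v$ must be negative, so $f_v$ is decreasing and stays negative for all larger $\beta$; equivalently, the set $\{\beta \geq 0 : v' W(\beta) v \geq 0\}$ is an interval of the form $[0, \beta_v]$ (possibly all of $[0,\infty)$).

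From here the monotonicity of the nonnegative-eigenvalue count follows: if $W(\beta_2)$ has at least $k$ nonnegative eigenvalues, there is a $k$-dimensional subspace $S$ on which $v' W(\beta_2) v \geq 0$ for all unit $v \in S$; by the interval property applied to each such $v$ (and using that $\beta_1 < \beta_2$), we get $v' W(\beta_1) v \geq 0$ on all of $S$, so $W(\beta_1)$ has at least $k$ nonnegative eigenvalues by Courant–Fischer. Hence the count at $\beta_1$ is at least the count at $\beta_2$. I would then translate back through the $B$-congruence to conclude the statement about the optimal signal dimension. The main obstacle I anticipate is handling the boundary case of \emph{zero} eigenvalues cleanly: a vector $v$ with $v'(L-\widetilde{L})v = 0$ gives $f_v(\beta) \equiv 1 > 0$ for all $\beta$, which is harmless, but I should make sure the "nonnegative" (rather than strictly positive) threshold in Theorem \ref{theorem} is treated consistently — in particular that an eigenvalue of $V$ crossing exactly through $0$ at some $\beta^\ast$ is counted in a way compatible with "nonincreasing," which the weak inequality $v'W(\beta_1)v \geq 0 \Leftarrow v'W(\beta_2)v \geq 0$ for $\beta_1<\beta_2$ indeed respects. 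A secondary check is that $\beta = 0$ gives $W = I_n$ with all $n$ eigenvalues positive, consistent with full revelation (Proposition \ref{prop_full_beta}), providing a sanity anchor for the monotone decrease.
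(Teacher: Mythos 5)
Your proof is correct and follows essentially the same route as the paper: both arguments use Sylvester's law of inertia to reduce the problem to counting the nonnegative eigenvalues of $I_n - 2\beta(\widetilde{L}-L)$, and then exploit the fact that each quadratic form $1 + 2\beta\, v'(L-\widetilde{L})v$ equals $1$ at $\beta=0$ and is affine in $\beta$, so it can only cross zero once. The paper phrases this second step via the eigendecomposition of $\widetilde{L}-L$ (tracking the diagonal entries $1-2\beta\delta_j$), whereas you run it through Courant--Fischer on arbitrary subspaces; this is a cosmetic rather than substantive difference, and your version handles the weak-inequality boundary case a bit more explicitly.
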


In summary, this section outlines key conditions for the optimal design of public signals, highlighting when full information revelation is beneficial and how the coordination structure and coordination importance influence the signal dimensionality. The next section focuses on the specific case where the synergy graph is a complete graph ($\widetilde{G} = K_n$), providing deeper insights into the relationship between coordination and information design.

\section{Optimal signal for uniform synergy graphs}

This section focuses on environments where the principal benefits from coordination among all pairs of agents. Formally, suppose that $\widetilde{G} = K_n$ (i.e., $\tilde{g}_{ij} = 1$ for all \(i \neq j\)). This setting represents scenarios where achieving uniform coordination across the entire organization is crucial for maximizing overall performance. Such cases are particularly relevant in highly interconnected systems, such as teams working on collaborative projects or supply chains requiring seamless synchronization among all entities. Under this assumption, I examine how the organizational parameters \((G, \beta)\)---the agents' incentive graph and coordination importance---determine the statistics to be shared with agents.

\subsection{Characterization}

When $\widetilde{G} = K_n$, Theorem \ref{theorem} provides a tractable characterization of the optimal statistics based on the eigenvectors and eigenvalues of the Laplacian matrix of $G$.

\begin{proposition}\label{prop_balanced}
Suppose that $\widetilde{G} = K_n$. Define $m_{j}^{*} \equiv \bm{q}_{j}' \bm{x}$ as the linear combination of the local states based on the eigenvector $\bm{q}_{j}$ of $L$. The optimal signal informs the agents of a subset of $\{ m_{1}^{*}, \dots, m_{n}^{*} \}$, and the elements of the optimal signal are determined by the coordination benefit as follows: for $j = 2, \dots, n$,
\begin{equation}\label{eq:inform}
m_{j}^{*} \in \mathcal{I}(\varphi^{*}) \iff \beta^{-1} \geq 2(n - \lambda_{j}).
\end{equation}
\end{proposition}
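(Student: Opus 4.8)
The plan is to compute the eigenvalues of the matrix $V$ in \eqref{eq:payoff_V} explicitly under the hypothesis $\widetilde G = K_n$ and then read off the optimal signal from Theorem \ref{theorem}. The first step is to substitute the Laplacian of the complete graph, $\widetilde L = L^{K} = n I_n - \bm 1_n\bm 1_n'$, into \eqref{eq:payoff_V}. Writing $J_n \equiv \bm 1_n\bm 1_n'$, this gives
\[
I_n - 2\beta(\widetilde L - L) = (1 - 2\beta n)\, I_n + 2\beta\, J_n + 2\beta\, L .
\]
The structural fact that makes everything tractable is that $I_n$, $J_n$, and $L$ are simultaneously diagonalized by the Laplacian eigenbasis $\{\bm q_j\}$ (which can be taken orthogonal): the all-ones vector $\bm q_1 = \bm 1_n$ spans $\ker L$ and is the only eigenvector of $J_n$ with a nonzero eigenvalue ($J_n \bm 1_n = n \bm 1_n$), while each $\bm q_j$ with $j \geq 2$ is orthogonal to $\bm 1_n$ and hence lies in $\ker J_n$. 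Since $B = (I_n + \beta L)^{-1}$ is a rational function of $L$ and $B' = B$, the symmetric matrix $V = B\,(I_n - 2\beta(\widetilde L - L))\,B$ is also diagonal in this basis.

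Second, I would record the resulting eigenvalues $\omega_j$ of $V$. On $\bm q_1$ one has $\widetilde L \bm 1_n = L \bm 1_n = \bm 0$, so $V \bm q_1 = \bm q_1$ and $\omega_1 = 1$. For $j \geq 2$, $\widetilde L \bm q_j = n \bm q_j$, $L \bm q_j = \lambda_j \bm q_j$, and $B \bm q_j = (1 + \beta\lambda_j)^{-1}\bm q_j$, which yields
\[
\omega_j = \frac{1 - 2\beta(n - \lambda_j)}{(1 + \beta\lambda_j)^{2}}, \qquad j = 2, \dots, n .
\]
Theorem \ref{theorem} applies because the local states are i.i.d.\ normal; its eigenvectors $\bm z_j$ can be taken to be the $\bm q_j$ (in any eigenspace where $V$ has a repeated eigenvalue, choose the Laplacian eigenbasis), and since ``informing the agents of $y$'' is invariant to rescaling $y$, the optimal signal discloses $m_j^* = \bm q_j'\bm x$ exactly for the indices $j$ with $\omega_j \geq 0$. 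In particular $\omega_1 = 1 > 0$ recovers $m_1^* = \bm 1_n'\bm x = n\bar x \in \mathcal I(\varphi^*)$, consistent with Proposition \ref{prop_avg}.

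Third, I would convert $\omega_j \geq 0$ into the inequality in \eqref{eq:inform}. The denominator $(1 + \beta\lambda_j)^2$ is strictly positive for every $\beta \geq 0$, so for $j \geq 2$ the sign of $\omega_j$ equals the sign of $1 - 2\beta(n - \lambda_j)$. Since $0 \leq \lambda_j \leq n$, the coefficient $n - \lambda_j$ is nonnegative: if $\lambda_j = n$ then $\omega_j = 1 > 0$ and the claimed condition $\beta^{-1} \geq 2(n - \lambda_j) = 0$ holds automatically; if $\lambda_j < n$ and $\beta > 0$, dividing $1 - 2\beta(n - \lambda_j) \geq 0$ through by $2\beta(n - \lambda_j)$ gives $\beta^{-1} \geq 2(n - \lambda_j)$; and the case $\beta = 0$ is full revelation, consistent with the convention $\beta^{-1} = +\infty$. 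Combining the cases gives $m_j^* \in \mathcal I(\varphi^*) \iff \beta^{-1} \geq 2(n - \lambda_j)$ for each $j \geq 2$, which is the assertion.

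The argument is essentially a direct eigenvalue computation, so I do not anticipate a serious obstacle; the one point that needs care is the simultaneous-diagonalization step — confirming that the complete-graph Laplacian $\widetilde L$ and the matrix $B$ are both diagonal in the eigenbasis of $L$, which rests on $\bm 1_n$ being their common distinguished eigenvector — after which the sign analysis of $\omega_j$ and the invocation of Theorem \ref{theorem} are routine.
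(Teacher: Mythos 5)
Your proof is correct and takes essentially the same route as the paper's: both diagonalize $V$ in the Laplacian eigenbasis of $L$ (using $L^{K}\bm{q}_j = n\bm{q}_j$ for $\bm{q}_j \perp \bm{1}_n$ and $B\bm{q}_j = (1+\beta\lambda_j)^{-1}\bm{q}_j$), arrive at $\omega_j = (1+\beta\lambda_j)^{-2}\bigl(1-2\beta(n-\lambda_j)\bigr)$, and read off the disclosed statistics from Theorem \ref{theorem} via the sign of $\omega_j$. One trivial slip: when $\lambda_j = n$ the eigenvalue is $(1+\beta n)^{-2}$, not $1$, but it is still positive, so your conclusion is unaffected.
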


According to Proposition \ref{prop_balanced}, the Laplacian eigenvectors are used to compute the statistics, while the Laplacian eigenvalues determine which statistics are disclosed. Each statistic $m_{j}^{*}$ is disclosed when $\beta$ is below the threshold $\widehat{\beta}(\lambda_{j}) \equiv 1/(2(n - \lambda_{j}))$.%
\footnote{For convenience, I set $\widehat{\beta}(\lambda) = \infty$ if $\lambda = n$. Note that $\lambda_n = n$ if and only if the complement of $G$ is not connected. For example, if there is an agent with degree $n-1$ (connected to all other agents), then $\lambda_n = n$.}
As $\beta$ increases, the optimal signal changes discretely, consisting of fewer statistics.
Notably, the condition for full revelation depends solely on the algebraic connectivity $\lambda_{2}$.
\begin{corollary}\label{corollary_full_beta}
Suppose that $\widetilde{G} = K_n$. Then, full revelation is optimal if and only if $\beta \leq \widehat{\beta}(\lambda_{2})$.
\end{corollary}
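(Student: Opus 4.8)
The plan is to read the corollary off directly from the characterization in Proposition~\ref{prop_balanced}. The first step is to observe that, since $\{\bm{q}_1,\dots,\bm{q}_n\}$ is an (orthogonal) eigenbasis of the symmetric matrix $L$, the map $\bm{x}\mapsto(m_1^{*},\dots,m_n^{*})=(\bm{q}_1'\bm{x},\dots,\bm{q}_n'\bm{x})$ is an invertible linear transformation of the state. Hence informing the agents of all $n$ statistics $m_1^{*},\dots,m_n^{*}$ is equivalent to full revelation of $\bm{x}$. Combined with Theorem~\ref{theorem} (equivalently Proposition~\ref{prop_balanced}), which says the optimal signal consists of exactly those $m_j^{*}$ with nonnegative associated eigenvalue of $V$, this means full revelation is optimal if and only if all $n$ of these statistics belong to the optimal signal; equivalently, if and only if $V$ is positive semidefinite, since revealing any $m_j^{*}$ with $\omega_j<0$ strictly lowers $\mathbb{E}[\hat{\bm{x}}'V\hat{\bm{x}}]$ while revealing the others only helps when $V\succeq 0$.

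Next I would invoke the disclosure rule \eqref{eq:inform}. The statistic $m_1^{*}=\bm{1}_n'\bm{x}=n\bar{x}$ is always disclosed by Proposition~\ref{prop_avg}, so full revelation is optimal if and only if $m_j^{*}\in\mathcal{I}(\varphi^{*})$ for every $j=2,\dots,n$, which by \eqref{eq:inform} is exactly the requirement $\beta^{-1}\geq 2(n-\lambda_j)$, i.e.\ $\beta\leq\widehat{\beta}(\lambda_j)$, for all $j=2,\dots,n$.

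The last step is a monotonicity argument. The threshold $\lambda\mapsto\widehat{\beta}(\lambda)=1/(2(n-\lambda))$ is strictly increasing on $[0,n)$, and with the convention $\widehat{\beta}(n)=\infty$ it is nondecreasing on $[0,n]$; since the Laplacian eigenvalues are ordered, $\lambda_2\leq\lambda_j$ for all $j\geq 2$, and therefore $\min_{j\geq 2}\widehat{\beta}(\lambda_j)=\widehat{\beta}(\lambda_2)$. Consequently the system ``$\beta\leq\widehat{\beta}(\lambda_j)$ for all $j=2,\dots,n$'' collapses to the single inequality $\beta\leq\widehat{\beta}(\lambda_2)$, which yields the claim.

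I do not expect a serious obstacle here, given Proposition~\ref{prop_balanced}: the only points needing a line of care are (i) the equivalence between ``every $m_j^{*}$ disclosed'' and ``full revelation,'' which rests on $\{\bm{q}_j\}$ being a basis, and (ii) the edge case $\lambda_2=n$ (which forces $G=K_n$), where $\widehat{\beta}(\lambda_2)=\infty$ and the statement correctly reduces to ``full revelation is always optimal,'' consistent with the Corollary to Proposition~\ref{prop_full}. If a self-contained argument were wanted instead, the same conclusion follows by diagonalizing $V=B'(I_n-2\beta(\widetilde{L}-L))B$ in the basis $\{\bm{q}_j\}$: using $\widetilde{L}=L^{K}=nI_n-\bm{1}_n\bm{1}_n'$ one obtains $\omega_1=1$ and $\omega_j=(1-2\beta(n-\lambda_j))/(1+\beta\lambda_j)^2$ for $j\geq 2$, so $V\succeq 0$ iff $\beta\leq\widehat{\beta}(\lambda_j)$ for every $j\geq 2$ iff $\beta\leq\widehat{\beta}(\lambda_2)$.
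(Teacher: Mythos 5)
Your proposal is correct and follows essentially the same route as the paper, which treats the corollary as an immediate consequence of Proposition~\ref{prop_balanced}: since $\lambda_2\leq\lambda_j$ for all $j\geq 2$ and $\widehat{\beta}(\cdot)$ is nondecreasing, the disclosure conditions in \eqref{eq:inform} collapse to the single inequality $\beta\leq\widehat{\beta}(\lambda_2)$. Your added care about the eigenvectors forming a basis (so that disclosing all $m_j^{*}$ equals full revelation) and the edge case $\lambda_2=n$ is sound but not a departure from the paper's argument.
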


The next proposition provides a sufficient condition under which the optimal signal reveals no information beyond the average state.

\begin{proposition}\label{prop_single_beta}
Suppose that $\widetilde{G} = K_n$. If the complement of $G$ is connected, the optimal signal informs the agents only of the average state for sufficiently high $\beta$.
\end{proposition}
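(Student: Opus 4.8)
The plan is to read this off almost directly from Proposition \ref{prop_balanced} together with the spectral characterization of connectivity of the complement graph recalled in Subsection \ref{sec:laplacian}. First I would recall that, under $\widetilde{G}=K_n$, Proposition \ref{prop_balanced} identifies the optimal signal as a subset of $\{m_1^*,\dots,m_n^*\}$ with $m_j^*=\bm{q}_j'\bm{x}$; that $m_1^*=\bm{1}_n'\bm{x}=n\bar{x}$ is always included (consistent with Proposition \ref{prop_avg}, and noting that informing the agents of $n\bar{x}$ is the same as informing them of $\bar{x}$); and that for $j=2,\dots,n$ one has $m_j^*\in\mathcal{I}(\varphi^{*})$ if and only if $\beta^{-1}\geq 2(n-\lambda_j)$, i.e. if and only if $\beta\leq\widehat{\beta}(\lambda_j)=1/(2(n-\lambda_j))$.

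The key observation is then purely spectral: by hypothesis the complement of $G$ is connected, so the fact stated in Subsection \ref{sec:laplacian} (the Laplacian spectral radius equals $n$ if and only if the complement of $G$ is disconnected) gives $\lambda_n<n$. Consequently $n-\lambda_j\geq n-\lambda_n>0$ for every $j=2,\dots,n$, so all the thresholds $\widehat{\beta}(\lambda_j)$ are finite, and since $\widehat{\beta}(\cdot)$ is strictly increasing in its argument on $[0,n)$, the largest of them is $\widehat{\beta}(\lambda_n)=1/(2(n-\lambda_n))$.

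Finally I would set $\bar{\beta}\equiv\widehat{\beta}(\lambda_n)<\infty$ and verify that for every $\beta>\bar{\beta}$ we have $\beta^{-1}<2(n-\lambda_n)\leq 2(n-\lambda_j)$ for all $j=2,\dots,n$; hence by \eqref{eq:inform} none of $m_2^*,\dots,m_n^*$ belongs to $\mathcal{I}(\varphi^{*})$, so the optimal signal informs the agents only of $m_1^*$, equivalently only of the average state $\bar{x}$. This proves the claim with the explicit bound ``$\beta>1/(2(n-\lambda_n))$.'' There is essentially no obstacle here beyond correctly locating the binding threshold $\widehat{\beta}(\lambda_n)$ and recording that it is finite; the only substantive input is the equivalence ``$\lambda_n<n\iff$ complement of $G$ is connected,'' which is already available, and the monotonicity of $\widehat{\beta}$, which is immediate from its definition.
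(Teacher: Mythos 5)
Your proposal is correct and follows essentially the same route as the paper: both arguments reduce the claim to the observation that connectivity of the complement of $G$ forces $\lambda_n<n$, so the binding threshold $\widehat{\beta}(\lambda_n)=1/(2(n-\lambda_n))$ is finite and, for $\beta$ above it, the characterization in Proposition \ref{prop_balanced} (equivalently, $\omega_2,\dots,\omega_n<0$) excludes every statistic except the average state. The only cosmetic difference is that the paper phrases the conclusion in terms of the signs of the eigenvalues $\omega_j$ of $V$, while you phrase it via the disclosure condition \eqref{eq:inform}; these are the same statement.
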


From Proposition \ref{prop_balanced}, the optimal signal reveals only the average state when $\beta > \widehat{\beta}(\lambda_{n})$. Even when the complement of $G$ is not connected (i.e., $\lambda_{n} = n$ or $\widehat{\beta}(\lambda_{n}) = \infty$) and, consequently, the optimal signal consists of multiple statistics even for large $\beta$, the gain from information design approaches the value achieved by disclosing only the average state.

\begin{proposition}\label{prop_gain}
Suppose that $\widetilde{G} = K_n$. As $\beta$ approaches infinity, the gain from information design converges to $\sigma^{2}$, which equals the limit of the gain from disclosing the average state.
\end{proposition}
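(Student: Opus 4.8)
The plan is to evaluate the objective $\mathbb{E}[\hat{\bm{x}}'V\hat{\bm{x}}]$ explicitly under the optimal signal characterized in Proposition~\ref{prop_balanced}, and separately under the signal that reveals only $\bar{x}$, then take $\beta \to \infty$ in both expressions and show they coincide at $-\mathbb{E}[\bm{x}'\bm{x}] + \sigma^2$. Since the gain from information design is defined as the principal's payoff minus the no-revelation payoff, and the no-revelation payoff is $-\mathbb{E}[\bm{x}'\bm{x}]$ (as $\hat{\bm{x}} = \mu\bm{1}_n$ contributes a constant that I will track), it suffices to show $\mathbb{E}[\hat{\bm{x}}'V\hat{\bm{x}}]$ converges to the same limit in both cases; the claimed limit $\sigma^2$ then follows from the average-state computation.

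First I would diagonalize. With $\widetilde{G} = K_n$ we have $\widetilde{L} = L^K$, so $\widetilde{L} - L = L^K - L$; using the spectral decomposition $L = \sum_j \lambda_j \bm{q}_j \bm{q}_j'$ (orthonormal $\bm{q}_j$, with $\bm{q}_1 = \bm{1}_n/\sqrt{n}$) and $L^K = \sum_{j\geq 2} n\, \bm{q}_j \bm{q}_j'$, both $B = (I_n + \beta L)^{-1}$ and $V = B'(I_n - 2\beta(\widetilde{L}-L))B$ are simultaneously diagonalized in the basis $\{\bm{q}_j\}$. The eigenvalue of $V$ along $\bm{q}_j$ is $\omega_j = (1 - 2\beta(n-\lambda_j))/(1+\beta\lambda_j)^2$ for $j \geq 2$, and $\omega_1 = 1$. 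When only $\bar{x}$ (equivalently, the component $\bm{q}_1'\bm{x}$) is disclosed, $\hat{\bm{x}}$ has nonzero variance only along $\bm{q}_1$, giving $\mathbb{E}[\hat{\bm{x}}'V\hat{\bm{x}}] = \omega_1 \cdot \mathbb{E}[(\bm{q}_1'\bm{x})^2] + \text{const} = \sigma^2 + n\mu^2 + (\text{terms from }\hat{\bm{x}}=\mu\bm{1}_n\text{ on the other coordinates})$; carefully bookkeeping the prior mean, the gain over no revelation is exactly $\sigma^2$, independent of $\beta$. For the optimal signal, by Proposition~\ref{prop_balanced} the disclosed statistics are those $m_j^* = \bm{q}_j'\bm{x}$ with $\omega_j \geq 0$, and the corresponding contribution to the gain is $\sum_{j: \omega_j \geq 0} \omega_j \sigma^2$ (the $j=1$ term giving $\sigma^2$, the rest nonnegative by construction).

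The core estimate is then to show $\sum_{j \geq 2,\ \omega_j \geq 0} \omega_j \to 0$ as $\beta \to \infty$. For each fixed $j \geq 2$, if $\lambda_j < n$ then $\omega_j < 0$ for $\beta$ large, so the term drops out; if $\lambda_j = n$ then $\omega_j = (1 - 0)/(1+\beta n)^2 = 1/(1+\beta n)^2 \to 0$. Since there are finitely many $j$, the finite sum of nonnegative terms each going to $0$ (or vanishing identically for large $\beta$) goes to $0$. Hence the gain from the optimal signal converges to $\sigma^2$, matching the average-state gain.

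The main obstacle is not the limit itself—once the diagonalization is in place the convergence is a one-line finite-sum argument—but rather setting up the bookkeeping of the prior mean $\mu\bm{1}_n$ cleanly so that ``gain from information design'' is unambiguously the difference from no revelation and equals $\sigma^2$ rather than some $\mu$-dependent quantity. I would handle this by working in deviations, writing $\bm{x} = \mu\bm{1}_n + \bm{\xi}$ with $\bm{\xi}\sim N(0,\sigma^2 I_n)$, noting $\hat{\bm{x}} = \mu\bm{1}_n + \hat{\bm{\xi}}$, and observing $V\bm{1}_n = \bm{1}_n$ (since $B\bm{1}_n = \bm{1}_n$ and $(\widetilde{L}-L)\bm{1}_n = 0$), so the cross terms and the $\mu^2\bm{1}_n'V\bm{1}_n = n\mu^2$ term are common to every signal and cancel in the gain. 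What remains is precisely $\sum_{j: \omega_j\geq 0}\omega_j \operatorname{Var}(\bm{q}_j'\bm{\xi}) = \sigma^2\sum_{j:\omega_j\geq 0}\omega_j$, and the argument above applies.
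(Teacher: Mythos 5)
Your proposal is correct and follows essentially the same route as the paper: diagonalize $V$ in the Laplacian eigenbasis, write the gain as $\sigma^{2}\sum_{j:\omega_j\geq 0}\omega_j$ with $\omega_1=1$, and observe that each $j\geq 2$ term either becomes negative (when $\lambda_j<n$) or tends to zero as $1/(1+\beta n)^2$ (when $\lambda_j=n$). The extra care you take with the prior mean via $V\bm{1}_n=\bm{1}_n$ is exactly the bookkeeping the paper compresses into subtracting $\bm{\mu}'V\bm{\mu}$.
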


This subsection characterized the optimal signal when $\beta$ is either low ($\beta \leq \widehat{\beta}(\lambda_2)$) or high ($\beta > \widehat{\beta}(\lambda_n)$). However, when $\beta$ is at an intermediate level, the choice of statistics depends crucially on the topology of the incentive graph $G$. The next subsection provides examples to illustrate how the structure of $G$ influences the design of the optimal public signal.

\subsection{Examples}\label{sec:example}
This subsection illustrates the optimal signal identified in Proposition \ref{prop_balanced} using several examples. The Laplacian spectra of the examples are drawn from \cite{moh:1991book}.

\subsubsection{Star graphs}

A graph is said to be a star $S_{n}$ if a single vertex has a degree of $n-1$, while all other vertices have a degree of 1. This structure represents a centralized organization where one agent, the ``core agent,'' acts as a hub, indirectly connecting all other agents, who only coordinate through the core. Such a setup is typical of hierarchical networks, such as a team leader overseeing multiple members who do not interact directly. For instance, in a research lab, a principal investigator (core agent) may coordinate among researchers (periphery agents). However, the success of the project often depends on collaboration among researchers themselves, beyond their alignment with the principal investigator.

When the incentive graph corresponds to a star graph, the optimal signal takes one of two forms depending on the level of coordination importance:
\begin{proposition}\label{prop_star}
Suppose that $\widetilde{G} = K_n$ and $G = S_{n}$ such that $g_{1j} = 1$ and $g_{ij} = 0$ for all $i, j \neq 1$. If $\beta > \frac{1}{2(n-1)}$, the optimal signal informs the agents of the local state of the core agent $x_{1}$ and the average state of the other agents $\bar{x}_{-1} \equiv \frac{1}{n-1}\sum_{j=2}^{n}x_{j}$. Otherwise, full revelation is optimal.
\end{proposition}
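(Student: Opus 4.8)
The plan is to apply Proposition \ref{prop_balanced} essentially verbatim, using (i) the known Laplacian spectrum of the star graph, (ii) an explicit computation of the eigenvector associated with the largest Laplacian eigenvalue, and (iii) a short change-of-basis argument showing that the two surviving statistics span the same subspace as $\{x_1,\bar{x}_{-1}\}$. Since Proposition \ref{prop_balanced} reduces everything to the eigenvalues and eigenvectors of $L$, the proof is a specialization rather than a new argument.

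First I would record the Laplacian spectrum of $S_n$. Writing $L = D_G - G$ with vertex $1$ as the hub, one checks that $\lambda_1 = 0$, $\lambda_2 = \cdots = \lambda_{n-1} = 1$, and $\lambda_n = n$. The eigenvector for $\lambda_1$ is $\bm{q}_1 = \bm{1}_n$; solving $L\bm{q} = n\bm{q}$ row by row forces $q_i$ to be equal for all $i \geq 2$ and $q_1 = -(n-1)q_i$, so $\bm{q}_n \propto (-(n-1), 1, \dots, 1)'$; and the eigenspace for eigenvalue $1$ is $\{\bm{v}: v_1 = 0,\ \sum_{j\geq 2} v_j = 0\}$, which is exactly the orthogonal complement of $\mathrm{span}\{\bm{1}_n, \bm{q}_n\}$.

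Next I would invoke Proposition \ref{prop_balanced}. The statistic $m_1^* = \bm{q}_1'\bm{x} \propto \bar{x}$ is always disclosed (cf.\ Proposition \ref{prop_avg}); for $j = n$, since $\lambda_n = n$ we have $2(n-\lambda_n) = 0 \leq \beta^{-1}$ for every $\beta > 0$, so $m_n^*$ is always disclosed; and for $j = 2, \dots, n-1$ the condition $\beta^{-1} \geq 2(n-\lambda_j) = 2(n-1)$ is equivalent to $\beta \leq 1/(2(n-1))$. This gives exactly two regimes. If $\beta \leq 1/(2(n-1))$, all of $m_1^*, \dots, m_n^*$ are disclosed; since $\{\bm{q}_j\}_{j=1}^n$ is a basis of $\mathbb{R}^n$, this reveals $\bm{x}$ entirely, i.e., full revelation (consistent with Corollary \ref{corollary_full_beta}, as $\widehat{\beta}(\lambda_2) = \widehat{\beta}(1) = 1/(2(n-1))$). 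If $\beta > 1/(2(n-1))$, only $m_1^*$ and $m_n^*$ survive.

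It remains to identify $\{m_1^*, m_n^*\}$ with $\{x_1, \bar{x}_{-1}\}$. Here I would use that $\mathcal{I}(\varphi)$ is closed under linear combinations, so disclosing a set of linear statistics is equivalent to disclosing every element of their span; it therefore suffices to verify $\mathrm{span}\{\bm{1}_n, \bm{q}_n\} = \mathrm{span}\{\bm{e}_1, \bm{1}_n - \bm{e}_1\}$, which follows from $\bm{e}_1 = \tfrac{1}{n}(\bm{1}_n - \bm{q}_n)$ and $\bm{1}_n - \bm{e}_1 = \tfrac{1}{n}\bigl((n-1)\bm{1}_n + \bm{q}_n\bigr)$. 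Since $x_1 = \bm{e}_1'\bm{x}$ and $\bar{x}_{-1} = \tfrac{1}{n-1}(\bm{1}_n - \bm{e}_1)'\bm{x}$, the optimal signal in this regime informs the agents precisely of $x_1$ and $\bar{x}_{-1}$. I do not expect a real obstacle: the only points requiring care are the correctness of the star spectrum and hub eigenvector, and the bookkeeping in the change of basis—in particular recognizing that the $\lambda = 1$ eigenspace is orthogonal to $\mathrm{span}\{\bm{1}_n, \bm{q}_n\}$, so that ``disclose $m_1^*$ and $m_n^*$ only'' is literally ``disclose the two-dimensional subspace spanned by $\bm{1}_n$ and $\bm{q}_n$.''
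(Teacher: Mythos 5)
Your proof is correct and follows the same route as the paper: the paper's proof simply records the Laplacian spectrum $0^{(1)},\,1^{(n-2)},\,n^{(1)}$ of $S_n$ with the eigenvectors $(1,\bm{1}_{n-1}')'$, $(0,\bm{y}_j')'$, $((n-1),-\bm{1}_{n-1}')'$ and leaves the application of Proposition \ref{prop_balanced} implicit. You have merely made explicit the threshold computation ($\widehat{\beta}(1)=1/(2(n-1))$, $\widehat{\beta}(n)=\infty$) and the change of basis identifying $\mathrm{span}\{\bm{1}_n,\bm{q}_n\}$ with $\mathrm{span}\{\bm{e}_1,\bm{1}_n-\bm{e}_1\}$, both of which check out.
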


As $\beta$ increases, the optimal signal shifts from full disclosure to partial disclosure. Since every agent has an incentive to coordinate with the core agent, the core agent facilitates coordination between all agents. Consequently, the optimal signal necessarily informs the agents of the core agent's state in addition to the overall average state.

\subsubsection{Ring graphs}

A ring graph $R_n$ is a connected regular graph where each node has a degree of 2. Node $i$ is adjacent to nodes $i-1$ and $i+1$.%
\footnote{Nodes $1$ and $n$ are adjacent to each other.}
This structure often represents urban neighborhoods where households coordinate with immediate neighbors on local issues such as utilities or security. However, the city administration (principal) benefits from uniform coordination among all households to achieve broader goals, such as waste management or disaster preparedness, aligning with a complete graph as the efficient structure.

The Laplacian spectrum of the ring graph for $j = 1, 2, \dots, n$ is given by:
\begin{equation}
\lambda_j = 2 - 2 \cos \left( \frac{2 \pi k_{(j)} }{n} \right),
\end{equation}
where $k_{(1)} = 0$, $k_{(2)} = k_{(3)} = 1$, and so on. For the final case, $k_{(n)} = (n-1)/2$ when $n$ is odd, and $k_{(n)} = k/2$ when $n$ is even. 

As $\beta^{-1}$ increases, the principal gradually adds two statistics to the publication at each step. In each phase, agents are informed symmetrically, meaning that the variance of $\hat{x}_i$ is identical for all $i$. This contrasts with the optimal policy for the star graph, where adjustments are asymmetric and exhibit a threshold-like response to the adaptation/coordination trade-off. By comparison, the ring graph implements symmetric and gradual adjustments in response to changes in $\beta$.

\subsubsection{Social networks with isolated cliques}

Suppose that the incentive graph is composed of two completely connected clusters, \textit{cliques}, that are isolated from each other, represented by the following adjacency matrix:
\begin{equation}\nonumber
G = \begin{bmatrix} K_m & O_{n-m} \\ O_{n-m} & K_{n-m} \end{bmatrix}.
\end{equation}
The upper-left block of $G$, $K_m$, represents the clique of size $m$, while the bottom-right block, $K_{n-m}$, represents the other clique of size $n-m$. The off-diagonal zero blocks, $O_{n-m}$, indicate the absence of connections between the two clusters. Assume $m \leq n-m$.

This structure arises in social networks where individuals belong to distinct, tightly-knit communities with no direct interaction between groups. For example, it could represent two departments within an organization that operate independently but are overseen by a central authority (the principal), who seeks to align their activities for organizational success.

The Laplacian spectrum of this graph is $(0^2, m^{m-1}, (n-m)^{n-m-1})$, where the superscripts indicate multiplicities. The optimal public signal involves four distinct phases: the minimum transparency policy, two intermediate phases, and full transparency.

\begin{description}
	\item[Phase 1 ($\beta \in [0,\widehat{\beta}(n-m)$):] Only the overall average is disclosed.
	\item[Phase 2 ($\beta \in [\widehat{\beta}(n-m), \widehat{\beta}(m))$):] 
	The principal reveals information only about the local states in the larger cluster, in addition to the overall average. To ensure coordination, the principal withholds information about the cluster-average of the larger cluster, ensuring that the disclosed statistics have a zero mean within that cluster.  
	\item[Phase 3 ($\beta \in [\widehat{\beta}(m), \widehat{\beta}(0))$):]
	 The principal discloses information about the smaller cluster, while keeping the agents uninformed about its cluster-average to preserve its zero mean.
	 \item[Phase 4 ($\beta \in [\widehat{\beta}(0), \infty)$):] 
	 The principal reveals the cluster averages alongside the Phase 3 statistics, thereby achieving full transparency.
\end{description}

The intuition behind the optimal policy is as follows. Agents naturally have strong incentives to coordinate within their respective clusters. Thus, the principal's primary focus is to induce coordination between the clusters. To achieve this, the principal avoids revealing the difference between the clusters until $\beta$ exceeds the threshold identified in Proposition \ref{prop_full_beta}. By doing so, the principal carefully balances adaptation and coordination incentives to maximize organizational performance.%
\footnote{This result aligns with spectral clustering methods used to identify cluster structures within a given network (\citealp{von2007tutorial}). The final statistic disclosed in Phase 4 is constructed using the Fiedler vector (i.e., the eigenvector corresponding to the algebraic connectivity), assigning positive weights to the nodes in one clique and negative weights to those in the other. By combining the information of the overall average, this statistic reveals the cluster averages.}

\subsection{Signal informativeness}\label{sec:informativeness}

Under the optimal signal presented in Proposition \ref{prop_balanced}, the variance-covariance matrix of the posterior expectations is given by 
\begin{equation}\label{eq:cov_optimal}
	\text{var}(\hat{\bm{x}}) = \sigma^2 \bar{Q}(\bar{Q}' \bar{Q})^{-1}\bar{Q}',
\end{equation}
where $\bar{Q} = \begin{bmatrix} \bm{q}_1&\bm{q}_n&\cdots &\bm{q}_r \end{bmatrix}$ is the $n \times r$ matrix containing the Laplacian eigenvectors whose corresponding eigenvalues (except for $\bm{q}_1$) satisfy \eqref{eq:inform}. The informativeness of the optimal signal regarding each local state $x_i$ is commonly quantified by the variability of conditional expectations. Specifically, the variances of the conditional expectations, i.e., the diagonal elements of \eqref{eq:cov_optimal}, are proportional to the sum of the squared eigenvector components.%
\footnote{Here, the eigenvectors are implicitly normalized to unit length, $\bm{q}_i' \bm{q}_i = 1$ for all $i$, with orthogonality $\bm{q}_i' \bm{q}_j = 0$ for $i \neq j$.}
\begin{proposition}\label{prop_informativeness}
Suppose that the optimal public signal discloses $m_j^* = \bm{q}_j' \bm{x}$ for all $j \in \mathcal{I}^*$, where $\mathcal{I}^* = \{ j: m_{j}^{*} \in \mathcal{I}(\varphi^{*}) \}$ denotes the index set of disclosed statistics.
Then, the signal's informativeness regarding the local state $x_i$ is given by 
\begin{equation}
\frac{\mathrm{var}(\hat{x}_i)}{\mathrm{var}(x_i)} = \sum_{j \in \mathcal{I}^*} \bigl[q_j(i)\bigr]^2.
\end{equation}
\end{proposition}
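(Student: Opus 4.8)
The plan is to compute $\mathrm{var}(\hat{x}_i)$ directly from the covariance formula \eqref{eq:cov_optimal} and show its $i$th diagonal entry equals $\sigma^2 \sum_{j \in \mathcal{I}^*} [q_j(i)]^2$. First I would note that, because the Laplacian eigenvectors $\{\bm{q}_j\}$ are symmetric and can be chosen orthonormal (as stated in the footnote preceding the proposition), the matrix $\bar{Q}$ whose columns are exactly the disclosed eigenvectors $\{\bm{q}_j : j \in \mathcal{I}^*\}$ has orthonormal columns, so $\bar{Q}'\bar{Q} = I_{|\mathcal{I}^*|}$. This collapses $\bar{Q}(\bar{Q}'\bar{Q})^{-1}\bar{Q}'$ to the orthogonal projection $\bar{Q}\bar{Q}'$, and \eqref{eq:cov_optimal} becomes $\mathrm{var}(\hat{\bm{x}}) = \sigma^2 \bar{Q}\bar{Q}' = \sigma^2 \sum_{j \in \mathcal{I}^*} \bm{q}_j \bm{q}_j'$.

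Next I would extract the $i$th diagonal entry: $[\bar{Q}\bar{Q}']_{ii} = \sum_{j \in \mathcal{I}^*} [\bm{q}_j \bm{q}_j']_{ii} = \sum_{j \in \mathcal{I}^*} q_j(i)^2$, where $q_j(i)$ denotes the $i$th component of $\bm{q}_j$. Dividing both sides by $\mathrm{var}(x_i) = \sigma^2$ (from the prior $\bm{x} \sim N(\mu\bm{1}_n, \sigma^2 I_n)$) yields the claimed identity. For completeness I would briefly recall why \eqref{eq:cov_optimal} holds: under the optimal signal of Proposition \ref{prop_balanced}, the agents learn exactly the linear functionals $m_j^* = \bm{q}_j'\bm{x}$ for $j \in \mathcal{I}^*$, so $\hat{\bm{x}} = \mathbb{E}[\bm{x} \mid \bar{Q}'\bm{x}]$ is the Gaussian conditional expectation, which for a jointly normal vector is the linear least-squares projection $\bar{Q}(\bar{Q}'\bar{Q})^{-1}\bar{Q}'\bm{x}$ (up to the mean shift, which does not affect the variance); its covariance is then $\sigma^2 \bar{Q}(\bar{Q}'\bar{Q})^{-1}\bar{Q}'$ since $\mathrm{var}(\bm{x}) = \sigma^2 I_n$.

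I would remark that $\bm{q}_1 = \bm{1}_n/\sqrt{n}$ is always among the disclosed eigenvectors (Proposition \ref{prop_avg} guarantees $\bar{x} \in \mathcal{I}(\varphi^*)$, equivalently $1 \in \mathcal{I}^*$), so the sum always contains the term $q_1(i)^2 = 1/n$; this is consistent with the average state being disclosed regardless. The main obstacle here is not computational — the algebra is routine once orthonormality is invoked — but rather making sure the reduction from the general formula \eqref{eq:cov_optimal} to the projection $\bar{Q}\bar{Q}'$ is airtight, i.e., that the eigenvectors indexing $\bar{Q}$ really can be taken mutually orthonormal even when some Laplacian eigenvalue $\lambda_j$ has multiplicity greater than one. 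For repeated eigenvalues one chooses an orthonormal basis of the corresponding eigenspace, and since the disclosure condition \eqref{eq:inform} depends only on the eigenvalue (not on the particular basis choice within an eigenspace), the set $\mathcal{I}^*$ is well-defined at the level of eigenspaces, so $\bar{Q}$ can always be arranged to have orthonormal columns; the diagonal sum $\sum_{j \in \mathcal{I}^*} q_j(i)^2$ is then exactly the $i$th diagonal of the eigenspace projection and is independent of the basis chosen. Once this point is noted, the proof is complete.
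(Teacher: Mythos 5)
Your proposal is correct and follows essentially the same route as the paper's proof: both derive $\mathrm{var}(\hat{\bm{x}}) = \sigma^2\bar{Q}(\bar{Q}'\bar{Q})^{-1}\bar{Q}'$ from the Gaussian conditional expectation, invoke orthonormality of the eigenvectors to collapse this to $\sigma^2\sum_{j\in\mathcal{I}^*}\bm{q}_j\bm{q}_j'$, and read off the $i$th diagonal entry. Your additional remarks on eigenvalue multiplicity and on $\bm{q}_1$ always being disclosed are sound refinements rather than a different argument.
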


The two-step characterization in Proposition \ref{prop_balanced} ensures that the informativeness of the optimal signal for any local state never decreases as $\beta^{-1}$ increases. However, the effects of adding or removing links in the incentive graph are nontrivial, as they alter the network structure, thereby changing the Laplacian eigenvectors and their corresponding eigenvalues.%
\footnote{For example, in some cases, the variance of $\hat{x}_i$ may decrease even when agent $i$ gains an additional link to agent $j$ in the incentive graph.}

From a network analysis perspective, \cite{shimono2025lec} show that the variances of expectations under the optimal signal capture the centrality of nodes in the network. The optimal signal reveals more precise information about the local states of highly connected agents compared to less connected ones. They also provide properties applicable to this paper. For instance, the symmetry property states that if two agents have identical sets of neighbors in the incentive graph, the optimal signal provides the same informativeness about their local states.

\subsection{Plus-one policy}

This subsection considers a simple---albeit suboptimal---\emph{plus-one policy}, in which the principal \emph{publicly} discloses the local state of a single agent $x_i$ in addition to the average state $\bar{x}$.%
\footnote{The result in this subsection remains unchanged if the principal discloses only a single local state without including the average state.}
The goal is to determine which agent should be targeted to maximize the organizational payoff under this policy.

Suppose that the principal discloses $(x_i, \bar{x})$, where $x_i$ is the fully disclosed local state of agent $i$, and $\bar{x}$ is the average state. Let $S = \mathrm{var}(\hat{\bm{x}})$ denote the variance-covariance matrix of the posterior expectations under this policy. 
Using \eqref{eq:payoff_V}, the improvement in the organizational payoff compared to the minimal transparency policy is given by
\begin{equation}\label{eq:gain_plusone}
\mathrm{tr}(VS) - \sigma^2 = 
\frac{\sigma^2 \,n}{n-1} \sum_{j=2}^{n}\omega_j \,[q_j(i)]^2,
\end{equation}
where $q_j(i)$ is the $i$th component of the Laplacian eigenvector $\bm{q}_j$, and $\omega_j$ is the eigenvalue of $V$ corresponding to $\bm{q}_j$:
\begin{equation}\label{eq:gain_omega}
\omega_j = \frac{1 - 2\beta (n - \lambda_j)}{\bigl(1 + \beta \lambda_j\bigr)^2}.
\end{equation}
Note that these eigenvalues satisfy $\omega_n \geq \omega_{n-1} \geq \dots \geq \omega_2$.%
\footnote{The sign of $\omega_j$ is consistent with the condition in \eqref{eq:inform}.}

\begin{proposition}\label{prop_plusone}
Under the plus-one policy, the principal should taret the agent \(i\) that maximizes 
$\sum_{j=2}^{n} \omega_j \, [q_j(i)]^2$.
\end{proposition}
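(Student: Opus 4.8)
The plan is to derive the gain formula \eqref{eq:gain_plusone} explicitly; the ranking statement in the proposition then follows at once, because the prefactor $\frac{\sigma^2 n}{n-1}$ is positive and does not depend on $i$. The central object is the posterior variance-covariance matrix $S=\mathrm{var}(\hat{\bm{x}})$ induced by the signal that reveals $(x_i,\bar x)$. Since $\bm{x}\sim N(\mu\bm{1}_n,\sigma^2 I_n)$ and this signal is the linear observation $A\bm{x}$ with rows $\bm{e}_i'$ and $\tfrac1n\bm{1}_n'$, the Gaussian updating formula gives $S=\sigma^2 A'(AA')^{-1}A=\sigma^2 P$, where $P$ is the orthogonal projection onto $\mathrm{span}\{\bm{e}_i,\bm{1}_n\}$. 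Because $\bm{e}_i$ is not orthogonal to $\bm{1}_n$, I would write $\bm{e}_i=\tfrac1n\bm{1}_n+\bm{w}$ with $\bm{w}\equiv\bm{e}_i-\tfrac1n\bm{1}_n\perp\bm{1}_n$ and $\|\bm{w}\|^2=\tfrac{n-1}{n}$, so that $P=\tfrac1n\bm{1}_n\bm{1}_n'+\tfrac{n}{n-1}\bm{w}\bm{w}'$.

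Next I would compute $\mathrm{tr}(VS)=\sigma^2\,\mathrm{tr}(VP)$ using the eigenstructure of $V$ under $\widetilde{G}=K_n$. Here $\widetilde{L}=nI_n-\bm{1}_n\bm{1}_n'$, and, exactly as in the derivation of \eqref{eq:gain_omega}, the Laplacian eigenvectors $\bm{q}_j$ of $G$ diagonalize $V$ with $V\bm{q}_1=\bm{q}_1$ (so $\omega_1=1$) and $V\bm{q}_j=\omega_j\bm{q}_j$ for $j\ge 2$. Splitting $\mathrm{tr}(VP)$ over the two rank-one pieces of $P$: the $\bm{1}_n$-term gives $\tfrac1n\bm{1}_n'V\bm{1}_n=\tfrac1n\cdot n=1$, and the $\bm{w}$-term gives $\tfrac{n}{n-1}\bm{w}'V\bm{w}$. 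Expanding $\bm{w}=\sum_{j=2}^n q_j(i)\,\bm{q}_j$ in the orthonormal eigenbasis (using $\bm{e}_i'\bm{q}_j=q_j(i)$, with the $\bm{q}_1$-component being precisely what was subtracted off) yields $\bm{w}'V\bm{w}=\sum_{j=2}^n\omega_j[q_j(i)]^2$. Hence $\mathrm{tr}(VS)=\sigma^2+\frac{\sigma^2 n}{n-1}\sum_{j=2}^n\omega_j[q_j(i)]^2$, and subtracting the minimum-transparency value $\sigma^2$ (which is $\mathrm{tr}(VS)$ evaluated at $S=\tfrac{\sigma^2}{n}\bm{1}_n\bm{1}_n'$) produces \eqref{eq:gain_plusone}.

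To conclude, by \eqref{eq:objective} the principal's expected payoff equals $\mathrm{tr}(VS)$ plus terms that do not vary with the signal, so maximizing the payoff over the targeted agent is equivalent to maximizing $\mathrm{tr}(VS)-\sigma^2$, i.e.\ maximizing $\sum_{j=2}^n\omega_j[q_j(i)]^2$, which is the claim. I do not expect any step to be genuinely difficult; the one point requiring care is the non-orthogonality of $\bm{e}_i$ and $\bm{1}_n$, which is what generates the factor $\tfrac{n}{n-1}$ and restricts the sum to $j\ge 2$---getting the weights in the projection $S$ right is the crux. One should also note that the expression remains a valid, stable ranking criterion even when some $\omega_j<0$, since it is simply a weighted sum of squared eigenvector components.
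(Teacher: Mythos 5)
Your proposal is correct and follows essentially the same route as the paper: both diagonalize $V$ via the Laplacian eigenvectors $\{\bm{q}_j\}$ (with $\omega_1=1$ on $\bm{1}_n$) and evaluate $\mathrm{tr}(VS)$ against the posterior covariance induced by disclosing $(x_i,\bar{x})$, the paper by explicitly multiplying out $Q'SQ$ for the block form of $S$ and you by writing $S=\sigma^2 P$ with $P=\tfrac1n\bm{1}_n\bm{1}_n'+\tfrac{n}{n-1}\bm{w}\bm{w}'$ and expanding $\bm{w}=\bm{e}_i-\tfrac1n\bm{1}_n$ in the eigenbasis. Your rank-one decomposition is a slightly cleaner way to organize the identical computation, and your derivation of $S$ as $\sigma^2$ times the orthogonal projection onto $\mathrm{span}\{\bm{e}_i,\bm{1}_n\}$ correctly reproduces the matrix the paper simply asserts.
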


This proposition shows that the principal's payoff gain is a weighted average of the eigenvalues \(\{\omega_n, \omega_{n-1}, \dots, \omega_2\}\), with weights determined by the squared components of the corresponding Laplacian eigenvectors. Thus, the principal achieves higher organizational gains by targeting the agent whose squared eigenvector components $\{ \bigl[q_n(i)\bigr]^2, \bigl[q_{n-1}(i)\bigr]^2, \dots \}$ are larger for the dominant eigenvalues \(\{\omega_n, \omega_{n-1}, \dots\}\). 

This result aligns with the insights from Subsection~\ref{sec:informativeness}, where the principal constructs public information by focusing on agents occupying the most ``central'' positions in the network in terms of Laplacian eigenvector centrality (\citealp{shimono2025lec}).%
\footnote{Indeed, \eqref{eq:gain_plusone} generalizes the notion of Laplacian eigenvector centrality by allowing the weights on the eigenvectors to be real-valued rather than strictly binary.}

In summary, the plus-one policy provides a practical framework for evaluating how targeting individual agents for information disclosure can enhance organizational performance. While suboptimal compared to more sophisticated designs, it illustrates how spectral properties of the network---particularly the Laplacian eigenvectors and their corresponding eigenvalues---can guide the principal's information design to improve collective outcomes.

\section{Extensions}\label{sec:extension}
\subsection{Correlation in local states}\label{sec:correlation}
In the main part of this paper, I focused on the case where the state variables are independent and examined the optimal public signal using the eigenvectors and eigenvalues of $V$ (Theorem \ref{theorem}). When the state variables are correlated, the optimal signal generally depends on the distribution of the state. The following theorem generalizes Theorem \ref{theorem} to account for correlations in the state.

\begin{theorem}[\citealp{tamura2018bayesian}]\label{thm_general}
Suppose that $\bm{x}$ is normally distributed according to $N(\bm{\mu},\Sigma)$ with positive definite covariance matrix $\Sigma$.
Suppose that $\Sigma^{\frac{1}{2}}V\Sigma^{\frac{1}{2}}$ has $r$ nonnegative eigenvalues $\omega_{1}, \dots, \omega_{r}\geq 0$ and $n-r$ negative eigenvalues $\omega_{r+1}, \dots, \omega_{n} < 0$.
Then, the optimal public signal is given by
\begin{equation}
\widetilde{\bm{m}}^{*} 
=\begin{bmatrix} \widetilde{m}_{1}^{*} \\ \vdots \\ \widetilde{m}_{r}^{*} \end{bmatrix}
= Z_{+}' \Sigma^{-\frac{1}{2}} \bm{x},	
\end{equation}
where $Z_{+}=[\bm{z}_{1}, \dots, \bm{z}_{r}]$ is a $n \times r$ matrix that contains the eigenvectors associated with the nonnegative eigenvalues of $\Sigma^{\frac{1}{2}}V \Sigma^{\frac{1}{2}}$.
\end{theorem}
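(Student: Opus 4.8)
The plan is to reduce Theorem \ref{thm_general} to Theorem \ref{theorem} by a change of variables that whitens the state. First I would define $\bm{y} \equiv \Sigma^{-1/2}(\bm{x} - \bm{\mu})$, so that $\bm{y} \sim N(\bm{0}, I_n)$ has independent standard normal components. The key observation is that any public signal $\varphi$ on $\bm{x}$ corresponds bijectively to a public signal $\widehat{\varphi}$ on $\bm{y}$ (since $\Sigma^{1/2}$ is invertible, observing a signal about $\bm{x}$ is informationally equivalent to observing the corresponding signal about $\bm{y}$), and the posterior expectations transform linearly: $\widehat{\bm{y}} \equiv \mathbb{E}[\bm{y} \mid m] = \Sigma^{-1/2}(\hat{\bm{x}} - \bm{\mu})$, equivalently $\hat{\bm{x}} = \Sigma^{1/2}\widehat{\bm{y}} + \bm{\mu}$.

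Next I would rewrite the principal's objective \eqref{eq:objective} in terms of $\widehat{\bm{y}}$. Substituting $\hat{\bm{x}} = \Sigma^{1/2}\widehat{\bm{y}} + \bm{\mu}$ into $\mathbb{E}[\hat{\bm{x}}' V \hat{\bm{x}}]$ and using the law of iterated expectations (so $\mathbb{E}[\widehat{\bm{y}}] = \bm{0}$ and the cross terms involving $\bm{\mu}$ are handled by noting $\mathbb{E}[\widehat{\bm{y}}]=\bm 0$), the part of the objective that depends on the signal becomes $\mathbb{E}[\widehat{\bm{y}}' (\Sigma^{1/2} V \Sigma^{1/2}) \widehat{\bm{y}}]$ up to an additive constant independent of $\varphi$. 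Hence maximizing the principal's payoff over signals on $\bm{x}$ is equivalent to maximizing $\mathbb{E}[\widehat{\bm{y}}' \widetilde{V} \widehat{\bm{y}}]$ over signals on $\bm{y}$, where $\widetilde{V} \equiv \Sigma^{1/2} V \Sigma^{1/2}$ is symmetric. Since $\bm{y}$ is i.i.d.\ standard normal, Theorem \ref{theorem} applies directly: the optimal signal on $\bm{y}$ consists of the statistics $\bm{z}_j' \bm{y}$ for the eigenvectors $\bm{z}_j$ of $\widetilde{V}$ with nonnegative eigenvalues $\omega_j$. Translating back via $\bm{y} = \Sigma^{-1/2}(\bm{x}-\bm{\mu})$ gives the optimal statistics $\bm{z}_j' \Sigma^{-1/2}(\bm{x} - \bm{\mu})$, i.e., $\widetilde{\bm{m}}^* = Z_+' \Sigma^{-1/2}\bm{x}$ after dropping the constant shift $Z_+'\Sigma^{-1/2}\bm{\mu}$ (which carries no information).

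The main obstacle is making the informational-equivalence step fully rigorous: one must verify that the set of attainable joint distributions of $(\bm{x}, \hat{\bm{x}})$ — equivalently, the set of attainable posterior-covariance matrices $\mathrm{var}(\hat{\bm{x}})$ — maps bijectively under $S \mapsto \Sigma^{-1/2} S \Sigma^{-1/2}$ onto the corresponding set for $\bm{y}$, and that this correspondence preserves the optimization. This amounts to checking that the Bayesian-plausibility (mean-preserving) constraint is invariant under the invertible linear reparametrization, which is immediate once one recalls that a signal is just a garbling of the state and linear bijections commute with garbling. A secondary technical point is confirming that the constant terms generated by the affine shift $\bm{\mu}$ genuinely drop out of the objective, which follows because $\mathbb{E}[\hat{\bm{x}}] = \bm{\mu}$ for every feasible signal, so those terms do not vary with $\varphi$. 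Given these checks, the result follows from Theorem \ref{theorem} applied in the whitened coordinates.
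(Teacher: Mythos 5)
Your reduction is correct, but note that the paper itself offers no proof of this statement to compare against: both Theorem \ref{theorem} and Theorem \ref{thm_general} are imported verbatim from \cite{tamura2018bayesian}, and neither appears in Appendix \ref{sec:proof}. What you have written is therefore a genuine derivation of the general (correlated, non-centered) case from the i.i.d.\ case, and it is sound. The whitening map $\bm{y}=\Sigma^{-1/2}(\bm{x}-\bm{\mu})$ is an invertible affine transformation, so signal structures on $\bm{x}$ and on $\bm{y}$ are in bijection and posteriors transform by linearity of conditional expectation, $\hat{\bm{x}}=\Sigma^{1/2}\widehat{\bm{y}}+\bm{\mu}$; the cross terms in $\mathbb{E}[\hat{\bm{x}}'V\hat{\bm{x}}]$ vanish because $\mathbb{E}[\widehat{\bm{y}}]=\bm{0}$ for every feasible signal, leaving $\mathbb{E}[\widehat{\bm{y}}'\Sigma^{1/2}V\Sigma^{1/2}\widehat{\bm{y}}]$ plus a signal-independent constant; and the feasible set of posterior covariances $\{S:0\preceq S\preceq\Sigma\}$ maps onto $\{\tilde S:0\preceq\tilde S\preceq I_n\}$ under $S\mapsto\Sigma^{-1/2}S\Sigma^{-1/2}$, so the two optimization problems are equivalent. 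Applying Theorem \ref{theorem} to $\bm{y}$ with payoff matrix $\Sigma^{1/2}V\Sigma^{1/2}$ and translating back (the additive constant $Z_+'\Sigma^{-1/2}\bm{\mu}$ carries no information) yields exactly $\widetilde{\bm{m}}^{*}=Z_+'\Sigma^{-1/2}\bm{x}$. The only caveat is that your argument is conditional on Theorem \ref{theorem}, which the paper also takes as given; as a self-contained proof of the persuasion problem it would still require the underlying argument of \cite{tamura2018bayesian} for the i.i.d.\ case.
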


It is worth noting that whenever the variance matrix $\Sigma$ of the local states is nonsingular, $\Sigma^{\frac{1}{2}}V\Sigma^{\frac{1}{2}}$ retains the same number of positive, negative, and zero eigenvalues as $V$.%
\footnote{This property is known as Sylvester's law of inertia.}
This implies that changes in the state distribution do not affect the dimensionality of the optimal public signal (i.e., the number of statistics to be disclosed).

In this subsection, I show that all the results presented in this paper hold when the distribution of the local states exhibits symmetric correlation. Formally, suppose that for every $i \neq j$, ${\rm cov}(x_{i},x_{j})=\rho \sigma^{2}$. In other words, the distribution of the local states is symmetric if the variance-covariance matrix of the state is expressed as 
\begin{equation}
\Sigma= \sigma^{2} \begin{bmatrix} 
1 & \rho & \ldots & \rho \\
\rho & 1 & \ldots & \rho \\
\vdots & \vdots & \ddots & \vdots \\
\rho & \rho & \ldots & 1 
\end{bmatrix}=\sigma^{2}\left[ (1-\rho) I_{n} + \rho \mathbf{1}_{n}\mathbf{1}_{n}' \right].	
\end{equation}

I begin by examining the eigenvectors and eigenvalues of $\Sigma$.
\begin{lemma}\label{lem_eigen_rhosigma}
Suppose that $\Sigma = \sigma^{2}[(1-\rho) I_{n} + \rho \mathbf{1}_{n}\mathbf{1}_{n}'] $.
Then, $\Sigma \mathbf{1}_{n}=\sigma^{2}(1-\rho+n\rho)\mathbf{1}_{n}$, and for any vector $\bm{q}$ of length $n$ such that $\mathbf{1}_{n}' \bm{q}=0$, $\Sigma \bm{q}= \sigma^{2} (1-\rho ) \bm{q}$.
\end{lemma}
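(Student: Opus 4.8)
The statement to prove is Lemma~\ref{lem_eigen_rhosigma}, which identifies the eigenvectors and eigenvalues of the symmetric correlation matrix $\Sigma = \sigma^{2}[(1-\rho)I_n + \rho \mathbf{1}_n\mathbf{1}_n']$. This is a direct linear-algebra computation, so the plan is simply to apply $\Sigma$ to the two types of vectors named in the statement and simplify.

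\textbf{Step 1: The all-ones vector.} I would compute $\Sigma \mathbf{1}_n = \sigma^2[(1-\rho)I_n + \rho \mathbf{1}_n\mathbf{1}_n']\mathbf{1}_n = \sigma^2[(1-\rho)\mathbf{1}_n + \rho \mathbf{1}_n(\mathbf{1}_n'\mathbf{1}_n)]$. Since $\mathbf{1}_n'\mathbf{1}_n = n$, this equals $\sigma^2[(1-\rho) + n\rho]\mathbf{1}_n = \sigma^2(1-\rho+n\rho)\mathbf{1}_n$, which is exactly the claimed identity, showing $\mathbf{1}_n$ is an eigenvector with eigenvalue $\sigma^2(1-\rho+n\rho)$.

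\textbf{Step 2: Vectors orthogonal to $\mathbf{1}_n$.} For any $\bm{q}$ with $\mathbf{1}_n'\bm{q} = 0$, I would compute $\Sigma \bm{q} = \sigma^2[(1-\rho)\bm{q} + \rho \mathbf{1}_n(\mathbf{1}_n'\bm{q})] = \sigma^2[(1-\rho)\bm{q} + \rho \mathbf{1}_n \cdot 0] = \sigma^2(1-\rho)\bm{q}$, giving the second claimed identity. Together these exhaust an orthogonal basis of $\mathbb{R}^n$ (the line spanned by $\mathbf{1}_n$ and its orthogonal complement), so the spectral picture is complete: eigenvalue $\sigma^2(1-\rho+n\rho)$ with multiplicity one and eigenvalue $\sigma^2(1-\rho)$ with multiplicity $n-1$.

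\textbf{Main obstacle.} There is essentially no obstacle here; the lemma is a warm-up computation whose only purpose is to set up the subsequent argument that the paper's results carry over to symmetric correlation. The one point worth stating explicitly (though perhaps deferred to where it is used) is that the eigenvectors $\bm{q}_2,\dots,\bm{q}_n$ of the Laplacian $L$ are all orthogonal to $\bm{q}_1 = \mathbf{1}_n$, so Lemma~\ref{lem_eigen_rhosigma} implies $\Sigma$ and $L$ are simultaneously diagonalizable by the same orthonormal basis; this is what will make $\Sigma^{1/2}V\Sigma^{1/2}$ tractable in Theorem~\ref{thm_general}. I would keep the proof of the lemma itself to just the two displayed computations above.
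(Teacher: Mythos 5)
Your computation is correct and is exactly the direct verification one would expect; the paper itself states Lemma~\ref{lem_eigen_rhosigma} without proof, treating it as an immediate calculation, so your two displayed steps supply precisely what is implicitly assumed. Your closing remark about simultaneous diagonalizability with $L$ correctly anticipates how the lemma is used in Lemma~\ref{lem_eigen_rhov}.
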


From this lemma, it follows directly that
\begin{equation}
\begin{aligned}
\Sigma^{\frac{1}{2}}\mathbf{1}_{n}
=\sigma \sqrt{1-\rho+n\rho}\, \mathbf{1}_{n} ,	
\end{aligned}
\end{equation}
and for any vector $\bm{q}$ orthogonal to $\mathbf{1}_{n}$ (i.e., $\mathbf{1}_{n}'q=0$), 
\begin{equation}
\begin{aligned}
\Sigma^{\frac{1}{2}}\bm{q}
=\sigma \sqrt{1-\rho}\, \bm{q}.
\end{aligned}
\end{equation}

\begin{lemma}\label{lem_eigen_rhov}
Suppose $V=(I_{n}+\beta L)^{-1}(I_{n}-2\beta (\widetilde{L}-L ))(I_{n}+\beta L)^{-1}$ and $\Sigma=\sigma^{2}[(1-\rho)I_{n}+\rho \mathbf{1}_{n}\mathbf{1}_{n}']$.
Then, every eigenvector $\bm{q}$ of $V$ is also an eigenvector of $\Sigma^{\frac{1}{2}}V\Sigma^{\frac{1}{2}}$.
\end{lemma}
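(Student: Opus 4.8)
The plan is to exploit the structural fact, already recorded in the two displays immediately preceding the lemma, that $\Sigma^{\frac{1}{2}}$ acts as the scalar $\sigma\sqrt{1-\rho+n\rho}$ on $\mathrm{span}(\mathbf{1}_{n})$ and as the scalar $\sigma\sqrt{1-\rho}$ on the orthogonal complement $\mathbf{1}_{n}^{\perp}$; in particular $\Sigma^{\frac{1}{2}}$ leaves both of these subspaces invariant and is a scalar multiple of the identity on each. If I can show that $V$ also leaves both subspaces invariant, then $\Sigma^{\frac{1}{2}}V\Sigma^{\frac{1}{2}}$ is block-diagonal with respect to $\mathbb{R}^{n}=\mathrm{span}(\mathbf{1}_{n})\oplus\mathbf{1}_{n}^{\perp}$, and on each block it is simply $V$ rescaled by a positive constant; eigenvectors of $V$ lying inside one block therefore survive the congruence.

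The first step is to verify that $\mathbf{1}_{n}$ is an eigenvector of $V$. Since $L$ and $\widetilde{L}$ are Laplacians, $L\mathbf{1}_{n}=\widetilde{L}\mathbf{1}_{n}=\mathbf{0}$, so $(I_{n}+\beta L)^{-1}\mathbf{1}_{n}=\mathbf{1}_{n}$ and $(I_{n}-2\beta(\widetilde{L}-L))\mathbf{1}_{n}=\mathbf{1}_{n}$; hence $V\mathbf{1}_{n}=(I_{n}+\beta L)^{-1}(I_{n}-2\beta(\widetilde{L}-L))(I_{n}+\beta L)^{-1}\mathbf{1}_{n}=\mathbf{1}_{n}$, i.e.\ $\mathbf{1}_{n}$ is an eigenvector of $V$ with eigenvalue $1$. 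Because $V$ is symmetric and fixes $\mathrm{span}(\mathbf{1}_{n})$, it maps $\mathbf{1}_{n}^{\perp}$ into itself as well.

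The second step is the case computation. If $\bm{q}\in\mathbf{1}_{n}^{\perp}$ with $V\bm{q}=\omega\bm{q}$, then $\Sigma^{\frac{1}{2}}\bm{q}=\sigma\sqrt{1-\rho}\,\bm{q}\in\mathbf{1}_{n}^{\perp}$, so $V\Sigma^{\frac{1}{2}}\bm{q}=\sigma\sqrt{1-\rho}\,\omega\,\bm{q}$ is again in $\mathbf{1}_{n}^{\perp}$, and a second application of $\Sigma^{\frac{1}{2}}$ yields $\Sigma^{\frac{1}{2}}V\Sigma^{\frac{1}{2}}\bm{q}=\sigma^{2}(1-\rho)\,\omega\,\bm{q}$; thus $\bm{q}$ is an eigenvector of $\Sigma^{\frac{1}{2}}V\Sigma^{\frac{1}{2}}$ with eigenvalue $\sigma^{2}(1-\rho)\omega$. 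For $\bm{q}=\mathbf{1}_{n}$ the identical chase gives $\Sigma^{\frac{1}{2}}V\Sigma^{\frac{1}{2}}\mathbf{1}_{n}=\sigma^{2}(1-\rho+n\rho)\mathbf{1}_{n}$.

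The one point requiring care — and really the only obstacle — is that an eigenspace of $V$ could a priori \emph{straddle} the two invariant subspaces, which would occur exactly if the eigenvalue $1$ of $V$ had multiplicity greater than one with some associated eigenvector neither parallel nor orthogonal to $\mathbf{1}_{n}$. I would dispose of this by noting that, since $V$ is symmetric with both subspaces invariant, writing $\bm{v}=\alpha\mathbf{1}_{n}+\bm{p}$ with $\bm{p}\in\mathbf{1}_{n}^{\perp}$ shows $V\bm{v}=\bm{v}\iff V\bm{p}=\bm{p}$, so the eigenvalue-$1$ eigenspace $W$ splits as $W=\mathrm{span}(\mathbf{1}_{n})\oplus(W\cap\mathbf{1}_{n}^{\perp})$, while every eigenspace of $V$ for an eigenvalue $\neq 1$ lies wholly in $\mathbf{1}_{n}^{\perp}$. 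Consequently an orthonormal eigenbasis of $V$ can always be chosen in which each vector is either $\mathbf{1}_{n}/\sqrt{n}$ or an element of $\mathbf{1}_{n}^{\perp}$, and the computation above applies verbatim to each such vector — which is the sense in which every eigenvector of $V$ is an eigenvector of $\Sigma^{\frac{1}{2}}V\Sigma^{\frac{1}{2}}$. (In the applications of this section, e.g.\ $\widetilde{G}=K_{n}$, one has $\ker\widetilde{L}=\mathrm{span}(\mathbf{1}_{n})$, forcing $W\cap\mathbf{1}_{n}^{\perp}=\{\mathbf{0}\}$, so the eigenvalue $1$ is in fact simple and this subtlety does not arise.)
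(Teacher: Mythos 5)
Your proof is correct and follows essentially the same route as the paper's: both treat $\mathbf{1}_{n}$ and the eigenvectors orthogonal to $\mathbf{1}_{n}$ separately and use the fact (Lemma \ref{lem_eigen_rhosigma}) that $\Sigma^{\frac{1}{2}}$ acts as a scalar on each of these two invariant subspaces. The only difference is that you explicitly handle the possibility of an eigenvector of $V$ straddling the two subspaces when the eigenvalue $1$ is degenerate --- a case the paper's two-line proof silently passes over --- and your resolution (choose the orthonormal eigenbasis of $V$ compatibly with the splitting $\mathrm{span}(\mathbf{1}_{n})\oplus\mathbf{1}_{n}^{\perp}$) is the correct reading of the lemma.
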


\begin{proposition}\label{prop_correlation}
Suppose that $\Sigma =\sigma^{2}[ (1-\rho)I_{n}+\rho \mathbf{1}_{n} \mathbf{1}_{n}']$.
Then, the optimal signal is independent of $\rho \in (-1,1)$.
\end{proposition}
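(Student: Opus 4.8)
The plan is to apply Theorem~\ref{thm_general}, which expresses the optimal signal as $\widetilde{\bm m}^{*} = Z_{+}'\Sigma^{-\frac{1}{2}}\bm x$ with the columns of $Z_{+}$ being the eigenvectors of $\Sigma^{\frac{1}{2}}V\Sigma^{\frac{1}{2}}$ associated with nonnegative eigenvalues, and then to establish two facts: that $Z_{+}$ may be taken to consist of exactly the eigenvectors $\bm{q}_{j}$ of $V$ with $\omega_{j}\ge 0$ (the same vectors appearing in Theorem~\ref{theorem}, independently of $\rho$), and that the resulting statistics are, up to nonzero $\rho$-dependent scalars, the $\rho$-free statistics $m_{j}^{*}=\bm{q}_{j}'\bm x$.

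First I would note that $\mathbf{1}_{n}$ is an eigenvector of $V$: since $L\mathbf{1}_{n}=\widetilde{L}\mathbf{1}_{n}=\bm{0}$ we have $B\mathbf{1}_{n}=\mathbf{1}_{n}$, hence $V\mathbf{1}_{n}=\mathbf{1}_{n}$ and $\omega_{1}=1$; as $V$ is symmetric, the remaining eigenvectors $\bm{q}_{2},\dots,\bm{q}_{n}$ can be chosen orthogonal to $\mathbf{1}_{n}$. By Lemma~\ref{lem_eigen_rhov} each $\bm{q}_{j}$ is also an eigenvector of $\Sigma^{\frac{1}{2}}V\Sigma^{\frac{1}{2}}$, and using the consequences of Lemma~\ref{lem_eigen_rhosigma}---namely $\Sigma^{\frac{1}{2}}\mathbf{1}_{n}=\sigma\sqrt{1+(n-1)\rho}\,\mathbf{1}_{n}$ and $\Sigma^{\frac{1}{2}}\bm{q}=\sigma\sqrt{1-\rho}\,\bm{q}$ for $\bm{q}\perp\mathbf{1}_{n}$---a direct computation shows that the corresponding eigenvalue of $\Sigma^{\frac{1}{2}}V\Sigma^{\frac{1}{2}}$ equals $\sigma^{2}(1+(n-1)\rho)\,\omega_{1}$ for $\bm{q}_{1}=\mathbf{1}_{n}$ and $\sigma^{2}(1-\rho)\,\omega_{j}$ for $\bm{q}_{j}\perp\mathbf{1}_{n}$. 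Since positive definiteness of $\Sigma$ forces $1-\rho>0$ and $1+(n-1)\rho>0$, each such eigenvalue has the same sign as the corresponding $\omega_{j}$, so the eigenvectors of $\Sigma^{\frac{1}{2}}V\Sigma^{\frac{1}{2}}$ with nonnegative eigenvalues are exactly $\{\bm{q}_{j}:\omega_{j}\ge 0\}$, which does not depend on $\rho$. (This is consistent with Sylvester's law of inertia, quoted after Theorem~\ref{thm_general}, but that law alone would give only equality of dimensions.)

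Next I would compute the disclosed statistics. For each relevant $\bm{q}_{j}$, the vector $\Sigma^{-\frac{1}{2}}\bm{q}_{j}$ is again a scalar multiple of $\bm{q}_{j}$---namely $\frac{1}{\sigma\sqrt{1+(n-1)\rho}}\mathbf{1}_{n}$ when $j=1$ and $\frac{1}{\sigma\sqrt{1-\rho}}\bm{q}_{j}$ otherwise---with nonzero scalar $d_{j}(\rho)$. Hence the $j$-th component of the optimal signal is $\widetilde{m}_{j}^{*}=\bm{q}_{j}'\Sigma^{-\frac{1}{2}}\bm x=(\Sigma^{-\frac{1}{2}}\bm{q}_{j})'\bm x=d_{j}(\rho)\,\bm{q}_{j}'\bm x=d_{j}(\rho)\,m_{j}^{*}$. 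Because rescaling a statistic by a nonzero constant does not alter the information it conveys (and, should some $\omega_{j}$ be repeated, because $\Sigma^{-\frac{1}{2}}$ maps the corresponding eigenspace bijectively onto itself), the optimal signal is informationally equivalent to the signal revealing $\{m_{j}^{*}:\omega_{j}\ge 0\}$ for every admissible $\rho$---precisely the signal of Theorem~\ref{theorem}. This proves the claim.

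The hard part will be the sign-tracking in the second paragraph: Lemma~\ref{lem_eigen_rhov} only guarantees that $V$ and $\Sigma^{\frac{1}{2}}V\Sigma^{\frac{1}{2}}$ share eigenvectors, and one must separately verify that each eigenvalue keeps its sign, which is where the split into the direction $\mathbf{1}_{n}$ and its orthogonal complement---together with the positivity constraints on $\rho$ coming from $\Sigma$ being positive definite---plays its role. The rest is bookkeeping, but it is worth stressing that the informational-equivalence step is exactly what strengthens ``same signal dimension for all $\rho$'' (immediate from Sylvester's law) to ``same signal for all $\rho$''.
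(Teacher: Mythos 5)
Your proposal is correct and follows essentially the same route as the paper's proof: invoke Lemma~\ref{lem_eigen_rhov} to identify the eigenvectors of $\Sigma^{\frac{1}{2}}V\Sigma^{\frac{1}{2}}$ with those of $V$, check that the eigenvalues $\sigma^{2}(1-\rho+n\rho)\omega_{1}$ and $\sigma^{2}(1-\rho)\omega_{j}$ keep the signs of the $\omega_{j}$, and conclude that each disclosed statistic is a nonzero scalar multiple of $m_{j}^{*}$ and hence informationally equivalent. Your version is if anything slightly more careful (you apply $\Sigma^{-\frac{1}{2}}$ as Theorem~\ref{thm_general} prescribes, where the paper writes $\Sigma^{\frac{1}{2}}$, and you make the positivity of $1-\rho$ and $1+(n-1)\rho$ explicit), but these are cosmetic differences, not a different argument.
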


This subsection demonstrates that the results established in the main part of the paper are robust to symmetric correlations in the state variables. Even when the local states exhibit such correlations, the dimensionality and structure of the optimal signal remain unaffected, as shown in Proposition~\ref{prop_correlation}.

\subsection{Asymmetry in coordination weights}\label{sec:asymmetry}

In the baseline model, the parameter representing the importance of coordination, $\beta \geq 0$, is assumed to be common to both the principal and the agents. In this subsection, I extend the model to allow for asymmetry in coordination benefits, introducing $\tilde{\beta}$ to represent the principal's coordination benefit, which may differ from $\beta$.

Suppose that the principal's objective function in equilibrium (i.e., $\bm{a} = B \hat{\bm{x}}$) is given by:%
\begin{equation}
\mathbb{E}[ v(\bm{a}^{*}, \bm{x})]
= -\mathbb{E}[x'x] + 
\mathbb{E}[\hat{\bm{x}}' B (I - 2(\tilde{\beta} \widetilde{L} - \beta L)) B \hat{\bm{x}}].
\label{eq:objective_extend}
\end{equation}

In this extended model, Proposition \ref{prop_avg} remains valid: for any $\beta \geq 0$ and $\tilde{\beta} \geq 0$, the optimal signal always informs the agents of the average state. However, other propositions require slight modifications.

\begin{proposition}
Full revelation is optimal if {\em (i)} $\beta g_{ij} \geq \tilde{\beta} \tilde{g}_{ij}$ for all $i$ and $j$ {\em (}cf. Proposition \ref{prop_full}{\em )}, or {\em (ii)} $\tilde{\beta} \leq 1/(2n)$ {\em (}cf. Proposition \ref{prop_full_beta}{\em )}.
\end{proposition}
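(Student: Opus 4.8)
The plan is to mimic the arguments behind Propositions~\ref{prop_full} and~\ref{prop_full_beta}, but with the matrix $V$ replaced by the extended payoff matrix
\[
V_{\tilde\beta} \equiv B\bigl(I_n - 2(\tilde\beta \widetilde{L} - \beta L)\bigr)B,
\qquad B = (I_n + \beta L)^{-1},
\]
which appears in \eqref{eq:objective_extend}. By Theorem~\ref{theorem} (applied with $V_{\tilde\beta}$ in place of $V$), full revelation is optimal if and only if $V_{\tilde\beta}$ is positive semidefinite, so the entire task reduces to establishing $V_{\tilde\beta} \succeq 0$ under either hypothesis. Since $B$ is symmetric and nonsingular, $V_{\tilde\beta} \succeq 0$ is equivalent to the inner factor $I_n - 2(\tilde\beta \widetilde{L} - \beta L) \succeq 0$, i.e.\ to $2\tilde\beta \widetilde{L} \preceq I_n + 2\beta L$.

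For part~(i), I would show that the hypothesis $\beta g_{ij} \geq \tilde\beta \tilde g_{ij}$ for all $i,j$ implies $\beta L - \tilde\beta \widetilde{L} \succeq 0$, which gives $I_n + 2\beta L - 2\tilde\beta\widetilde{L} \succeq I_n + 2(\beta L - \tilde\beta\widetilde{L}) - 2\beta L \succeq \dots$ — more cleanly, write $I_n + 2\beta L - 2\tilde\beta\widetilde{L} = I_n + \beta L + (\beta L - 2\tilde\beta\widetilde{L} + \beta L)$; the slickest route is the Laplacian quadratic-form identity $\bm{y}'L\bm{y} = \sum_{i<j} g_{ij}(y_i - y_j)^2$. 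Then
\[
\beta\,\bm{y}'L\bm{y} - \tilde\beta\,\bm{y}'\widetilde{L}\bm{y}
= \sum_{i<j}(\beta g_{ij} - \tilde\beta \tilde g_{ij})(y_i - y_j)^2 \geq 0,
\]
so $\beta L \succeq \tilde\beta \widetilde{L} \succeq 0$, hence $2\tilde\beta\widetilde{L} \preceq 2\beta L \preceq I_n + 2\beta L$, giving $V_{\tilde\beta}\succeq 0$. (This is exactly the weighted generalization of the spanning-supergraph argument in Proposition~\ref{prop_full}.)

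For part~(ii), I would use the universal bound on Laplacian eigenvalues recorded in Subsection~\ref{sec:laplacian}: $0 \preceq \widetilde{L} \preceq n I_n$ (every Laplacian eigenvalue lies in $[0,n]$). Hence $2\tilde\beta\widetilde{L} \preceq 2\tilde\beta n I_n \preceq I_n$ whenever $\tilde\beta \leq 1/(2n)$, and since $2\beta L \succeq 0$ we get $2\tilde\beta\widetilde{L} \preceq I_n \preceq I_n + 2\beta L$, so again $V_{\tilde\beta}\succeq 0$. No condition on $\beta$ is needed because the $2\beta L$ term only helps.

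The only mild obstacle is bookkeeping: one must confirm that Theorem~\ref{theorem}'s reduction to ``$V$ positive semidefinite $\iff$ full revelation'' carries over verbatim when $V$ is replaced by the asymmetric-weight matrix $V_{\tilde\beta}$ — but this is immediate, since Theorem~\ref{theorem} only uses that the objective is $\mathbb{E}[\hat{\bm x}' V \hat{\bm x}]$ for \emph{some} symmetric $V$, and \eqref{eq:objective_extend} has precisely that form. Everything else is the positive-semidefiniteness algebra above, which is routine; I do not anticipate a genuine difficulty.
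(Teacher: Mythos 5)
Your proposal is correct and follows essentially the same route the paper intends: reduce full revelation to positive semidefiniteness of $V_{\tilde\beta}=B(I_n-2(\tilde\beta\widetilde{L}-\beta L))B$, use congruence via the symmetric nonsingular $B$ to pass to the inner factor, establish part (i) by showing $\beta L-\tilde\beta\widetilde{L}$ is the (positive semidefinite) Laplacian of a nonnegatively weighted graph, and part (ii) by the bound $\widetilde{L}\preceq nI_n$ together with $2\beta L\succeq 0$. Your use of the quadratic-form identity $\bm{y}'L\bm{y}=\sum_{i<j}g_{ij}(y_i-y_j)^2$ in part (i) is just the weighted analogue of the diagonal-dominance argument in the paper's proof of Proposition~\ref{prop_full}, so there is no substantive difference.
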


If the synergy graph is a complete graph (i.e., $\tilde{g}_{ij} = 1$ for all $i \neq j$), the eigenvalue of $V$ associated with the eigenvector $\bm{q}_{j} \perp \mathbf{1}_{n}$ is modified as:
\begin{equation}
\omega_{j} = (1 + \beta \lambda_{j})^{-2}(1 - 2\tilde{\beta}n + 2 \beta \lambda_{j}).
\end{equation}
The condition for disclosing $m_{j}^{*} = \bm{q}_{j}' \bm{x}$ is then given by:
\begin{equation}
m_{j}^{*} \in \mathcal{I}(\varphi^{*}) \iff \frac{1}{2n} + \frac{\lambda_{j}}{n} \beta \geq \tilde{\beta}.
\end{equation}

\begin{proposition}\label{prop_asymmetry}
Suppose that $\tilde{g}_{ij} = 1$ for all $i \neq j$ and $\tilde{\beta} > 1/(2n)$. Then:
\begin{enumerate}
\item[{\em i)}] the disclosure of only the average state is optimal if $\tilde{\beta}$ is sufficiently large or $\beta$ is sufficiently small {\em (}cf. Proposition \ref{prop_single_beta}{\em )};
\item[{\em ii)}] the disclosure of full information is optimal if $G$ is connected {\em (}i.e., $\lambda_2 > 0${\em )} and $\beta$ is sufficiently large.
\end{enumerate}
\end{proposition}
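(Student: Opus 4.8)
The plan is to reduce the whole statement to counting the signs of the eigenvalues $\omega_j$ of the matrix appearing in $\mathbb{E}[\hat{\bm{x}}'V\hat{\bm{x}}]$ and then to invoke Theorem~\ref{theorem}. The extension of Section~\ref{sec:asymmetry} leaves the prior $\bm{x}\sim N(\mu\bm{1}_n,\sigma^2 I_n)$ i.i.d., so Theorem~\ref{theorem} applies verbatim to $V=B\bigl(I_n-2(\tilde{\beta}\widetilde{L}-\beta L)\bigr)B$ with $B=(I_n+\beta L)^{-1}$ and, under the hypothesis $\tilde{g}_{ij}=1$ for $i\neq j$, $\widetilde{L}=L^{K}$. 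Since $L$ and $L^{K}$ are simultaneously diagonalized by the orthonormal Laplacian basis $\bm{q}_1\propto\bm{1}_n,\bm{q}_2,\dots,\bm{q}_n$ (with $L^{K}\bm{q}_1=0$ and $L^{K}\bm{q}_j=n\bm{q}_j$ for $j\geq 2$), each $\bm{q}_j$ is an eigenvector of $V$, with $\omega_1=1$ and, for $j\geq 2$,
\[
\omega_j=\frac{1-2\tilde{\beta}n+2\beta\lambda_j}{(1+\beta\lambda_j)^2},
\]
as already recorded before the statement. The only quantity that matters for disclosure is therefore the sign of the numerator $N_j\equiv 1-2\tilde{\beta}n+2\beta\lambda_j$, which is nondecreasing in $\lambda_j$; and $\omega_1=1>0$, so $\bar{x}$ is always disclosed. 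By Theorem~\ref{theorem}, the optimal signal consists of exactly those $m_j^{*}=\bm{q}_j'\bm{x}$ with $\omega_j\geq 0$.

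For part~(i) I would show $N_j<0$ for every $j\geq 2$. Because $0=\lambda_1\leq\lambda_j\leq\lambda_n\leq n$, we have $N_j\leq 1-2\tilde{\beta}n+2\beta n=1-2n(\tilde{\beta}-\beta)$, which is strictly negative precisely when $\beta<\tilde{\beta}-\tfrac{1}{2n}$. The right-hand side is positive since $\tilde{\beta}>\tfrac{1}{2n}$, so this inequality holds whenever $\tilde{\beta}$ is large enough for fixed $\beta$, and whenever $\beta$ is small enough (including $\beta=0$) for fixed $\tilde{\beta}$ — which is the ``sufficiently large $\tilde{\beta}$ or sufficiently small $\beta$'' in the statement. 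In that regime $\omega_j<0$ for all $j\geq 2$, hence $r=1$ and the optimal signal is the minimum-transparency policy revealing only the average state.

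For part~(ii) I would instead show $N_j\geq 0$ for every $j\geq 2$. Since $N_j$ is nondecreasing in $\lambda_j$, it suffices that $N_2=1-2\tilde{\beta}n+2\beta\lambda_2\geq 0$; when $G$ is connected, $\lambda_2>0$, and this rearranges to $\beta\geq(2\tilde{\beta}n-1)/(2\lambda_2)$, a finite positive threshold (positive because $\tilde{\beta}>\tfrac{1}{2n}$). Thus for $\beta$ sufficiently large all $\omega_j\geq 0$, so $r=n$ and the optimal signal discloses $m_1^{*},\dots,m_n^{*}$. Since $\{\bm{q}_1,\dots,\bm{q}_n\}$ is an orthonormal basis of $\mathbb{R}^n$, the map $\bm{x}\mapsto(\bm{q}_1'\bm{x},\dots,\bm{q}_n'\bm{x})$ is invertible, so disclosing all of these statistics is equivalent to full revelation of $\bm{x}$.

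There is no substantive obstacle: the eigen-structure of $V$ is inherited directly from the commuting pair $(L,L^{K})$ and is essentially given in the excerpt. The only points needing care are (a) handling the quantifiers through the single clean inequality $\beta<\tilde{\beta}-\tfrac{1}{2n}$ in part~(i) (resp.\ $\beta\geq(2\tilde{\beta}n-1)/(2\lambda_2)$ in part~(ii)), working with the numerator $N_j$ rather than with $\omega_j$ itself, whose monotonicity in $\lambda_j$ fails for general $(\beta,\tilde{\beta})$; and (b) translating ``number of nonnegative eigenvalues of $V$'' into the two boundary conclusions, using $\omega_1=1>0$ to keep $\bar{x}$ always in and the orthonormality of the $\bm{q}_j$ to identify $r=n$ with full revelation.
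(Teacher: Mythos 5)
Your proposal is correct and follows exactly the route the paper intends: the paper omits an explicit appendix proof of Proposition~\ref{prop_asymmetry} and instead lets it follow from the displayed eigenvalue formula $\omega_j = (1+\beta\lambda_j)^{-2}(1-2\tilde{\beta}n+2\beta\lambda_j)$ and the disclosure condition stated just before the proposition, which is precisely the sign analysis of the numerator $N_j$ that you carry out. Your explicit handling of the quantifiers via $\beta < \tilde{\beta}-\tfrac{1}{2n}$ for part~(i) and $\beta \geq (2\tilde{\beta}n-1)/(2\lambda_2)$ for part~(ii) simply fills in details the paper leaves implicit.
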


This subsection demonstrates that introducing asymmetry in coordination benefits simplifies the interpretation of the parameter $\beta$. When $\tilde{\beta}$ differs from $\beta$, $\beta$ can be directly interpreted as the agents' coordination incentive, while $\tilde{\beta}$ represents the principal's distinct preferences. This separation clarifies the principal's objective and facilitates the analysis of optimal information disclosure policies under varying coordination structures.

\section{Numerical analysis using random graph models}
This section presents a numerical analysis exploring how coordination incentives shape the design of the optimal signal, with a focus on its dimensionality. Subsection \ref{sec:ER_connect} examines how connectivity in the incentive graph enhances transparency, based on the insights from Proposition \ref{prop_morelinks}. Subsection \ref{sec:compare_ER_BA} investigates the role of hubs or central agents, building on the analysis of star graphs in Subsection \ref{sec:example}. Finally, Subsection \ref{sec:approx_min} introduces an approximation method to identify the conditions under which the minimum transparency policy is optimal. Throughout this section, I consider the case where the synergy graph is a complete graph ($\widetilde{G}=K_n$), while the incentive graph is modeled using well-known random graph models. 
 
\subsection{Connectivity and transparency}\label{sec:ER_connect} 

In this subsection, I analyze the optimal signal in networks generated by the \textit{Erd\H{o}s--R\'enyi} (ER) model $G(n,p)$, where the number of nodes $n$ is fixed and each pair is connected with independent probability $p$. The purpose of this analysis is to examine how the link formation probability $p$, or the resulting \textit{expected degree} $(n-1)p$, influences the optimal signal dimension and to what extent the policy responds to the relative importance of adaptation over coordination ($\beta^{-1}$) as well as the randomness of the incentive graph structure.

Specifically, I generate 100 random graphs for each $(n,p)$ in each parameter set and compute the optimal signal dimension by varying $\beta^{-1}$. Figure \ref{fig:ER_dimension} illustrates how the optimal signal dimension changes as the importance of adaptation relative to coordination varies. As established in Proposition \ref{prop_balanced}, when the organization places greater weight on adaptation, the principal discloses more statistics to enable agents to better respond to their local conditions. The vertical dotted line in the figure represents the cutoff beyond which full transparency becomes optimal across all network structures, as presented in Proposition \ref{prop_full_beta}. The error bars indicate the standard deviations, capturing the variability in the optimal signal dimension due to the randomness of the ER model.

\begin{figure}[t]\centering
\includegraphics[width=0.6\textwidth]{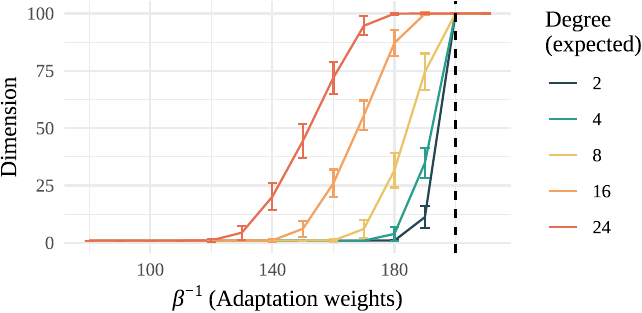}
\caption{Connectivity and optimal signal dimension.}\label{fig:ER_dimension}
\end{figure}

The figure reveals an S-shaped relationship between $\beta^{-1}$ and the signal dimension. At low $\beta^{-1}$, only the average state statistic is disclosed, achieving perfect coordination but poor adaptation. As $\beta^{-1}$ increases, the signal dimension gradually rises, with the principal revealing a few statistics to enhance adaptation while maintaining a high level of coordination. After this gradual phase, the signal dimension increases almost linearly, smoothly balancing adaptation and coordination until $\beta^{-1}$ reaches the cutoff for full revelation.

The expected degree also plays a crucial role in determining the signal dimension. Higher connectivity generally leads to a higher signal dimension, consistent with the theoretical result that more connected networks can support greater transparency. Notably, the range of $\beta^{-1}$ over which the signal dimension remains minimal (disclosing only the average state) reflects the impact of network density. In networks with higher expected degrees, the transition from minimal to full disclosure spans a broader range of $\beta^{-1}$, indicating that the principal can effectively use partial transparency to induce profitable correlations in expectations. By contrast, in less connected networks, these transitions are sharper, and the principal typically chooses either minimal or maximal transparency across most of the $\beta^{-1}$ range.%
\footnote{Note that if $G$ has no links, then the optimal signal is either the average disclosure or full disclosure, depending on whether $\beta^{-1} < 2n$. Thus, the figure confirms that as the network becomes less connected, the range of partial disclosure contracts and eventually disappears.}

\subsection{Comparing optimal signal dimensions in different network models}\label{sec:compare_ER_BA}

In this subsection, I introduce the \textit{Barab\'asi--Albert} (BA) model to examine how the degree distribution of the incentive graph affects the principal's optimal policy. While the ER model, with its uniformly random link formation, produces a Poisson degree distribution, the BA model generates a power-law degree distribution, commonly referred to as a \textit{scale-free network} characterized by highly connected hubs. These structural differences in the network directly influence the complexity of the signal required to balance adaptation and coordination. In particular, the hubs in the BA model may play a role similar to the core agent in the star graph examined in Proposition \ref{prop_star}, enabling the principal to disclose additional information to agents without significantly weakening coordination.

\begin{figure}[t]\centering
	\includegraphics[width=0.85\textwidth]{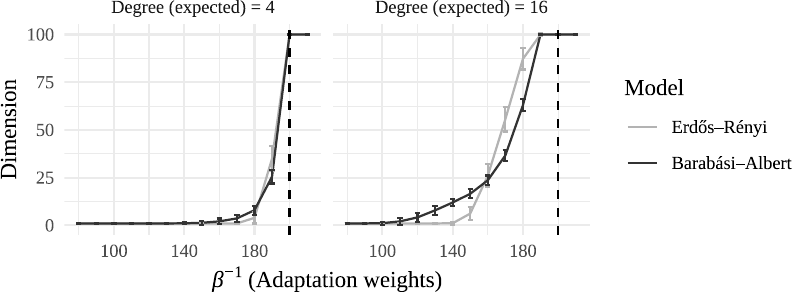}
	\caption{Comparison of optimal signal dimension between ER and BA models ($n=100$).}\label{fig:comparison_ER_PA}
\end{figure}

Figure \ref{fig:comparison_ER_PA} compares the transitions of signal dimensions between the two models while controlling for a constant expected degree. The graph for the ER model (in gray) is identical to the previous figure, showing that the transition from the minimum to maximum dimension occurs linearly within a certain range of $\beta^{-1}$.

In contrast, the \textit{BA model} exhibits a distinct pattern. The signal dimension exceeds the minimum value even at low \(\beta^{-1}\), increasing only gradually over a wide range of $\beta^{-1}$. This indicates that in networks with highly connected hubs, as generated by the BA model, disclosing the local states of these hubs enhances adaptation without significantly reducing coordination. Agents connected to hub nodes naturally align with them, enabling signals that focus on the hubs to sustain co-movement in agents' actions. Meanwhile, additional information improves adaptation by increasing the accuracy of local state estimates. Consequently, the BA model exhibits partial transparency with a more gradual increase in signal dimensions compared to the sharper transitions observed in ER-model networks.

As $\beta^{-1}$ approaches the cutoff for full transparency, both models converge to full transparency (i.e., the dimension equals the number of agents) at similar \(\beta^{-1}\) values, which align closely with the sufficient condition \(\beta^{-1} > 2n\) identified in Proposition \ref{prop_full_beta}. As highlighted in Corollary \ref{corollary_full_beta}, the condition for full revelation is largely driven by the algebraic connectivity (the second-smallest Laplacian eigenvalue). This explains why these two network models, despite their differing degree distributions, produce similar thresholds for full transparency, even though their Laplacian spectral radii (the largest Laplacian eigenvalues) differ substantially.

\subsection{Approximation of minimum transparency condition}\label{sec:approx_min}

This subsection examines the conditions under which the minimum transparency policy---disclosing only the average state---is optimal in large-scale organizations modeled by random graphs. As shown in Section \ref{sec:example}, intermediate disclosure policies depend heavily on the specific details of the network topology. However, acquiring complete knowledge of network structures is often impractical, particularly when coordination incentives arise from social preferences. Identifying the conditions for minimum transparency is therefore crucial for enabling the principal to design robust and effective information policies in complex and uncertain network environments.

Proposition~\ref{prop_balanced} establishes that the minimum transparency policy is optimal when the following condition is satisfied:
\begin{equation}
    \beta > \frac{1}{2(n - \lambda_n)}.
    \label{eq:min_transparency_condition}
\end{equation}
This result highlights the critical role of the Laplacian spectral radius, $\lambda_n$, in determining the threshold for information disclosure.%
\footnote{Appendix \ref{sec:a_approx_full} provides a complementary analysis of the approximation of the full transparency condition, determined by the algebraic connectivity ($\lambda_2$), and its relationship with the minimum degree ($d^{\min}$).}

The spectral graph theory literature provides upper and lower bounds for the maximum Laplacian eigenvalue (\(\lambda_n\), also known as the Laplacian spectral radius). A classical result from \cite{anderson1985eigenvalues} provides an upper bound:
\begin{equation}
    \lambda_n \leq \max\{ d(u) + d(v) : uv \in E \},
\end{equation}
where \(d(u)\) is the degree of vertex \(u\). Further refinements for connected graphs are available in the literature.%
\footnote{For example, \cite{liu2004laplacian} provide
\begin{equation}\nonumber
    \lambda_n \leq \frac{(d^{\max} + d^{\min} - 1) + \sqrt{(d^{\max} + d^{\min} - 1)^2 + 4(4m - 2 d^{\min}(n - 1))}}{2},
\end{equation}
and \cite{shi2007bounds} provides 
$\lambda_n \leq 2 d^{\max} - \frac{2}{n(2Diam + 1)}$, 
where $Diam$ is the diameter and $m$ is the number of edges. More recent works include \cite{shi2009spectral}, \cite{li2010laplacian}, and \cite{jahanbani2021spectrum}, among others.}

While these bounds are theoretically valuable, they often rely on parameters such as the graph diameter or the number of edges, which can be difficult to compute in large networks unless complete information about the network is available. This limitation motivates the use of simpler approximations for \(\lambda_n\) based solely on the maximum degree, as explored in simulations.

\begin{figure}[t]\centering
	\includegraphics[width=0.7\textwidth]{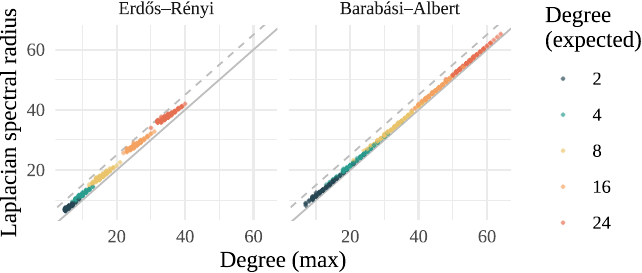}
	\caption{Spectral radius and degree in ER and BA models ($n=100$).}\label{fig:lambda_n}
\end{figure}

Figure~\ref{fig:lambda_n} demonstrates a strong empirical correlation between \(\lambda_n\) and the maximum degree \(d^{\max}\) in networks generated by the ER and BA models of size \(n = 100\).\footnote{For each parameter set, 100 random graphs are generated.} Across varying expected degrees, a nearly linear relationship emerges between \(\lambda_n\) and \(d^{\max}\), regardless of the network generation mechanism. The lower gray line in Figure~\ref{fig:lambda_n} represents the 45-degree line, while the upper dotted gray line is offset by 5 units.%
\footnote{A simple characterization of the lower bound for \(\lambda_n\) is provided by \cite{grone1994laplacian}: \(\lambda_n \geq d^{\max} + 1\).}

For a total of 1,000 samples (500 for each random graph model), all cases satisfy:
\begin{equation}
    d^{\max} + 1 \leq \lambda_n \leq d^{\max} + 5.
\end{equation}
This observation enables the approximation of conditions for minimum transparency using \(d^{\max}\) as a practical proxy for \(\lambda_n\). Such an approximation is particularly advantageous for large networks, where the exact computation of \(\lambda_n\) can be computationally intensive or practically infeasible.%
\footnote{\cite{kim2007ensemble} theoretically examine the approximation of the Laplacian spectral radius for BA graphs as $\lambda_n \simeq d^{\max} + 1$, which aligns with the observation in the right panel of Figure \ref{fig:lambda_n}.}

The approximation provides a practical tool for designing information-sharing policies in organizations. For example, in sparse networks, such as low-degree ER graphs, the maximum degree $d^{\max}$ is small relative to the network size. This results in a larger range of $\beta$ for which minimum transparency is optimal and a narrower range for intermediate disclosure policies. In contrast, in dense or hub-dominated networks, such as high-degree BA graphs, the high $d^{\max}$ increases the threshold for minimum transparency in \eqref{eq:min_transparency_condition}, requiring higher coordination benefits ($\beta$) to justify minimal disclosure.

These findings align with observations in Section~\ref{sec:ER_connect} on connectivity and transparency, and Section~\ref{sec:compare_ER_BA} on comparing network models. Using $d^{\max}$ as a proxy for $\lambda_n$ offers a practical advantage, especially when exact graph parameters, such as the diameter or minimum degree, are unavailable. This approach bridges theoretical results from spectral graph theory with practical considerations for adaptive organizations.

\section{Conclusion}

This paper investigated the optimal information control in an organization to improve adaptation to the external environment and coordination between agents. The results emphasize the importance of sharing the organizational goal, even when the underlying incentive structure does not naturally induce the desired coordination. Full revelation is optimal when the principal aims to restrain excessive coordination. 

When the principal benefits equally from the coordination of any two agents, the condition for full revelation is determined by the algebraic connectivity of the incentive graph, as it reflects the ease of aligning decentralized actions. Conversely, the condition for minimum transparency, where agents are informed only of the overall average of the local states, is governed by the Laplacian spectral radius, which captures the influence of the most connected agents on the network's overall coordination structure.

This analysis assumes that the principal cannot predict or control decentralized communication between agents and, therefore, relies solely on public signals rather than private signals. If decentralized communication can be restricted through organizational design, the principal might achieve further improvements by providing differentiated information to agents. The optimal design of private signals, which involves distinct methodologies, remains an avenue for future research.%
\footnote{\cite{miyashita2024lqg} provide techniques for incorporating private signals into optimal information design with general quadratic payoff settings, including the model studied in this paper.}

\bibliographystyle{ecca}

\appendix
\section*{Appendix}

\section{Basic properties of the Laplacian spectrum}\label{sec:laplacian_a}
This section provides some basic properties of the Laplacian spectrum of a graph that are related to this paper.%
\footnote{All lemmas below are known in spectral graph theory. See for example, \cite{moh:1991book}.}
\begin{lemma}
The Laplacian $L^{K}$ of the complete graph has the eigenvalues such that 
\begin{equation}
0=\lambda_{1}^{K}<\lambda_{2}^{K}=\cdots= \lambda_{n}^{K}=n.	
\end{equation}
\end{lemma}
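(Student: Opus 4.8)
The plan is to write $L^{K}$ in closed form and diagonalize it by inspection, using the rank-one structure that the complete graph forces. First I would note that in $K_{n}$ every vertex has degree $n-1$, so $D_{G}=(n-1)I_{n}$, while the adjacency matrix has ones in every off-diagonal position, i.e. $G=\mathbf{1}_{n}\mathbf{1}_{n}'-I_{n}$. Subtracting gives
\[
L^{K}=D_{G}-G=(n-1)I_{n}-\bigl(\mathbf{1}_{n}\mathbf{1}_{n}'-I_{n}\bigr)=nI_{n}-\mathbf{1}_{n}\mathbf{1}_{n}'.
\]

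Next I would read off the eigenvectors from this expression. Since $\mathbf{1}_{n}'\mathbf{1}_{n}=n$, we get $L^{K}\mathbf{1}_{n}=n\mathbf{1}_{n}-n\mathbf{1}_{n}=\bm{0}$, so $\bm{q}_{1}^{K}=\mathbf{1}_{n}$ is an eigenvector with eigenvalue $\lambda_{1}^{K}=0$. For any vector $\bm{q}$ with $\bm{q}\cdot\mathbf{1}_{n}=0$ we have $\mathbf{1}_{n}\mathbf{1}_{n}'\bm{q}=\mathbf{1}_{n}(\mathbf{1}_{n}'\bm{q})=\bm{0}$, hence $L^{K}\bm{q}=n\bm{q}$; thus every vector orthogonal to $\mathbf{1}_{n}$ is an eigenvector with eigenvalue $n$. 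The orthogonal complement of $\mathrm{span}\{\mathbf{1}_{n}\}$ in $\mathbb{R}^{n}$ has dimension $n-1$, so the eigenvalue $n$ has multiplicity at least $n-1$; together with the eigenvalue $0$ this exhausts all $n$ eigenvalues. Finally, since $n\ge 1$ (indeed $n\ge 2$ whenever the graph has an edge), $0<n$, which yields the stated ordering $0=\lambda_{1}^{K}<\lambda_{2}^{K}=\cdots=\lambda_{n}^{K}=n$.

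There is essentially no obstacle: the statement reduces to a one-line computation once the identity $L^{K}=nI_{n}-\mathbf{1}_{n}\mathbf{1}_{n}'$ is recognized. The only point worth a sentence of care is the dimension count showing the eigenspace for $n$ has exactly dimension $n-1$, not more; this holds because $0$ is a simple eigenvalue, as $L^{K}\bm{q}=\bm{0}$ forces $\mathbf{1}_{n}\mathbf{1}_{n}'\bm{q}=n\bm{q}$, i.e. $\bm{q}\in\mathrm{span}\{\mathbf{1}_{n}\}$.
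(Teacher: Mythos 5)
Your proof is correct and follows essentially the same route as the paper's: both identify $\mathbf{1}_{n}$ as the eigenvector for eigenvalue $0$ and show that every vector orthogonal to $\mathbf{1}_{n}$ is an eigenvector for eigenvalue $n$, the only difference being that you package the computation via the identity $L^{K}=nI_{n}-\mathbf{1}_{n}\mathbf{1}_{n}'$ while the paper multiplies out the matrix entry-wise. Your extra remark that $0$ is a simple eigenvalue is a welcome (if minor) tightening of the multiplicity count.
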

\begin{proof}
First, it is straightforward to show that $\mathbf{1}_{n}$ is an eigenvector of $L^{K}$ with eigenvalue zero.
Next, consider any vector $\bm{q}=(q^{1}, q^{2}, \dots, q^{n})'$ that is orthogonal to $\mathbf{1}_{n}$.
Since $\bm{q}'\bm{1}_{n}=\sum_{j}q^{j}=0$, 
\begin{equation}\nonumber
L^{K} \bm{q}=
 \begin{bmatrix}
  n-1  &  -1  &   \cdots  &  -1 \\
  -1   & n-1  &   \cdots  &  -1 \\
  \vdots   &  \vdots &   \ddots  &  \vdots  \\
  -1  &  -1  &   \cdots  &  n-1 
 \end{bmatrix}
 \begin{bmatrix}
 q^{1} \\
 q^{2} \\
\vdots  \\
 q^{n} \\
 \end{bmatrix} 
=
\begin{bmatrix} 
(n-1) q^{1}-\sum_{j\neq 1} q^{j} \\
(n-1)q^{2}-\sum_{j\neq 2} q^{j} \\
\vdots \\
(n-1) q^{n}-\sum_{j\neq 2} q^{n} \\
 \end{bmatrix} =n
 \begin{bmatrix}
 q^{1} \\
 q^{2} \\
\vdots  \\
 q^{n} \\
 \end{bmatrix} .
\end{equation}
Thus, $L^{K}$ has eigenvalue $n$ of multiplicity $n-1$ with eigenvector $q_{i} \perp \bm{1}_{n}$ for $i=2, \dots, n$.
\end{proof}

\begin{lemma}\label{lem_laplacian_spectra}
The Laplacian $L$ has the eigenvalues such that 
\begin{equation}
0=\lambda_{1} \leq \lambda_{2} \leq \cdots \leq \lambda_{n} \leq n.
\end{equation}
\end{lemma}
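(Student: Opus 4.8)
The plan is to pin down the spectrum by establishing three facts: $L$ is symmetric and positive semidefinite, $\bm{1}_n$ lies in its kernel, and no eigenvalue exceeds $n$. The displayed chain $0=\lambda_1\le\lambda_2\le\cdots\le\lambda_n\le n$ is then immediate, since the $\lambda_j$ are by convention the eigenvalues listed in ascending order; the real content is the two endpoint bounds.

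First I would record the standard quadratic-form identity. Expanding $L=D_G-G$ and using the symmetry of $G$ together with $(D_G)_{ii}=\sum_j g_{ij}$, one gets, for every $\bm{y}\in\mathbb{R}^n$,
\[
\bm{y}'L\bm{y}=\tfrac12\sum_{i=1}^n\sum_{j=1}^n g_{ij}(y_i-y_j)^2\ge 0 .
\]
Since $L$ is symmetric, all its eigenvalues are real, and this identity shows $L$ is positive semidefinite, so $\lambda_1\ge 0$. Taking $\bm{y}=\bm{1}_n$ gives $L\bm{1}_n=\bm{0}$ (the $i$th row sum of $G$ equals the degree $d_i$), so $0$ is an eigenvalue and hence $\lambda_1=0$.

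For the upper bound I would bring in the complement graph $\bar G$, whose adjacency matrix has off-diagonal entry $1-g_{ij}$ and zero diagonal, with Laplacian $\bar L=D_{\bar G}-\bar G$. Adding the two Laplacians, each diagonal entry becomes $d_i+(n-1-d_i)=n-1$ and each off-diagonal entry becomes $-1$, so
\[
L+\bar L=(n-1)I_n-(\bm{1}_n\bm{1}_n'-I_n)=nI_n-\bm{1}_n\bm{1}_n'=L^{K},
\]
the Laplacian of $K_n$, whose largest eigenvalue is $n$ by the preceding lemma. Since $\bar L$ is positive semidefinite (same argument as for $L$), for every unit vector $\bm{y}$ we have $\bm{y}'L\bm{y}=\bm{y}'L^{K}\bm{y}-\bm{y}'\bar L\bm{y}\le\bm{y}'L^{K}\bm{y}\le n$, so $\lambda_n=\max_{\|\bm{y}\|=1}\bm{y}'L\bm{y}\le n$. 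Combining this with $\lambda_1=0$ yields the claimed inequalities.

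The main obstacle is obtaining the \emph{sharp} upper bound $\lambda_n\le n$: a crude estimate via Gershgorin's theorem only gives $\lambda_n\le 2\max_i d_i\le 2(n-1)$, which is too weak. The decomposition $L+\bar L=L^{K}$ is the key device—it transfers the trivial bound $\lambda_{\max}(L^{K})=n$ to $L$ via the Rayleigh quotient—and once it is in hand the remaining steps are routine.
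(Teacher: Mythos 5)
Your proposal is correct and follows essentially the same route as the paper: the lower endpoint comes from positive semidefiniteness of $L$ together with $L\bm{1}_n=\bm{0}$, and the upper bound comes from the positive semidefiniteness of $L^{K}-L$ (which you correctly identify as the complement's Laplacian $\bar L$). The only cosmetic difference is that you extract $\lambda_n\le n$ from the Rayleigh-quotient characterization, whereas the paper evaluates $L^{K}-L$ directly on the eigenvectors of $L$ orthogonal to $\bm{1}_n$; both are immediate once the decomposition $L+\bar L=L^{K}$ is in hand.
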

\begin{proof}
The minimum eigenvalue $\lambda_{1}$ of $L$ equals zero because $L$ is positive semidefinite and singular. The corresponding eigenvector is $\bm{q}_{1}=\mathbf{1}_{n}$.
The maximum eigenvalue of $L$ is less than or equal to $n$ because (i) $L^{K}-L$ is positive semidefinite and (ii) $(L^{K}-L)\bm{q}_{i}=(n-\lambda)\bm{q}_{i}$ for any eigenvector $\bm{q}_{i}$ of $L$ that is orthogonal to $\mathbf{1}_{n}$. Thus, $n-\lambda_{i}\geq 0$ for $i=2, \dots, n$.
\end{proof}

\begin{lemma}
The second smallest eigenvalue of Laplacian $L$ is greater than zero ($\lambda_{2}>0$) if and only if $G$ is connected.
\end{lemma}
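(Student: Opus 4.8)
The plan is to characterize the null space of $L$ through its quadratic form and then count connected components. First I would establish, by expanding $L = D_G - G$ and using symmetry of $G$, the identity
\begin{equation}\nonumber
\bm{y}' L \bm{y} = \frac{1}{2}\sum_{i=1}^{n}\sum_{j=1}^{n} g_{ij}\,(y_i - y_j)^2
\end{equation}
for every $\bm{y} \in \mathbb{R}^n$. This re-confirms that $L$ is positive semidefinite (as already noted in Lemma \ref{lem_laplacian_spectra}), and, crucially, it shows that $\bm{y}' L \bm{y} = 0$ if and only if $y_i = y_j$ whenever $g_{ij} = 1$. Since $L$ is positive semidefinite, $L\bm{y} = 0$ is equivalent to $\bm{y}'L\bm{y} = 0$, so the null space of $L$ is precisely the set of vectors that are constant across adjacent vertices.

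Next I would translate this into a statement about connected components. If vertices $i$ and $j$ lie in the same component, there is a path $i = v_0, v_1, \dots, v_k = j$ with $g_{v_\ell v_{\ell+1}} = 1$ for each $\ell$, so any $\bm{y} \in \ker L$ satisfies $y_i = y_{v_1} = \cdots = y_j$. Hence $\ker L$ equals the space of vectors constant on each connected component of $G$, which is spanned by the indicator vectors $\bm{1}_{C_1}, \dots, \bm{1}_{C_c}$ of the $c$ components. These have pairwise disjoint supports, so they are linearly independent, giving $\dim \ker L = c$; that is, the multiplicity of the eigenvalue $0$ of $L$ equals the number of connected components of $G$.

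Finally I would conclude using the ordering $0 = \lambda_1 \le \lambda_2 \le \cdots \le \lambda_n$. The eigenvalue $0$ has multiplicity exactly one precisely when $\lambda_2 > 0$; by the previous paragraph this happens if and only if $c = 1$, i.e. $G$ is connected. Conversely, if $G$ is disconnected then $c \ge 2$ forces $\lambda_1 = \lambda_2 = 0$, so $\lambda_2 > 0$ fails. This establishes the equivalence.

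The quadratic-form expansion and the linear independence of the disjoint-support indicator vectors are routine. The only step needing a genuine (though short) argument is the "only if" direction of the null-space characterization — that $\bm{y} \in \ker L$ must be constant on each component — which is the mild obstacle here and is handled by the path argument above.
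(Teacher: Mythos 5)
Your proof is correct and follows essentially the same route as the paper's: the quadratic form identity $\bm{y}'L\bm{y}=\tfrac{1}{2}\sum_{i}\sum_{j}g_{ij}(y_i-y_j)^2$ plus a path argument handles the connected direction, and the component indicator vectors handle the disconnected direction, exactly as in the paper. You merely package the two directions into a single statement ($\dim\ker L$ equals the number of components), which is a slightly stronger but standard reformulation of the same argument.
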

\begin{proof} 
\textit{Only if part}: Suppose that $G$ is not connected. Without loss of generality, $G$ is divided into two components:
\begin{equation}
G=\begin{bmatrix}
G_{A} & O \\ 
O & G_{B}
\end{bmatrix}.	
\end{equation}
Then, the Laplacian of $G$ has eigenvalue zero of multiplicity at least two because 
\begin{equation}
\begin{bmatrix}
L_{A} & O \\ 
O & L_{B}
\end{bmatrix}
\begin{bmatrix}
\mathbf{1}_{A}\\\mathbf{0}_{B}
\end{bmatrix}
=
\begin{bmatrix}
L_{A} & O \\ 
O & L_{B}
\end{bmatrix}
\begin{bmatrix}
\mathbf{0}_{A}\\\mathbf{1}_{B}
\end{bmatrix}
=
\mathbf{0}_{n}.
\end{equation}

\textit{If part}: Suppose that $G$ is connected. 
Suppose now that $\bm{q}=(q^{1}, \dots, q^{n})'$ is an eigenvector of $L$ with eigenvalue zero (i.e., $L \bm{q}=0$). 
Note that $\bm{q}'L \bm{q}=\frac{1}{2}\sum_{i }\sum_{j}g_{ij}(q^{i}-q^{j})^{2}=0$.
This implies that $q^{i}=q^{j}$ for any pair $(i,j)$ such that $g_{ij}>0$. 
Since $G$ is connected (i.e., any pair $(i,j)$ has a path), $q^{i}=q^{j}$ for all $i$ and $j$. Thus, the eigenvector of $L$ with eigenvalue zero must be proportional to $\mathbf{1}_{n}$.
\end{proof}

\begin{lemma}
The largest eigenvalue of Laplacian $L$ is smaller than $n$ ($\lambda_{n}<n$) if and only if the complement of $G$ is connected.
\end{lemma}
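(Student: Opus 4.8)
The plan is to transfer the claim onto the complement graph $\bar{G}$ and invoke the lemma just established, which characterizes connectedness through the second smallest Laplacian eigenvalue. The key structural fact is that the edge sets of $G$ and $\bar{G}$ partition the edge set of $K_n$, and Laplacians are additive over edge-disjoint unions of graphs; hence the Laplacian $\bar{L}$ of $\bar{G}$ satisfies $\bar{L} = L^{K} - L$. One can also check this directly: the degree of vertex $i$ in $\bar{G}$ is $(n-1) - d_i$ and its adjacency matrix is $\mathbf{1}_n\mathbf{1}_n' - I_n - G$, so $\bar{L} = n I_n - L - \mathbf{1}_n\mathbf{1}_n'$, which coincides with $L^{K} - L$ since $L^{K} = n I_n - \mathbf{1}_n\mathbf{1}_n'$.

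Next I would compute the spectrum of $\bar{L}$ using the orthogonal eigenbasis $\bm{q}_1 = \mathbf{1}_n, \bm{q}_2, \dots, \bm{q}_n$ of $L$. Both $L^{K}$ and $L$ annihilate $\mathbf{1}_n$, so $\mathbf{1}_n$ is an eigenvector of $\bar{L}$ with eigenvalue $0$. For each $j \geq 2$, $\bm{q}_j$ is orthogonal to $\mathbf{1}_n$, so by the complete-graph lemma $L^{K}\bm{q}_j = n \bm{q}_j$, whence $\bar{L}\bm{q}_j = (n - \lambda_j)\bm{q}_j$. Therefore the eigenvalues of $\bar{L}$ are exactly $0$ together with $n - \lambda_2, \dots, n - \lambda_n$, and by Lemma \ref{lem_laplacian_spectra} each $n - \lambda_j$ is nonnegative.

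Finally I would apply the connectedness lemma to $\bar{G}$: the graph $\bar{G}$ is connected if and only if the second smallest eigenvalue of $\bar{L}$ is strictly positive. Arranging the eigenvalues of $\bar{L}$ in ascending order, the smallest is $0$ (coming from $\mathbf{1}_n$), while the second smallest equals $\min_{2 \leq j \leq n}(n - \lambda_j) = n - \lambda_n$ when this quantity is positive and equals $0$ otherwise. Hence the second smallest eigenvalue of $\bar{L}$ is positive precisely when $n - \lambda_n > 0$, i.e. when $\lambda_n < n$. Chaining these equivalences yields $\lambda_n < n$ if and only if the complement of $G$ is connected.

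I do not expect a genuine obstacle here; the only point requiring care is the bookkeeping of the zero eigenvalue of $\bar{L}$ when $\lambda_n = n$, so that ``second smallest eigenvalue of $\bar{L}$ is positive'' is correctly matched with the strict inequality $\lambda_n < n$ rather than with $\lambda_n \leq n$.
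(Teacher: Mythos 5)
Your proof is correct and follows essentially the same route as the paper's: identify $L^{K}-L$ with the Laplacian of the complement, observe that $(L^{K}-L)\bm{q}_{j}=(n-\lambda_{j})\bm{q}_{j}$ for every eigenvector $\bm{q}_{j}\perp \mathbf{1}_{n}$, and invoke the algebraic-connectivity characterization of connectedness applied to $\bar{G}$. If anything, yours is the more careful write-up --- it proves both directions explicitly and correctly tracks the obligatory zero eigenvalue of $\bar{L}$, whereas the paper's one-directional argument loosely refers to the ``smallest'' eigenvalue of $L^{K}-L$ where it means the second smallest.
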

\begin{proof}
Let $K_n$ be a complete graph. Since the complement of $G$ is connected, $L^{K}-L$ has the smallest eigenvalue that is strictly greater than $0$.
For any eigenvector $\bm{q}_{i} \perp \mathbf{1}_{n}$ of $L$, it is the case that $(L^{K}-L)\bm{q}_{i}=(n-\lambda_{i})\bm{q}_{i}$.
Thus, $n-\lambda_{i}>0$ for any $i=2,\dots, n$.
\end{proof}

\section{Derivation}\label{sec:derivation}
\subsection{Equilibrium}\label{sec:derivation_equil}
The first-order condition for the agent $i$'s maximization problem is given by
\begin{equation}
\begin{aligned}\nonumber
\mathbb{E}\left[ 2(a_{i}-x_{i}) +2\beta \sum_{j=1}^{n}g_{ij}(a_{i}-a_{j})|m\right]=&0 ,
\\ 
\left( 1+\beta \sum_{j=1}^{n}g_{ij} \right)a_{i} -\beta \sum_{j=1}^{n}g_{ij}a_{j}=&\hat{x}_{i}
\\
\left( 1+\beta L_{ii} \right)a_{i} +\beta \sum_{j\neq i}L_{ij}a_{j}=&\hat{x}_{i}
\end{aligned}
\end{equation}
where $L_{ii}=\sum_{j=1}^{n}g_{ij}$ and $L_{ij}=-g_{ij}$ for $i \neq j$.
The system of best responses is expressed as 
\begin{equation}
 \begin{bmatrix}
  1+\beta L_{11}&\beta L_{12} &   \cdots  &  \beta L_{1n} \\
  \beta L_{21}   & 1+\beta L_{22}  &   \cdots  & \beta L_{2n} \\
  \vdots   &  \vdots &   \ddots  &  \vdots  \\
  \beta L_{n1}  & \beta L_{n2}  &   \cdots  &  1+\beta L_{nn}
 \end{bmatrix}\begin{bmatrix}
 a_{1}\\a_{2}\\ \vdots \\ a_{n} \end{bmatrix}
 =
 \begin{bmatrix}
 \hat{x}_{1} \\ \hat{x}_{2} \\ \vdots \\ \hat{x}_{n} \end{bmatrix},
\end{equation}
or in matrix form, 
\begin{equation}\nonumber
(I_n+\beta L )\bm{a}=\hat{\bm{x}}.	
\end{equation}
Since $I_n+\beta L$ is strictly diagonally dominant, it must be nonsingular.
Thus, the unique Nash equilibrium is given by $\bm{a}^{*}=(I_n+\beta L)^{-1}\hat{\bm{x}}$. Note that $B = (I_n + \beta L)^{-1}$ is symmetric since $L$ is symmetric.

\subsection{Objective function}\label{sec:derivation_objective}
I derive the principal's objective function \eqref{eq:objective}.
Firstly, I use the following matrix calculus:
\begin{equation}
\begin{aligned}
\bm{a}' G \bm{a} =& 
\begin{bmatrix} a_{1}&\cdots & a_{n} \end{bmatrix}
\begin{bmatrix}\sum_{j}g_{1 j}a_{j} \\ \vdots \\ \sum_{j}g_{n j}a_{j} \end{bmatrix}
=\sum_{i} \sum_{j} g_{ij} a_{i}a_{j},
\\
\bm{a}' D_{G} \bm{a} =& 
\begin{bmatrix} a_{1}&\cdots & a_{n} \end{bmatrix}
\begin{bmatrix}\sum_{j}g_{1 j}a_{1} \\ \vdots \\ \sum_{j}g_{n j}a_{n} \end{bmatrix}
= \sum_{i} \sum_{j} g_{ij} a_{i}^{2}.	
\end{aligned}	
\end{equation}
From these equalities, I have  
\begin{equation}
2 \bm{a}' L \bm{a} = \sum_{j=1}^{n} \sum_{i=1}^{n}g_{ij} (a_{i}-a_{j})^{2}.	
\end{equation}
Then, the principal's objective function given $\bm{a}^{*}=B\hat{\bm{x}}=(I_n + \beta L)^{-1}\hat{\bm{x}}$ reduces to 
\begin{equation}
\begin{aligned}\nonumber
\mathbb{E}[ v(\bm{a}^{*}, \bm{x})] 
=&
 -\sum_{i=1}^{n} \mathbb{E}\left[ a_{i}^{2}-2a_{i}x_{i}+x_{i}^{2} \right] - \beta \sum_{i=1}^{n} \sum_{j=1}^{n} \tilde{g}_{ij} \mathbb{E}\left[ (a_{i}-a_{j})^{2}\right]
 \\
=&
 -\mathbb{E}[\bm{a}'\bm{a}-2\bm{a}'\bm{x}+\bm{x}'\bm{x}] - 2  \beta  \mathbb{E}\left[\bm{a}'\widetilde{L}\bm{a} \right]
 \\
=&
 -\mathbb{E}[\hat{\bm{x}}'B'B\hat{\bm{x}}-2\hat{\bm{x}}'B'\hat{\bm{x}}+\bm{x}'\bm{x}] - 2 \beta 
\mathbb{E}\left[\hat{\bm{x}}'B'\widetilde{L}B \hat{\bm{x}} \right] 
\\
=&
 -\mathbb{E}[\hat{\bm{x}}'B'B\hat{\bm{x}}-2\hat{\bm{x}}'B'(I_n + \beta L)B\hat{\bm{x}}+\bm{x}'\bm{x}] - 2 \beta 
\mathbb{E}\left[\hat{\bm{x}}'B'\widetilde{L}B\hat{\bm{x}}\right] 
\\
=& -\mathbb{E}[\bm{x}'\bm{x}] + \mathbb{E}[\hat{\bm{x}}' B' (I_n - 2\beta (\widetilde{L}-L ))B\hat{\bm{x}}].
\end{aligned}
\end{equation}
Note that from the law of iterated expectations, I can replace $\mathbb{E}[\hat{\bm{x}}'B\bm{x}]$ by $\mathbb{E}[\mathbb{E}[\hat{\bm{x}}'B\bm{x}|m]]=\mathbb{E}[\hat{\bm{x}}'B\hat{\bm{x}}]$.

\section{Proofs}\label{sec:proof}
\begin{proof}[Proof of Proposition \ref{prop_avg}]
From Theorem \ref{theorem}, it suffices to show that $V \mathbf{1}_{n} = \omega_{1} \mathbf{1}_{n}$ for some $\omega_{1}>0$.
First, Laplacian matrices $L$ and $\widetilde{L}$ have an eigenvector of $q_{1}=\mathbf{1}_{n}$ with eigenvalue $\lambda_{1}=0$.
Next, for any eigenvector $\bm{q}_{i}$ of $L$, $B \bm{q}_{i}=(I+\beta L)^{-1}\bm{q}_{i}= (1+\beta \lambda_{i})^{-1}\bm{q}_{i}$ where $\lambda_{i}$ is the eigenvalue of $L$ associated with $\bm{q}_{i}$.
Then, it is straightforward to show that $V \bm{1}_{n}= \bm{1}_{n}$.
\end{proof}

\begin{proof}[Proof of Proposition \ref{prop_full}]
I will show that $V$ is positive definite if $G \geq \widetilde{G}$ in terms of component-wise inequality.
First, $G-\widetilde{G}$ is an adjacency matrix when $g_{ij} \geq \tilde{g}_{ij}$ for all $i$ and $j$.
Then, $L-\widetilde{L}$ is diagonally dominant since it is the Laplacian matrix of $G-\widetilde{G}$.
Second, $I_n+2 \beta (L-\widetilde{L})$ is symmetric and strictly diagonally dominant. 
Hence, $V=B'(I_n-2\beta (\widetilde{L}-L))B$ is positive definite.
\end{proof}

\begin{proof}[Proof of Proposition \ref{prop_full_beta}]
I will show that $V$ is positive semidefinite if $\beta$ is sufficiently close to zero.
Since $\widetilde{L}$ is symmetric, I have the eigenvalue decomposition: $\widetilde{L} =\widetilde{Q}\widetilde{\Lambda} \widetilde{Q}'$. 
Then,%
\begin{equation}\nonumber
I_n - 2\beta \widetilde{L} + 2 \beta L 
= \widetilde{Q} [ I_n - 2 \beta \widetilde{\Lambda}]\widetilde{Q}' + 2 \beta L.	
\end{equation}
Since every eigenvalue of the Laplacian matrix is not greater than $n$, $I_n - 2\beta \widetilde{\Lambda}$ is positive semidefinite if $1-2\beta n \geq 0$.
Hence, if $\beta \leq 1/(2n)$, then $V=B'(I_n - 2\beta (\widetilde{L}-L))B$ is positive semidefinite.
\end{proof}

\begin{proof}[Proof of Proposition \ref{prop_morelinks}]
To prove the proposition, I utilize Weyl's monotonicity theorem, which states that for any symmetric matrix $A$ and positive semidefinite matrix $E$, I have $\lambda_k(A) \leq \lambda_k(A+E)$ for $k=1,2,, \dots, n$.%
\footnote{This theorem is an immediate corollary of the Courant-Fischer theorem.}

Fix $G$ and $G'$ with $G' \geq G$. I will show that the payoff matrix for $G'$, denoted by $V_{G'}$, has larger eigenvalues than $V_G$ has. 
\begin{equation}\nonumber
	\begin{aligned}
		V_{G'} &= B'[I_n -2\beta(\widetilde{L} - L_{G'})]B
		\\ &= V + 2 \beta B'(L_{G'}-L_G)B,
	\end{aligned}
\end{equation}
where $L_G$ and $L_{G'}$ are the Laplacian matrices for $G$ and $G'$, respectively.
When $G' \geq G$, the matrix $G'-G$ corresponds to the adjacency matrix for the graph consisting of the newly added links, and $L_{G'}-L_G$ is the corresponding Laplacian matrix of $G'-G$. Since any Laplacian matrix is positive semidefinite, so is $B'(L_{G'}-L_G)B$. Then, I can apply Weyl's monotonicity theorem, completing the proof.
\end{proof}

\begin{proof}[Proof of Proposition \ref{prop_beta}]
Consider the eigendecomposition	of $\Delta \equiv \widetilde{L}-L$ denoted by $Q_\Delta \Lambda_\Delta Q_\Delta'$. Then $V$ is written as 
\begin{equation}\nonumber
V = B' Q_\Delta [I_n - 2\beta \Lambda_\Delta]Q_\Delta'B.
\end{equation}
Note that the element of $\Lambda_\Delta$ may be negative because $\widetilde{L}-L$ is not necessarily positive semidefinite.
Then, the number of positive eigenvalues cannot increase when $\beta$ increases.
\end{proof}

\begin{proof}[Proof of Proposition \ref{prop_balanced}]
Let $\bm{q}_{j}$ be the eigenvector of $L$ with eigenvalue $\lambda_{j}$.
I will show that if $\widetilde{L}=L^{K}$ (the Laplacian of a complete graph), then $V\bm{q}_{j}=\omega_{j} \bm{q}_{j}$ for some $\omega_{j} \in \mathbb{R}$.
Since $\bm{q}_{1}=\mathbf{1}_{n}$ is an eigenvector of $V$, I focus on $\bm{q}_{j} \perp \mathbf{1}_{n}$.

First, $(L^{K}-L)\bm{q}_{j}=(n-\lambda_{j}) \bm{q}_{j}$ since $L^{K}\bm{q}_{j} = n \bm{q}_{j}$ for any $\bm{q}_{j}\perp \mathbf{1}_{n}$. 
Second, $(I_n+\beta L)^{-1}\bm{q}_{j} = (1+\beta \lambda_{j})^{-1} \bm{q}_{j}$.
Then, $V\bm{q}_{j}=\omega_{j} \bm{q}_{j}$ where 
\begin{equation}\label{eq:omega}
\omega_{j} \equiv (1+\beta \lambda_{j})^{-2} (1-2\beta (n-\lambda_{j})) .
\end{equation}
The eigenvalue $\omega_{j}$ is nonnegative if and only if $1\geq2\beta (n -\lambda_{j})$.
\end{proof}

\begin{proof}[Proof of Proposition \ref{prop_single_beta}]
Suppose that $\widetilde{G}$ is the adjacency matrix of a complete graph and that the complement of $G$ is connected.
Then the second smallest eigenvalue of the Laplacian of the adjacency matrix $\widetilde{G}-G$ is strictly greater than zero.
This implies that the largest eigenvalue $\lambda_{n}$ of $L$ is strictly less than $n$.
Hence, the condition $\beta >1/(2(n-\lambda_{n}))$ is sufficient to ensure that $\omega_2, \dots, \omega_n<0$. In this case, $\omega_1>0$ is the only nonnegative eigenvalue of $V$.
\end{proof}

\begin{proof}[Proof of Proposition \ref{prop_gain}]
Let $\omega_{1}, \dots, \omega_{n}$ be the eigenvalues of $V$.
Under the optimal signal, the gain from information design $\mathbb{E}[\hat{\bm{x}}'V\hat{\bm{x}}]-\bm{\mu}'V'\bm{\mu}$ is proportional to the sum of positive eigenvalues of $V$.
Formally, 
\begin{equation}
\mathbb{E}[\hat{\bm{x}}'V\hat{\bm{x}}] -\bm{\mu}'V'\bm{\mu} 
 = \sigma^{2} \sum_{\{ i: \omega_{i}>0 \} }  \omega_{i}.	
\end{equation}
The eigenvalue of $V$ associated with eigenvector $\bm{q}_{1}=\mathbf{1}_{n}$ is 1.
Other eigenvalues $\omega_{j}$ of $V$ are given by \eqref{eq:omega}.
If $\lambda_{j}=n$, then $\omega_{j} \to 0$ as $\beta \to \infty$. If $\lambda_{j}<n$, then $\omega_{j}<0$ for sufficiently high $\beta$.
Hence, the sum of positive eigenvalues of $V$ converges to 1.
\end{proof}

\begin{proof}[Proof of Proposition \ref{prop_star}]
It is straightforward to verify that the Laplacian spectrum of the star graph and the associated eigenvectors are given by 
\begin{center}
\begin{tabular}{ccccc}
$\lambda_j$:&$0^{(1)}$ & $1^{(n-2)}$ & $ n^{(1)}$
\\
$\bm{q}_j$:&$\begin{bmatrix}
1 \\ \mathbf{1}_{n-1}
\end{bmatrix}$
&
$\begin{bmatrix}
0 \\ \bm{y}_{j}
\end{bmatrix}$
&
$\begin{bmatrix}
(n-1) \\ - \mathbf{1}_{n-1}
\end{bmatrix}$
\end{tabular} 
\end{center}
where $\bm{y}_{j}$ is a vector of length $n-1$ that is orthogonal to $\bm{1}_{n-1}$.%
\footnote{Here, the superscripts with parentheses indicate the multiplicities.}
\end{proof}

\begin{proof}[Proof of Proposition \ref{prop_informativeness}]
Let $\bar{Q} = \begin{bmatrix} \bm{q}_1&\bm{q}_n&\cdots &\bm{q}_r \end{bmatrix}$ be the matrix consisting of the relevant eigenvectors used for weighting the disclosed statistics.
Then the disclosed statistics are $\bm{m}^* = \bar{Q}' \bm{x}$. The conditional expectation of $\bm{x}$ given $\mathbf{m}^*$ is
\begin{equation}\nonumber
\mathbb{E}[\bm{x}|\bm{m}^*] = \bar{Q}(\bar{Q}'\bar{Q})^{-1} \bm{m}^*,
\end{equation}
and its variance matrix is 
\begin{equation}\label{eq:var_matrix}
\text{var}(\hat{\bm{x}}) = \sigma^2 \bar{Q}(\bar{Q}'\bar{Q})^{-1} \bar{Q}'.
\end{equation}
When I apply the standard orthonormality conditions for the eigenvectors, namely $\bm{q}_i'\bm{q}_i =1$ for each $i$ and $\bm{q}_i' \bm{q}_j = 0$ for $i \neq j$, \eqref{eq:var_matrix} simplifies to
\begin{equation}\label{eq:var_matrix_reduced}
\text{var}(\hat{\bm{x}}) = \sigma^2 \sum_{j \in \mathcal{I}^*} \bm{q}_j \bm{q}_j'.
\end{equation}
The $i$-th diagonal element of \eqref{eq:var_matrix_reduced} is precisely the sum of the squared $i$-th components of those eigenvectors.
\end{proof}

\begin{proof}[Proof of Proposition \ref{prop_plusone}]
To simplify the notation, consider the plus-one policy targeting agent $1$. Under this policy, the variance matrix of the posterior expectations is given by:
\begin{equation}\nonumber
S = \sigma^2
\begin{bmatrix}
1 & \bm{0}_{n-1}'
\\
\bm{0}_{n-1} & \frac{1}{n-1} \bm{1}_{n-1} \bm{1}_{n-1}'
\end{bmatrix}.
\end{equation}

Let $V = Q \Omega Q'$ be the eigendecomposition of $V$, where $Q$ is the matrix containing the Laplacian eigenvectors as column vectors, and $\Omega = \mathrm{diag}(\omega_1, \omega_2, \dots, \omega_n)$ is the diagonal matrix with the eigenvalues given by \eqref{eq:omega}. The organizational payoff gain compared to the minimal transparency policy is:
\begin{equation}
\begin{aligned}\nonumber
\textrm{tr}(VS)- \sigma^2 &=~ \textrm{tr}(Q\Omega Q' S)- \sigma^2 
\\ 
&=~ \textrm{tr}(\Omega Q'SQ)- \sigma^2 
\\
&= \sum_{j=1}^{n} \omega_j \left[Q'SQ \right]_{jj}- \sigma^2,
\end{aligned}	
\end{equation}
where $\left[Q'SQ \right]_{jj}$ represents the $j$-th diagonal element of matrix $Q'SQ$.
Recall that $\omega_1=1$ and $\bm{q}_1 = \frac{1}{\sqrt{n}}\bm{1}_n$, and also that for $j \neq 1$, $\omega_j$ is given by \eqref{eq:omega}.

Next, compute $Q' S Q$:
\begin{equation}
\begin{aligned}\nonumber
Q'SQ &=  \sigma^2
\begin{bmatrix}
\bm{q}_1' \\ \bm{q}_2' \\ \vdots \\ \bm{q}_n'
\end{bmatrix}
\begin{bmatrix}
1 & \bm{0}_{n-1}'
\\
\bm{0}_{n-1} & \frac{1}{n-1} \bm{1}_{n-1} \bm{1}_{n-1}'	
\end{bmatrix}
\begin{bmatrix}
\bm{q}_1&\bm{q}_2 & \cdots & \bm{q}_n
\end{bmatrix}
\\
&= \sigma^2
\begin{bmatrix}
q_1(1)& \frac{1}{n-1}\sum_{j\neq 1}q_1(j) & \cdots & \frac{1}{n-1}\sum_{j\neq 1}q_1(j) \\
q_2(1)& \frac{1}{n-1}\sum_{j\neq 1}q_2(j) & \cdots & \frac{1}{n-1}\sum_{j\neq 1}q_2(j) \\
\vdots & \vdots & \ddots & \vdots \\
q_n(1)& \frac{1}{n-1}\sum_{j\neq 1}q_n(j) & \cdots & \frac{1}{n-1}\sum_{j\neq 1}q_n(j) \\
\end{bmatrix}
\begin{bmatrix}
\bm{q}_1&\bm{q}_2 & \cdots & \bm{q}_n
\end{bmatrix}
\\
&= \sigma^2
\begin{bmatrix}
\frac{1}{\sqrt{n}}& \frac{1}{n-1}(\frac{n}{\sqrt{n}}-\frac{1}{\sqrt{n}}) & \cdots & \frac{1}{n-1}(\frac{n}{\sqrt{n}}-\frac{1}{\sqrt{n}}) \\
q_2(1)& \frac{1}{n-1}(-q_2(1)) & \cdots & \frac{1}{n-1}(-q_2(1)) \\
\vdots & \vdots & \ddots & \vdots \\
q_n(1)& \frac{1}{n-1}(-q_n(1)) & \cdots & \frac{1}{n-1}(-q_n(1)) \\
\end{bmatrix}
\begin{bmatrix}
\frac{1}{\sqrt{n}}\bm{1}_n &\bm{q}_2 & \cdots & \bm{q}_n
\end{bmatrix}
\\
&=  \sigma^2
\begin{bmatrix}
1 & 0 & 0&\cdots & 0
\\
0 & \frac{n}{n-1}[q_2(1)]^2 & \frac{n}{n-1}q_2(1)q_3(1) & \cdots & \frac{n}{n-1}q_2(1)q_n(1)
\\
\vdots & \vdots &\vdots &\ddots  & \vdots
\\
0 & \frac{n}{n-1}q_n(1)q_2(1) & \frac{n}{n-1}q_n(1)q_3(1)& \cdots  & \frac{n}{n-1}[q_n(1)]^2
\end{bmatrix}.
\end{aligned}
\end{equation}
The trace simplifies to:
\begin{equation}\nonumber
\mathrm{tr}(VS) = \sigma^2 + \frac{\sigma^2 n}{n-1} \sum_{j=2}^{n} \bigl[q_j(1)\bigr]^2.
\end{equation}
This completes the proof.
\end{proof}

\begin{proof}[Proof of Lemma \ref{lem_eigen_rhov}]
First, consider $\bm{q}_{1}=\mathbf{1}_{n}$. Recall that $V \bm{q}_{1}= \mathbf{1}_{n}$.
Then, $\Sigma^{\frac{1}{2}}V\Sigma^{\frac{1}{2}}\bm{1}_{n}
=\sigma^{2} (1-\rho+n\rho )\bm{1}_{n}$.
Next, consider $\bm{q}_{j}$ such that $V \bm{q}_{j}=\omega_{j} \bm{q}_{j}$ for some $\omega_{j}$ and $\mathbf{1}_{n}'\bm{q}_{j}=0$.
Then, $\Sigma^{\frac{1}{2}}V\Sigma^{\frac{1}{2}}\bm{q}_{j}
=\sigma^{2} (1-\rho)\omega_{j} \bm{q}_{j}$.
\end{proof}

\begin{proof}[Proof of Proposition \ref{prop_correlation}]
Let $\bm{q}_{j}$ be the $j$--th eigenvector of $V=(I_{n}+\beta L)^{-1}(I_{n}-2\beta (\widetilde{L}-L ))(I_{n}+\beta L)^{-1}$ with associated eigenvalue $\omega_{j}$.
When $\rho=0$, the optimal signal informs the agents of multiple statistics $(m_{1}^{*}, \dots, m_{r}^{*})$ constructed by $m_{j}^{*}=\bm{q}_{j}' \bm{x}$ when $\omega_{1}\geq \cdots \omega_{r} \geq 0 > \omega_{r+1}\geq \cdots \geq \omega_{n}$.

From Lemma \ref{lem_eigen_rhov}, the eigenvectors of $\Sigma^{\frac{1}{2}}V\Sigma^{\frac{1}{2}}$ are identical to those of $V$, and the sign of each eigenvalue of $\Sigma^{\frac{1}{2}}V\Sigma^{\frac{1}{2}}$ coincides with that of the corresponding eigenvalue of $V$.
Thus the statistics for disclosure are given by  
\begin{equation}
\widetilde{m}_{1}^{*}= \bm{q}_{1}' \Sigma^{\frac{1}{2}} \bm{x}
=\sigma  \sqrt{1-\rho +n\rho}~m_{1}^{*}	
\end{equation}
and for $j=2, \dots, n$,
\begin{equation}
\widetilde{m}_{j}^{*}=\bm{q}_{j}' \Sigma^{\frac{1}{2}}\bm{x}
= \sigma \sqrt{1-\rho}~m_{j}^{*}.	
\end{equation}
Since $m_{j}^{*}$ and its monotone transformation $\widetilde{m}_{j}^{*}$ convey the same information to the agents, the optimal signal for $\rho \in (0,1)$ informs the agents of $m_{j}^{*}$ if and only if $m_{j}^{*}$ is revealed under $\rho=0$.
\end{proof}

\section{Approximation of Full Transparency Condition}\label{sec:a_approx_full}

This section examines the approximation of algebraic connectivity (\(\lambda_2\)), the second smallest Laplacian eigenvalue, in terms of the minimum degree (\(d^{\min}\)) for random graphs. Algebraic connectivity is a key parameter that reflects the overall connectivity of a graph, with \(\lambda_2 = 0\) if and only if the graph is disconnected. By definition, it holds that \(0 < \lambda_2\) for any connected graph.

Figure~\ref{fig:lambda_2} illustrates the relationship between \(\lambda_2\) and the minimum degree (\(d^{\min}\)) for Erd\H{o}s--R\'enyi (ER) and Barab\'asi--Albert (BA) models with \(n = 100\). Across varying expected degrees, we observe that:
\begin{equation}
    d^{\min} - 5 \leq \lambda_2 \leq d^{\min}.
\end{equation}
This empirical finding suggests that \(\lambda_2\) can be effectively approximated using \(d^{\min}\), making it a practical proxy for algebraic connectivity in large-scale graphs.

While the bounds are not as tight as those for \(\lambda_n\) discussed earlier, they provide a useful framework for analyzing the structural properties of random graphs, particularly when precise computation of \(\lambda_2\) is infeasible.

\begin{figure}[t]\centering
	\includegraphics[width=0.7\textwidth]{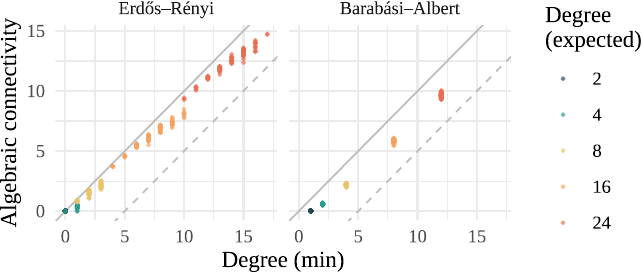}
	\caption{Algebraic connectivity and degree in ER and BA models ($n=100$).}\label{fig:lambda_2}
\end{figure}

\section{The Laplacian spectrum of special graphs}\label{sec:special}

This section provides the Laplacian spectrum of special graphs examined in Section \ref{sec:example}. The Laplacian spectrum consists of the eigenvalues of the Laplacian matrix associated with an undirected graph, and these eigenvalues characterize various properties of the graph. In the notation used here:

\begin{itemize}
    \item \textbf{Eigenvalues and Their Multiplicities:} The eigenvalues are listed with their multiplicities denoted by superscripts. For example, \( 0^{1} \) means that the eigenvalue \( 0 \) has a multiplicity of 1.
    
    \item \textbf{Eigenvectors:} The corresponding eigenvectors are presented below the eigenvalues. The eigenvectors are denoted by vectors, such as \( \mathbf{1}_{n} \), which represents an \( n \)-dimensional vector of ones. The notation \( \bm{y}_{n} \) represents a generic eigenvector orthogonal to \( \mathbf{1}_{n} \), i.e., \( \bm{y}_{n} \perp \mathbf{1}_{n} \).
\end{itemize}

Given these notations, the Laplacian eigenvalues and eigenvectors for some specific graph examples are provided, for instance, in \cite{bh:2012book}:

\begin{enumerate}
\item \textbf{Complete graph} $K_{n}$
\begin{center}
\begin{tabular}{cccc}
$0^{1}$ & $ n^{n-1}$
\\
$\begin{bmatrix}
\mathbf{1}_{n}
\end{bmatrix}$ & $\begin{bmatrix}
\bm{y}_{n}
\end{bmatrix}$
\end{tabular} 
\end{center}
where $\bm{y}_{n} \perp \mathbf{1}_{n}$.
\item \textbf{Star graph} $S_{n}$
\begin{center}
\begin{tabular}{cccc}
$0^{1}$ & $1^{n-1}$ & $ n^{1}$
\\
$\begin{bmatrix}
1 \\ \mathbf{1}_{n-1}
\end{bmatrix}$
&
$\begin{bmatrix}
0 \\ \bm{y}_{n-1}
\end{bmatrix}$
&
$\begin{bmatrix}
(n-1) \\ - \mathbf{1}_{n-1}
\end{bmatrix}$
\end{tabular} 
\end{center}
where $\bm{y}_{n-1} \perp \mathbf{1}_{n-1}$.
\item \textbf{Path graph} $P_{n}$
\begin{center}
\begin{tabular}{cccccc}
$0^{1}$ 
& $2\left( 1- \cos \left(\frac{(j-1)\pi}{n} \right) \right)$
\\
$\begin{bmatrix}
1 \\ 1 \\ \vdots \\ 1
\end{bmatrix}$
& 
$\begin{bmatrix}
\cos \left( \frac{(j-1)\pi}{n}\left(1-\frac{1}{2} \right) \right) 
\\ \cos \left( \frac{(j-1)\pi}{n}\left(2-\frac{1}{2} \right) \right) 
\\ \vdots \\
\cos \left( \frac{(j-1)\pi}{n}\left(n-\frac{1}{2} \right) \right) 
\end{bmatrix}$
\end{tabular}
\end{center}
for $j=2, \dots, n$.

\item \textbf{Complete bipartite graph} $K_{m,\bar{m}}$ ($m \leq \bar{m}$)
\begin{center}
\begin{tabular}{cccc}
$0^{1}$ & $m^{\bar{m}-1}$ & $\bar{m}^{m-1}$ & $(m+\bar{m})^{1}$
\\
$\begin{bmatrix}
\mathbf{1}_{m} \\ \mathbf{1}_{\bar{m}}
\end{bmatrix}$
&
$\begin{bmatrix}
\mathbf{0}_{m} \\ \bm{y}_{\bar{m}}
\end{bmatrix}$
&
$\begin{bmatrix}
\bm{y}_{m}\\ \mathbf{0}_{\bar{m}}
\end{bmatrix}$
&
$\begin{bmatrix}
\bar{m}\mathbf{1}_{m} \\ -m\mathbf{1}_{\bar{m}}
\end{bmatrix}$
\end{tabular} 
\end{center}
where $\bm{y}_{m} \perp \mathbf{1}_{m}$ and $\bm{y}_{\bar{m}} \perp \mathbf{1}_{\bar{m}}$.
\end{enumerate}

\subsection{Additional examples}
\subsubsection{Path graphs}
A path graph $P_{n}$ is a connected graph whose maximum degree is 2.
For simplicity, suppose that vertex $i \in \{2, \dots, n-1\}$ is adjacent to vertices $i-1$ and $i+1$.%
\footnote{Thus, vertices 1 and $n$ have a degree of 1.}
The Laplacian spectrum of the path graph is such that for $j=1, \dots, n$,
\begin{equation}
\begin{aligned}\nonumber
\lambda_{j}=2\left( 1-\cos \left( \frac{(j-1)\pi}{n} \right) \right).
\end{aligned}
\end{equation}
Thus, the eigenvalues are distinct.

\begin{proposition}\label{prop_path}
Suppose that $\widetilde{G}=K_n$ and $G=P_{n}$.
The number of statistics to be disclosed varies from one (the average state) to $n$ (full revelation) as $\beta$ decreases.
\end{proposition}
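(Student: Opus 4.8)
The plan is to combine the explicit Laplacian spectrum of $P_n$ with the threshold characterization of Proposition~\ref{prop_balanced}. Since $\widetilde{G}=K_n$, the optimal signal always discloses $m_1^*=\bar{x}$ (Proposition~\ref{prop_avg}) and, for $j=2,\dots,n$, discloses $m_j^*=\bm{q}_j'\bm{x}$ if and only if $\beta\le\widehat{\beta}(\lambda_j)=1/(2(n-\lambda_j))$. Hence the dimension of the optimal signal equals $1+\#\{\,j\in\{2,\dots,n\}:\beta\le\widehat{\beta}(\lambda_j)\,\}$, and the whole statement reduces to showing that, for $G=P_n$, the thresholds $\widehat{\beta}(\lambda_2),\dots,\widehat{\beta}(\lambda_n)$ are finite and strictly increasing, so that each time $\beta$ crosses one of them the dimension changes by exactly one.

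First I would order the eigenvalues: from $\lambda_j=2(1-\cos((j-1)\pi/n))$ with $(j-1)\pi/n\in[0,(n-1)\pi/n]\subset[0,\pi)$ and the strict monotonicity of $\cos$ on $[0,\pi]$, the map $j\mapsto\lambda_j$ is strictly increasing, so $0=\lambda_1<\lambda_2<\dots<\lambda_n$ and all eigenvalues are distinct (the displayed formula already lists them in ascending order). Next, $\lambda_n=2(1-\cos(\pi-\pi/n))=2(1+\cos(\pi/n))<4\le n$ for $n\ge4$ --- equivalently, the complement of $P_n$ is connected --- so every $n-\lambda_j$ is strictly positive and each $\widehat{\beta}(\lambda_j)$ is finite. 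Because $\widehat{\beta}(\lambda)=1/(2(n-\lambda))$ is strictly increasing in $\lambda$ on $[0,n)$, this yields $\widehat{\beta}(\lambda_2)<\widehat{\beta}(\lambda_3)<\dots<\widehat{\beta}(\lambda_n)<\infty$. Reading off the dimension on the resulting intervals: it equals $1$ for $\beta>\widehat{\beta}(\lambda_n)$; it equals $n-k+1$ for $\beta\in(\widehat{\beta}(\lambda_k),\widehat{\beta}(\lambda_{k+1})]$ with $2\le k\le n-1$ (the disclosed indices beyond $1$ being exactly $\{k+1,\dots,n\}$); and it equals $n$, i.e.\ full revelation, for $\beta\le\widehat{\beta}(\lambda_2)$, consistently with Corollary~\ref{corollary_full_beta}. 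Letting $k$ run from $n$ down to $1$ shows the dimension attains every value in $\{1,2,\dots,n\}$ as $\beta$ decreases, which is the claim.

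The calculations are routine; the only points needing care are confirming that the spectral formula produces $n$ distinct eigenvalues listed in ascending order, and checking $\lambda_n<n$ so that the minimum-transparency threshold $\widehat{\beta}(\lambda_n)$ is finite --- this is where $n\ge4$ (connectedness of the complement of $P_n$) is used; for $n\le3$ the path graph is complete or has $\lambda_n=n$, and the statement degenerates accordingly. Beyond these, everything follows by bookkeeping from Propositions~\ref{prop_avg} and~\ref{prop_balanced}.
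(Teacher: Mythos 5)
Your proof is correct and follows essentially the same route as the paper's: it plugs the explicit Laplacian spectrum of $P_n$ into the threshold characterization of Proposition~\ref{prop_balanced} and uses the distinctness of the eigenvalues so that the signal dimension steps through every value from $1$ to $n$. You are in fact more explicit than the paper's one-line proof, notably in verifying $\lambda_n<n$ (so that the minimum-transparency threshold is finite) and in flagging that this requires $n\ge 4$, a caveat the paper leaves implicit.
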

\begin{proof}%[Proof of Proposition \ref{prop_path}]
In spectral graph theory, the Laplacian spectrum of the path graph and the associated eigenvectors are as presented in Appendix \ref{sec:special}.
\end{proof}

For $n=4$, the Laplacian spectrum of the path graph is $\langle 0, 2-\sqrt{2}, 2, 2+\sqrt{2} \rangle$ with the associated eigenvectors $\langle (1,1,1,1), (a,c,-c,-a), (b,-b,-b,b), (c,-a,a,-c) \rangle$, where $a=\cos (\pi/8)$, $b=\cos (2\pi/8)$, and $c=\cos (3\pi/8)$.

\subsubsection{Complete bipartite graphs}

A complete bipartite graph $K_{m,\bar{m}}$ is a graph where the vertex set can be divided into two disjoint subsets, one with $m$ vertices and the other with $\bar{m}$ vertices, such that every vertex in the first subset is connected to every vertex in the second subset, and there are no edges within each subset. Without loss of generality, suppose that $m \leq \bar{m}$.

When the structure of the equilibrium coordination links corresponds to the complete bipartite graph, the optimal signal will depend on the relative size of the coordination benefit $\beta$ compared to the eigenvalues of the graph's Laplacian matrix.
\begin{proposition}\label{prop_bipartite}
Suppose that $\widetilde{G}=K_n$ and $G=K_{m,\bar{m}}$, such that the coordination structure is complete bipartite. If $\beta > \frac{1}{2m}$, the optimal signal consists of two statistics, each representing the average state of one subset. If $\beta \in \left(\frac{1}{2\bar{m}},\frac{1}{2m}\right]$, then the optimal signal informs the agents of all local states for the agents in the smaller subset, so that the signal has $m+1$ dimensions. If $\beta < \frac{1}{2m}$, full revelation is optimal.
\end{proposition}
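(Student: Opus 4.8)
\noindent The plan is to apply Proposition~\ref{prop_balanced} directly, feeding in the known Laplacian spectrum of the complete bipartite graph. Recall from Appendix~\ref{sec:special} (see also \citealp{bh:2012book}) that the Laplacian of $K_{m,\bar{m}}$ has eigenvalue $0$ with eigenvector $\mathbf{1}_n$; eigenvalue $m$ with multiplicity $\bar{m}-1$, with eigenvectors of the form $(\mathbf{0}_m',\bm{y}_{\bar{m}}')'$ where $\bm{y}_{\bar{m}}\perp\mathbf{1}_{\bar{m}}$; eigenvalue $\bar{m}$ with multiplicity $m-1$, with eigenvectors of the form $(\bm{y}_m',\mathbf{0}_{\bar{m}}')'$ where $\bm{y}_m\perp\mathbf{1}_m$; and eigenvalue $n=m+\bar{m}$ with multiplicity $1$, with eigenvector $(\bar{m}\mathbf{1}_m',-m\mathbf{1}_{\bar{m}}')'$. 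By Proposition~\ref{prop_balanced}, the statistic $m_j^*=\bm{q}_j'\bm{x}$ enters the optimal signal exactly when $\beta\le\widehat{\beta}(\lambda_j)=1/(2(n-\lambda_j))$, so the two relevant cutoffs are $\widehat{\beta}(m)=1/(2\bar{m})$ and $\widehat{\beta}(\bar{m})=1/(2m)$, with $1/(2\bar{m})\le 1/(2m)$ since $m\le\bar{m}$; the eigenvectors for $\lambda=0$ and $\lambda_n=n$ are disclosed for every $\beta$.

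\noindent I would then treat three regimes. (i) If $\beta>1/(2m)$, neither the $\lambda=m$ nor the $\lambda=\bar{m}$ eigenvectors are disclosed, so the signal carries only $\mathbf{1}_n'\bm{x}$ and $(\bar{m}\mathbf{1}_m'-m\mathbf{1}_{\bar{m}}')\bm{x}$; since the $2\times2$ coefficient matrix relating $\{\mathbf{1}_n,(\bar{m}\mathbf{1}_m',-m\mathbf{1}_{\bar{m}}')'\}$ to the cluster indicators $\{(\mathbf{1}_m',\mathbf{0}_{\bar{m}}')',(\mathbf{0}_m',\mathbf{1}_{\bar{m}}')'\}$ has determinant $-n\neq0$, this is equivalent to disclosing the two within-subset averages. (ii) If $\beta\in(1/(2\bar{m}),1/(2m)]$, the $m-1$ eigenvectors for $\lambda=\bar{m}$ are added; they span $\{(\bm{v}',\mathbf{0}_{\bar{m}}')':\bm{v}\perp\mathbf{1}_m\}$, so together with the two always-disclosed directions they span the full coordinate subspace of the smaller subset plus the direction $(\mathbf{0}_m',\mathbf{1}_{\bar{m}}')'$, i.e.\ an $(m+1)$-dimensional space; informing the agents of this space is equivalent to revealing $x_1,\dots,x_m$ together with the average state of the larger subset. (iii) If $\beta\le 1/(2\bar{m})$, all eigenvectors are disclosed and the signal is fully revealing; here the governing cutoff is $\widehat{\beta}(\lambda_2)=1/(2\bar{m})$ with $\lambda_2=m$, consistent with Corollary~\ref{corollary_full_beta}.

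\noindent The only substantive step is the translation from ``disclose the combinations $\bm{q}_j'\bm{x}$ for $j$ in a given index set'' to the statements about natural statistics in the proposition: this rests on the fact (built into Theorem~\ref{theorem} and exploited in Propositions~\ref{prop_balanced} and \ref{prop_informativeness}) that the informational content of the optimal signal is the linear span of the disclosed eigenvectors, so any basis of that span conveys the same information, combined with the elementary linear algebra above identifying the relevant spans with cluster-indicator and coordinate subspaces. I do not anticipate a genuine obstacle; the only care needed is bookkeeping at the interval endpoints---the condition in \eqref{eq:inform} is non-strict, so $\beta=1/(2\bar{m})$ already yields full revelation and $\beta=1/(2m)$ already yields the $(m+1)$-dimensional signal---and, if desired, noting that for $m=\bar{m}$ the intermediate regime is empty since $\widehat{\beta}(m)=\widehat{\beta}(\bar{m})$.
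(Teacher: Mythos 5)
Your proposal is correct and follows essentially the same route as the paper's own (much terser) proof: apply Proposition~\ref{prop_balanced} to the known Laplacian spectrum of $K_{m,\bar{m}}$, read off the cutoffs $\widehat{\beta}(m)=1/(2\bar{m})$ and $\widehat{\beta}(\bar{m})=1/(2m)$, and identify the spans of the disclosed eigenvectors with the natural cluster statistics. Your explicit span/change-of-basis bookkeeping and endpoint discussion only make precise what the paper leaves implicit (and you correctly use $\beta\le 1/(2\bar{m})$ as the full-revelation threshold, where the proposition's statement of that case appears to contain a typo).
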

\begin{proof}%[Proof of Proposition \ref{prop_bipartite}]
Let $m$ and $\bar{m}$ be the size of each subset of the complete bipartite graph. Applying Proposition \ref{prop_balanced}, the criteria for $\beta$ is given by $(2\bar{m})^{-1}$ and $(2m)^{-1}$, which correspond to eigenvalues $\lambda=m$ and $\lambda=\bar{m}$, respectively. The eigenvectors are presented in Appendix \ref{sec:special}.
\end{proof}

For the complete bipartite graph, as $\beta$ increases, the optimal signal transitions from full disclosure to partial disclosure, where the local states of vertices in one subset are disclosed while the local states in the other subset are averaged. This pattern is driven by the need to facilitate coordination between the two distinct subsets of the graph, ensuring that the coordination incentives are aligned with the organizational objectives.

For example, for $m=2$ and $\bar{m}=3$, the Laplacian spectrum for this complete bipartite graph is $\langle 0, 2^2, 3, 5 \rangle$ with the associated eigenvectors $\langle \mathbf{1}_{5}, (\textbf{0}_2',\bm{y}_{3}')', (\bm{y}_{2}',\textbf{0}_3')', (3\mathbf{1}_{2}', -2\mathbf{1}_{3}')' \rangle$, where $\bm{y}_{2} \perp \mathbf{1}_{2}$ and $\bm{y}_{3} \perp \mathbf{1}_{3}$.

\end{document}